\newcommand{\ts}{\textsuperscript}
\newtheorem{theorem}{Theorem}
\newtheorem{lemma}[theorem]{Lemma}
\newtheorem{claim}[theorem]{Claim}
\newcommand{\capU}{\mathcal{U}}
\newcommand{\capUA}{\mathcal{U}_A}
\newcommand{\capUB}{\mathcal{U}_B}
\newcommand{\curlyP}[1]{{\mathcal{U}_{TS}^{( #1 )}}}
\newcommand{\trotterU}[1]{{U_{TS}^{( #1 )}}}
\newcommand{\qdchan}{\mathcal{U}_{QD}}
\newcommand{\hilbSpace}{\mathscr{H}}
\newcommand{\bigo}[1]{O\left( #1 \right)}
\newcommand{\bigotilde}[1]{\widetilde{O} \left( #1 \right)}
\newcommand{\probIndexSet}{\mathcal{S}}
\newcommand{\prob}[1]{\text{Pr}\left[ #1 \right]}
\newtheorem{definition}[theorem]{Definition}
\newcommand{\ket}[1]{|#1\rangle}
\newcommand{\bra}[1]{\langle #1|}
\newcommand{\ketbra}[2]{| #1\rangle\! \langle #2|}
\newcommand{\parens}[1]{\left( #1 \right)}
\newcommand{\brackets}[1]{\left[ #1 \right]}
\newcommand{\abs}[1]{\left| #1 \right|}
\newcommand{\norm}[1]{\left| \left| #1 \right| \right|}
\newcommand{\diamondnorm}[1]{\left| \left| #1 \right| \right|_\diamond}
\newcommand{\set}[1]{\left\{ #1 \right\}}
\newcommand{\ceil}[1]{\left\lceil #1 \right\rceil}
\newcommand{\expect}[1]{\mathbb{E}\brackets{#1}}
\begin{document}
\title{Composite Quantum Simulations}
\author{Matthew Hagan}
\affiliation{Department of Physics, University of Toronto, Toronto ON, Canada}
\author{Nathan Wiebe}
\affiliation{Department of Computer Science, University of Toronto, Toronto ON, Canada}
\affiliation{Pacific Northwest National Laboratory, Richland Wa, USA}
\affiliation{Canadian Institute for Advanced Study, Toronto ON, Canada}

\begin{abstract}
    In this paper we provide a framework for combining multiple quantum simulation methods, such as Trotter-Suzuki formulas and QDrift into a single Composite channel that builds upon older coalescing ideas for reducing gate counts. The central idea behind our approach is to use a partitioning scheme that allocates a Hamiltonian term to the Trotter or QDrift part of a channel within the simulation.  This allows us to simulate small but numerous terms using QDrift while simulating the larger terms using a high-order Trotter-Suzuki formula.  We prove rigorous bounds on the diamond distance between the Composite channel and the ideal simulation channel and show under what conditions the cost of implementing the Composite channel is asymptotically upper bounded by the methods that comprise it for both probabilistic partitioning of terms and deterministic partitioning.  Finally, we discuss strategies for determining partitioning schemes as well as methods for incorporating different simulation methods within the same framework.
\end{abstract}
\maketitle

\section{Introduction}\label{sec:intro}
The simulation of quantum systems remains one of the most compelling applications for future digital quantum computers~\cite{whitfield2011simulation,jordan2012quantum,reiher2017elucidating,babbush2019quantum,su2021fault,o2021efficient}. As such, there are a plethora of algorithm options for compiling a unitary evolution operator $U(t) = e^{-i H t}$ to circuit gates~\cite{aharonov2003adiabatic,berry2007efficient,berry2015simulating,childs2019faster,low2019hamiltonian,low2019well,low2018hamiltonian, qdrift}. Some of the simplest such algorithms are product formulas in which each term in a Hamiltonian $H = \sum_i h_i H_i$ is implemented as $e^{iH_it}$. A product formula is then a particular sequence
of these gates that approximates the overall operator $U(t)$. Two of the most well known product formula include Trotter-Suzuki Formulas~\cite{berry2007efficient,wiebe2010higher,childs2019faster,childs2021theory} and the QDrift protocol in which terms are sampled randomly~\cite{qdrift,berry2020time}. These two approaches are perhaps the most popular ancilla-free simulation methods yet discovered.  

One of the main drawbacks of Trotter-Suzuki formulas is that each term in the Hamiltonian has to be included in the product formula regardless of the magnitude of the term.  This leads
to a circuit with a depth that scales at least linearly with the number of terms in $H$, typically denoted $L$. QDrift avoids this by randomly choosing which
term to implement next in the product formula according to an importance sampling scheme in which higher weight terms have larger probabilities. The
downside to QDrift is that it has the same asymptotic scaling with $t/\epsilon$ as a first-order Trotter formula, meaning it is outperformed at large
$t/\epsilon$ by even a second-order Trotter formula. 

%In this paper we present a framework for combining simulation channels that allows one to 
%flexibly interpolate the tradeoffs between the individual channels. Prior work has considered variations on the ideas presented here.  Most notably, the work of~\cite{coalescing_con_wiebe} shows a scheme known as coalescing that heuristically moves the terms to different Trotter steps in a simulation to reduce the cost.  Further, the randomized phase estimation scheme of~\cite{kivlichan2019phase} shows how terms can be randomly added or deleted in a Hamiltonian according to their significance to second-order in perturbation theory to estimate eigenvalues of a Hamiltonian.  Additional recent work~\cite{rajput2021hybridized} provided a scheme that combined QDrift and the Qubitization~\cite{low2018hamiltonian} to provide a low cost way of implementing the interaction picture transformation.  While all of these are Hybridized methods, they differ from the approach taken here in that here we provide well principled rules for dividing the terms between QDrift and other simulation methods while providing rigorous error bounds.

In this paper we present a framework for combining simulation channels in a way that allows one to flexibly interpolate the gate cost tradeoffs between the individual channels. The primary example we study is the composition of Trotter-Suzuki and QDrift channels. This is motivated in some part as an effort to extend 
randomized compilers to include conditional probabilities and in some part to encapsulate progress in chemistry simulations of dropping small
weight terms or shuffling terms around different time steps \cite{bucket_sim}. This latter concept was first developed with the idea of ``coalescing" terms into ``buckets" by Wecker et al. \cite{bucket_sim} and further explored by Poulin et al. \cite{coalescing_con_wiebe}. They showed that grouping terms of similar sizes together to be skipped during certain Trotter steps led to negligible increases in error and reduced gate counts by about a factor of 10.  Similar improvements are also seen in the randomized setting of~\cite{kivlichan2019phase}. In this work we extend on these ideas by placing a specific set of terms into a Trotter partition and the rest in a QDrift partition. This simple division can then be studied analytically and we are able to provide sufficient conditions on asymptotic improvements over completely Trotter or completely QDrift channels. Although we are not able to develop the idea of conditional samples in QDrift protocols, our 
procedure can be viewed as a specific subset of what a generic Markovian QDrift would look like. We briefly mention these generalizations in 
Section \ref*{sec:discussion}. 
 
Recent approaches have sought to use the advantages of randomized compilation as a subset of an overall simulation, such as the hybridized
scheme for interaction picture simulations \cite{hybridized_interaction_pic}. What separates these two works is that our approach offers a
more flexible approach for generic time-independent simulation problems whereas the hybridized schemes are specifically tailored to taking
advantage of the time dependence introduced by moving to an interaction picture. As such, the hybridized approach achieves asymptotic advantages
when the size of the interaction picture term dominates the overall Hamiltonian. This typically occurs in instances in which the size of an operator
is unbounded, which can occur in lattice field theory simulations or constrained systems. The way the hybridized scheme in 
\cite{hybridized_interaction_pic} works is via a ``vertical" stacking of simulation
channels, for example one channel to handle the Interaction Picture rotations and then other channels on top of this to simulate the time-dependence it generates on the remaining Hamiltonian terms. Our work instead remains in the Schrodinger time evolution picture and we 
perform a ``horizontal" stacking of simulation techniques. By horizontal we mean for a given simulation time we split the Hamiltonian up into 
(potentially) disjoint partitions and simulate each partition for the full simulation time but with different techniques, such as Trotter or QDrift.
These techniques allow us to achieve asymptotic improvements over either method for a loose set of assumptions.

There are two other simulation techniques that have been proposed recently that have a similar interpolation behavior between QDrift and Trotter channels. The first of these methods is the SparSto, or Stochastic Sparsification, technique by Ouyang, White, and Campbell \cite{sparsto}. The procedure \cite{sparsto} randomly sparsifies the Hamiltonian and performs a randomly ordered first-order Trotter formula on the sampled Hamiltonian. They construct these probabilities such that the expected Hamiltonian is equal to the Hamiltonian being simulated. They then fix the expected number of oracle queries of the form $e^{i H_i t'}$ and give diamond distance bounds on the resulting channel error. The claim for interpolation between Trotter and QDrift is that one can fix the expected number of gates to be 1 for each time step, in which case the sparsification mimics QDrift, whereas if no sparsification is performed then the channel is simply implementing Trotter. They show that this allows for one to have reduced simulation error up to an order of magnitude on numerically studied systems as compared to Trotter or QDrift. One downside to these techniques is that the number of gates applied is a random variable, so making gate cost comparisons is rather difficult especially considering that no tail bounds on high gate cost sampled channels are provided. In \cite{sparsto} they prefer to fix the expected gate cost and analyze the resulting diamond norm error. In contrast, our procedures directly implement both QDrift and Trotter channels and have a fixed, deterministic gate cost.

The second method of note with both QDrift and Trotter behavior is that of Jin and Li \cite{jin2021partially}. They develop an analysis of the variance of a unitary consisting of a first-order Trotter sequence followed by a QDrift channel. They focus on bounding the Mean Squared Error (MSE) of the resulting channel and use a simple partition of the Hamiltonian terms based on spectral norm. Their partitioning scheme places all terms below some cutoff into the first-order Trotter sequence and all terms above the cutoff into the QDrift channel. Their main results show an interpolation of the MSE between 0 when the partitioning matches a solely Trotter channel and matching upper bounds for QDrift when all terms are randomly sampled. This work goes beyond the results from Jin and Li by providing an analysis of the diamond distance between an ideal evolution and our implemented channel, which is more useful analytically than the MSE, as well as providing upper bounds on the number of gates needed in an implementation to meet this diamond distance. In addition our work remains independent of specific partitioning schemes as much as possible and instead places restrictions on which partitions achieve improvements. In the interest of practicality we do show methods for partitioning that can be useful in both the first-order and higher-order Trotter cases. Specifically for higher-order Trotter formulas we give a probabilistic partitioning scheme that is easily computable and matches gate cost upper bounds in the extreme limits as our probabilities saturate the QDrift and Trotter limits. 

The rest of the paper is organized as follows. We first provide a brief summary of the main results in Section \ref{sec:main}. After reviewing known results and notation in Section \ref{sec:prelim}, we explore methods for creating Composite channels using First-Order Trotter Formulas with QDrift in Section \ref{sec:first_order_trotter} as a warmup. This is broken down
into three parts in which we find the gate cost for an arbitrary partition, we then give a method for producing a good partitioning, and then we analyze conditions in which a Composite channel can beat either first-order Trotter or QDrift channels. In Section \ref{sec:higher_order_trotter} we then extend this framework to more general higher-order Trotter Formulas. This section mirrors the organization of the first-order Trotter section,
namely we find the cost of an arbitrary partition, we give a method for producing a partition efficiently, and then we analyze when one could see
improvements over the constituent channels. Finally, in Section \ref{sec:discussion} we discuss extensions to this model that allow a flexible interpolation between various types of product formulas that could be leveraged numerically. 

\section{Main Results} \label{sec:main}
% TODO: Hyperlink to statements in the main body
In this section we summarize the gate cost performance of a higher-order Composite channel, the probabilistic partitioning scheme developed, and the conditions needed for a partitioning scheme to satisfy in order to expect asymptotic improvements over QDrift or Trotter. These results are motivated and proved throughout Section \ref{sec:higher_order_trotter}. We do not state the first-order Composite channel results here as they are more specific and do not achieve as strong asymptotic improvements as the higher-order channels.

Our first theorem, presented in Section \ref{sec:higher_order_complexity}, gives an upper bound on the number of queries to oracles of the form $e^{i H_i t'}$ to implement a Composite channel with desired error. We provide a conceptually simple packaging of this bound in terms of the number of oracle queries that would be needed to perform this same simulation if either Trotter or QDrift had been used alone.
\begin{restatable}[Gate Cost for Higher-Order Composite Channel]{thm}{assCost} \label{thm:higher_order_cost_fixed}
Given a time $t$, error bound $\epsilon$, partitioned Hamiltonian $H = A + B$, and let $\widetilde{\capU}^{(2k)}$ denote the higher-order Composite channel to approximate the exact unitary evolution $\capU(t)$. By using $r$ iterations of $\widetilde{\capU}^{(2k)}(t/r)$ we can satisfy the error requirement $\diamondnorm{\capU(t) - \widetilde{\capU}^{(2k)}(t/r)^{\circ r}} \leq \epsilon$ by using at most the following number of operator exponentials
\begin{align}
    &C_{comp}(A,B, t, \epsilon, 2k) \nonumber \\
    &\leq \Upsilon (\Upsilon L_A + N_B) \ceil{\frac{(\Upsilon t)^{1 + 1/2k} 4^{1/2k}}{\epsilon^{1/2k}} \parens{\frac{\Upsilon \alpha_{comm}(A, 2k) + \alpha_{comm}(\set{A,B}, 2k)}{2k+1}}^{1/2k} + \frac{4 \Upsilon \lambda_B^2 t^2}{N_B \epsilon}}.
\end{align}
By making the definition $q_B \coloneqq\frac{\alpha_{comm}(B,2k)}{\alpha_{comm}(H, 2k)}$ and utilizing the upper bounds from Theorems \ref{thm:trotter_cost} and \ref{thm:QDrift}, where $C_{Trott}$ and $C_{QD}$ below are  upper bounds on the number of operator exponentials required in the Trotter and QDrift channels, we can write the cost upper bound as
\begin{align}
    C_{comp} \leq \Upsilon (\Upsilon L_A + N_B) \ceil{C_{Trott}(H, t, \epsilon, 2k) \frac{(1-q_B)^{1/2k}}{\Upsilon^{1- 1/2k}L} + C_{QD}(H, t, \epsilon) \frac{\Upsilon}{N_B} \frac{\lambda_B^2}{\lambda^2}}.
\end{align}
\end{restatable}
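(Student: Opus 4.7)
The plan is to bound the error of a single Composite step by decomposing it into three independent sources, pick $r$ so the accumulated error meets $\epsilon$, and multiply by the per-step gate count. Starting from sub-additivity of the diamond norm, $\diamondnorm{\capU(t) - \widetilde{\capU}^{(2k)}(t/r)^{\circ r}} \leq r \diamondnorm{\capU(t/r) - \widetilde{\capU}^{(2k)}(t/r)}$, the single-step error splits into: (i) the outer Trotter-Suzuki error from approximating $e^{-i(A+B)t/r}$ by a $2k$-th order Suzuki product of $e^{-iAs}$ and $e^{-iBs}$ factors, bounded by the standard commutator bound involving $(t/r)^{2k+1}\alpha_{comm}(\set{A,B}, 2k)/(2k+1)$; (ii) the inner Trotter-Suzuki error from approximating each of the $\Upsilon$ outer $e^{-iAs}$ factors by its own $2k$-th order Suzuki formula on the $L_A$ summands of $A$, contributing a $\Upsilon$-scaled version of the same bound type with $\alpha_{comm}(A, 2k)$; and (iii) the QDrift error on each of the $\Upsilon$ outer $e^{-iBs}$ factors, scaling as $\Upsilon \lambda_B^2 (t/r)^2/N_B$ by Theorem \ref{thm:QDrift}.

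Next I would combine the two Trotter contributions into the single grouping $\Upsilon \alpha_{comm}(A,2k) + \alpha_{comm}(\set{A,B}, 2k)$ shown in the statement, multiply through by $r$, and allocate the error budget between the Trotter and QDrift halves (for example $\epsilon/2$ to each). Inverting the $r^{-2k}$ dependence of the Trotter contribution yields a lower bound on $r$ of the form $(\Upsilon t)^{1+1/2k}(4/\epsilon)^{1/2k}$ times the bracketed commutator factor to the $1/2k$, and inverting the $r^{-1}$ QDrift dependence yields $4\Upsilon\lambda_B^2 t^2/(N_B \epsilon)$. Packaging both requirements into a single ceiling via $\max(a,b)\leq a+b$ gives the value of $r$ inside the $\lceil \cdot \rceil$. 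Multiplying by the per-step exponential count $\Upsilon(\Upsilon L_A + N_B)$ — one inner Trotter-Suzuki call contributing $\Upsilon L_A$ exponentials on $A$ plus one QDrift channel contributing $N_B$ samples on $B$, wrapped inside each of the $\Upsilon$ outer Suzuki stages — produces the first form of the bound.

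For the second reformulation in terms of $C_{Trott}$ and $C_{QD}$, I would substitute $q_B \coloneqq \alpha_{comm}(B,2k)/\alpha_{comm}(H,2k)$ and match factor-by-factor against the explicit pure-Trotter cost from Theorem \ref{thm:trotter_cost} and pure-QDrift cost from Theorem \ref{thm:QDrift}. The key identity is sub-additivity of nested commutator norms under the split $H = A + B$, giving $\alpha_{comm}(A,2k) \leq \alpha_{comm}(H,2k) - \alpha_{comm}(B,2k) = (1-q_B)\alpha_{comm}(H,2k)$, which pulls the $(1-q_B)^{1/2k}$ prefactor out of the Trotter piece; the QDrift piece analogously picks up the factor $\lambda_B^2/\lambda^2$ relative to its pure-QDrift cost. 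The main obstacle I expect is the bookkeeping of the several $\Upsilon$ prefactors — one from the per-step gate count, one from summing inner Trotter errors across outer stages, one from summing QDrift errors across outer stages, and extra powers from inverting the $2k$-th root — and verifying that they combine cleanly into the stated $(\Upsilon t)^{1+1/2k}$ scaling and the $\Upsilon^{1-1/2k}L$ denominator in the reformulated bound. A secondary technical subtlety will be handling the $\alpha_{comm}(\set{A,B},2k)$ contribution to the Trotter factor when reducing to $(1-q_B)^{1/2k}$, which relies on this outer commutator being dominated by the aggregate $\alpha_{comm}(H,2k)$ in the asymptotic regime of interest.
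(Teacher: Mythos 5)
Your decomposition and bookkeeping match the paper's proof almost exactly: the reduction $\diamondnorm{\capU(t) - \widetilde{\capU}^{(2k)}(t/r)^{\circ r}} \leq r \diamondnorm{\capU(t/r) - \widetilde{\capU}^{(2k)}(t/r)}$, the three-way split into outer-loop error, inner Trotter error on $A$ (scaled by $\Upsilon$), and QDrift error on $B$ (scaled by $\Upsilon$) is precisely Lemma \ref{lem:diamond_dist_higher_order}, and the per-step gate count $\Upsilon(\Upsilon L_A + N_B)$ and the $C_{Trott}/C_{QD}$ repackaging via $\Upsilon\alpha_{comm}(A) + \alpha_{comm}(\set{A,B}) \leq \Upsilon(\alpha_{comm}(H) - \alpha_{comm}(B)) = \Upsilon(1-q_B)\alpha_{comm}(H)$ are all as in the paper. (Minor note: the last step is an exact consequence of the paper's definition $\alpha_{comm}(\set{A,B}) = \alpha_{comm}(H) - \alpha_{comm}(A) - \alpha_{comm}(B)$ plus $\Upsilon \geq 1$; no asymptotic domination argument is needed.)

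The one place you genuinely diverge is the extraction of $r$ from the mixed-degree inequality $P(t)/r^{2k+1} + Q(t)/r^2 \leq \epsilon/r$, and your version does not reproduce the stated constants. Splitting the budget as $\epsilon/2$ to each piece and inverting gives $r \geq (2P/\epsilon)^{1/2k}$ and $r \geq 2Q/\epsilon$, so your packaged $r$ carries extra factors of $2^{1/2k}$ and $2$; the theorem's $4^{1/2k}$ and $4\Upsilon\lambda_B^2 t^2/(N_B\epsilon)$ are exactly $(P/\epsilon)^{1/2k}$ and $Q/\epsilon$ with \emph{no} budget-splitting loss. The paper avoids the loss with a self-bounding trick: set $r_{min} \coloneqq (P/\epsilon)^{1/2k}$ (which any admissible $r$ must exceed, since $Q \geq 0$), then bound $P/r^{2k+1} \leq P/(r^2 r_{min}^{2k-1})$, so that $r \geq \frac{1}{\epsilon}\parens{P/r_{min}^{2k-1} + Q} = (P/\epsilon)^{1/2k} + Q/\epsilon$ suffices. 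Your argument proves the theorem up to these constant factors, but as literally stated the bound requires the sharper treatment of the polynomial in $r$.
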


The next result, presented in Section \ref{sec:probabilistic_partitioning}, gives an easily computable probabilistic partitioning scheme for an arbitrary Hamiltonian. It is based around a probability for each term to end up in either the QDrift partition or the Trotter partition that can be viewed as an importance sampling routine on the inverse spectral norms $\frac{1}{h_i}$. This distribution was motivated by upper bounding the expected QDrift error, which is expressed in terms of $\lambda_B$, in terms of the Trotter error as intuitively Trotter formulas of higher-orders are more accurate than QDrift channels at smaller times. One feature of note from this lemma is that it introduces a \emph{lower} bound on the number of QDrift samples which is required for the probabilities to remain nonnegative.
\begin{restatable}[Probabilistic Partitioning Scheme]{lemmer}{assProb} \label{lem:prob_lemma}
For a composite simulation of $H$ for time $t$ and error $\epsilon$, let $p_i$ denote the probability of placing term $h_i H_i$ into the Trotter partition of Composite channel. we have that choosing 
\begin{equation}
    1-p_i = \min\set{\frac{\lambda}{h_i L} \parens{ \sqrt{ N_B \parens{\frac{\epsilon}{\lambda t}}^{1-1/2k} \parens{\frac{2k+\Upsilon}{2k+1}}^{1/2k} \frac{\Upsilon^{1/2k}}{2^{1-1/k}} } - 1},1} \eqqcolon \min \set{\frac{\chi}{h_i},1}, \label{eq:prob_def}
\end{equation}
along with choosing the number of QDrift samples $N_B$ to satisfy
\begin{equation}
    N_B \geq \parens{\frac{\lambda t}{\epsilon}}^{1 - 1/2k} \parens{\frac{2k + 1}{2k + \Upsilon}}^{1/2k} \frac{2^{1-1/k}}{\Upsilon^{1/2k}}
\end{equation}
guarantees the following:
\begin{enumerate}
    \item $p_i \in [0,1]$,
    \item the expectation value of the coefficients in the QDrift partition satisfies 
    $$\frac{ \expect{\lambda_B} }{\lambda} \leq \frac{1}{2} \sqrt{\parens{\frac{4k + 2 \Upsilon}{2k + 1}}^{1/2k} (2 \Upsilon)^{1+1/2k}} \sqrt{N_B \parens{\frac{\epsilon}{\lambda t}}^{1-1/2k}} .$$ This bound follows from a rigorous interpretation of making the QDrift and Trotter errors approximately equivalent.
\end{enumerate}
\end{restatable}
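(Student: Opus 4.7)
The plan is to reverse-engineer the probability assignment from the goal of balancing Trotter and QDrift error contributions, and then verify the two stated claims as direct algebraic checks. Examining the cost bound in Theorem~\ref{thm:higher_order_cost_fixed}, the QDrift portion of the diamond error scales like $\lambda_B^2 t^2/(N_B\epsilon)$ while the Trotter portion scales as $(\Upsilon t)^{1+1/2k}/\epsilon^{1/2k}$ times a commutator prefactor containing the $(2k+\Upsilon)/(2k+1)$ and $\Upsilon^{1/2k}$ factors visible in Eq.~\eqref{eq:prob_def}. Equating the QDrift error budget to the Trotter one (an ``approximate equivalence'' per the final sentence of the statement) and solving for $\mathbb{E}[\lambda_B]/\lambda$ yields a target of exactly the form claimed in part~2. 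This target implicitly fixes a budget $L\chi$ for the sum $\sum_i(1-p_i)h_i$.

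With that budget in hand, importance sampling on the inverse spectral norms is the natural construction: taking $1-p_i \propto 1/h_i$ with proportionality constant $\chi$ gives $\mathbb{E}[\lambda_B] = \sum_i(1-p_i)h_i = L\chi$ (ignoring the cap), so $\chi$ is pinned down by inverting the balance condition. The specific ``$-1$'' inside the square root in the definition of $\chi$ appears because the prefactor is written in the form $\lambda/(h_iL)$ rather than as a pure ratio, which forces a shift when solving for $\chi$ in terms of $L$. The $\min\{\cdot,1\}$ cap handles small-weight terms whose naive $\chi/h_i$ exceeds one; such terms are deterministically assigned to QDrift, which only \emph{decreases} the contribution to $\mathbb{E}[\lambda_B]$ (since $\min\{\chi,h_i\} \le \chi$), preserving the $L\chi$ budget.

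For claim~1, $1-p_i\le 1$ is automatic from the $\min$, while $1-p_i\ge 0$ reduces (since $h_i, L, \lambda>0$) to $\chi\ge 0$, i.e.\ to requiring the argument of the square root in the definition of $\chi$ to be at least one. Solving that inequality for $N_B$ yields the stated lower bound verbatim, which is why the lemma includes a floor on the number of QDrift samples. For claim~2, the computation is essentially
\begin{equation*}
\expect{\lambda_B} \;=\; \sum_i \min\{\chi, h_i\}\,\mathds{1}[\text{uncapped}] + \sum_i h_i\,\mathds{1}[\text{capped}] \;\le\; L\chi,
\end{equation*}
after which substituting the definition of $\chi$, dropping the $-1$, and regrouping the constants reproduces the $\frac{1}{2}\sqrt{((4k+2\Upsilon)/(2k+1))^{1/2k}(2\Upsilon)^{1+1/2k}}$ prefactor.

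The main obstacle will be the constant bookkeeping in the last step: one must track how the commutator prefactor $(2k+\Upsilon)/(2k+1)$ from the Trotter cost theorem combines with the $\Upsilon^{1/2k}$ and $2^{1-1/k}$ factors from the QDrift/Trotter balance, and then verify that the resulting $L\chi/\lambda$ is upper bounded by the cleaner expression stated, absorbing a factor of $\sqrt{\Upsilon}$ of slack used to simplify the final form. I would handle this by writing the balance equation symbolically, isolating the $(N_B,\epsilon,\lambda,t)$-dependence from the pure constants, and only at the very end dropping the $-1$ and applying $\Upsilon\ge 1$ to match the stated expression.
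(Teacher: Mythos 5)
Your proposal matches the paper's proof essentially step for step: the paper balances the Jensen-type lower bound $\expect{Q(t)} \geq 4\expect{\lambda_B}^2 t^2/N_B$ against the partition-independent upper bound $P_{max}(t)^{1/2k}$ from Lemma \ref{lem:bounds_on_alpha_and_p}, solves for $\expect{\lambda_B}$, sets $1-p_i=\min\set{\chi/h_i,1}$, verifies $\sum_i(1-p_i)h_i$ meets the budget exactly as you do (your $\sum_i\min\set{\chi,h_i}\le L\chi$ is equivalent to the paper's $\chi|\probIndexSet|+\lambda_{\probIndexSet^C}$ manipulation), and obtains the $N_B$ floor from requiring $\chi\ge 0$. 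The only things you leave implicit are the two devices that make ``equating the error budgets'' rigorous---Jensen's inequality to pass from $\expect{\lambda_B^2}$ to $\expect{\lambda_B}^2$, and the $\alpha_{comm}\le 2^{2k}\lambda^{2k+1}$-type bounds that remove the partition dependence from the Trotter error---plus a small slip (the $-1$ sits outside the square root, not inside), though your handling of it, dropping it at the end and absorbing the leftover $\sqrt{\Upsilon}$ of slack, is exactly what the paper does.
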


Our final result, proved in Section \ref{sec:higher_order_improvements}, of significant importance are the conditions on the parameters of a given partition to provide asymptotic cost improvements of a Composite channel over it's constituent channels, Trotter or QDrift. This theorem is relevant when one is considering a family of Hamiltonians and has a provided partitioning scheme for each. We then study how these partitions can give rise to asymptotically superior simulation techniques over Trotter and QDrift as the number of terms in the Hamiltonian is taken to infinity. Although the situation of an infinite family of Hamiltonians with a provided partitioning scheme may not arise in practice, analyzing the performance of these families provides intuition as for when a Composite channel should be able to yield significant savings.
\begin{restatable}[Conditions for Composite Channel Improvements]{thm}{assImprovements} \label{thm:higher_order_improvements_general}
    Assume a Hamiltonian $H$ along with a partitioning scheme to generate $A$ and $B$ that varies with $L$. For a simulation time $t$ with a desired diamond 
    distance error at most $\epsilon$, let $\beta > 0$ be a number such that $C_{QD} = C_{Trott}^\beta$. There exist asymptotic regimes for the parameters $L_A, \lambda_B,$ and $N_B$ such that $$C_{Comp} \in o(\min \set{C_{Trott}, C_{QD}}),$$ outlined below for the cases when $C_{QD} \geq C_{Trott}$  $(\beta \geq 1)$ and $C_{QD} < C_{Trott}$ $(0 < \beta < 1)$. 
    
    For the case when $\beta > 1$, indicating Trotter uses fewer queries than QDrift, if the parameters $\lambda_B$, $L_A$, and number of QDrift samples $N_B$ satisfy the following
    \begin{enumerate}
        \item $L_A (1 - q_B)^{1/2k} \in o(L)$ where $q_B = \frac{\alpha_{comm}(B, 2k)}{\alpha_{comm}(H, 2k)}$,
        \item $\lambda_B \in o \parens{\lambda^{1/\beta} \parens{\frac{\sqrt{\epsilon}}{t}}^{1-1/\beta}}$,
        \item $N_B \in \Omega(L_A)$ and $N_B \in o\parens{\frac{L}{(1-q_B)^{1/2k}}}$,
    \end{enumerate}
    then we have that $C_{comp} \in o(C_{Trott})$.
    
    If instead $0 < \beta < 1$, indicating $C_{QD} < C_{Trott}$, and the parameters
    $\lambda_B$, $L_A$ and $N_B$ satisfy the following
    \begin{enumerate}
        \item the total number of terms in the Trotter partition satisfies
        \begin{equation}
            L_A \in o\parens{ L^{1/\beta} \parens{\frac{\epsilon^{1-1/\beta}}{t^{(2k+1)(1-1/\beta)}} \frac{\alpha_{comm}^{1/\beta}(H)}{\alpha_{comm}(A) + \alpha_{comm}(\set{A,B})} }^{1/2k}},
        \end{equation}
        \item $\lambda_B \in o(\lambda)$,
        \item and $N_B \in \Theta(L_A)$,
    \end{enumerate}
    then we have that $C_{comp} = o(C_{QD})$. Note that for $\beta = 1$ exactly, the conditions on $\lambda_B$ and $L_A$ are the same for both cases:
    $L_A \in o(L)$ and $\lambda_B \in o(\lambda)$. The conditions for $N_B$ are satisfied by $N_B \in \Theta(L_A)$.
    If these conditions are satisfied for all ranges of $\beta$ then $C_{comp} \in o \parens{\min \set{C_{QD}, C_{Trott}}}$. 
\end{restatable}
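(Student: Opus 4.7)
The plan is to expand the upper bound on $C_{comp}$ supplied by Theorem \ref{thm:higher_order_cost_fixed},
\begin{equation*}
C_{comp} \leq \Upsilon(\Upsilon L_A + N_B)\left\lceil C_{Trott}\,\frac{(1-q_B)^{1/2k}}{\Upsilon^{1-1/2k} L} + C_{QD}\,\frac{\Upsilon}{N_B}\frac{\lambda_B^2}{\lambda^2}\right\rceil,
\end{equation*}
into three additive contributions: a Trotter-like piece proportional to $(L_A + N_B)(1-q_B)^{1/2k} C_{Trott}/L$, a QDrift-like piece proportional to $(L_A + N_B)\lambda_B^2 C_{QD}/(N_B \lambda^2)$, and a residual $\Upsilon(\Upsilon L_A + N_B)$ produced by the $+1$ from the ceiling. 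The central task is then to show, under the hypotheses of the theorem, that each of these is $o(\min\set{C_{Trott},C_{QD}})$. The residual is the easiest to control: every listed hypothesis forces $L_A + N_B \in o(L)$, which is a factor of $L$ smaller than either cost for any nontrivial family of Hamiltonians.

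For the regime $\beta \geq 1$, where $\min\set{C_{Trott},C_{QD}} = C_{Trott}$, I would first bound the Trotter-like piece using conditions 1 and 3, which together yield $(L_A+N_B)(1-q_B)^{1/2k}\in o(L)$ and hence the piece is $o(C_{Trott})$. For the QDrift-like piece, condition 3 ($N_B\in\Omega(L_A)$) reduces the prefactor $(L_A+N_B)/N_B$ to a constant, so I only need $\lambda_B^2 C_{QD}/\lambda^2\in o(C_{Trott})$; substituting $C_{QD}\sim \lambda^2 t^2/\epsilon$ from Theorem \ref{thm:QDrift} and using the defining relation $C_{QD}=C_{Trott}^\beta$ to invert into $C_{Trott}$, the exact condition $\lambda_B\in o\!\parens{\lambda^{1/\beta}(\sqrt{\epsilon}/t)^{1-1/\beta}}$ of item 2 is the solution.

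For $0<\beta<1$, where $\min\set{C_{Trott},C_{QD}} = C_{QD}$, the QDrift-like piece is handled immediately: $N_B\in\Theta(L_A)$ makes the prefactor a constant and $\lambda_B\in o(\lambda)$ suffices. The Trotter-like piece requires $L_A(1-q_B)^{1/2k}\in o\!\parens{L\, C_{Trott}^{\beta-1}}$. I would substitute the explicit Trotter scaling from Theorem \ref{thm:higher_order_cost_fixed}, keeping the partition-dependent commutator factor $\Upsilon \alpha_{comm}(A,2k)+\alpha_{comm}(\set{A,B},2k)$ rather than the cleaner $(1-q_B)\alpha_{comm}(H,2k)$, and algebraically isolate $L_A$. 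The resulting inequality is exactly the stated bound, with the $L^{1/\beta}$ factor and the $\alpha_{comm}^{1/\beta}(H)/(\alpha_{comm}(A)+\alpha_{comm}(\set{A,B}))$ ratio emerging from raising both sides of the inequality $C_{Trott}=C_{QD}^{1/\beta}$ to the appropriate power. The $\beta=1$ boundary is a consistency check: substituting $C_{Trott}=C_{QD}$ collapses every stated condition to $L_A\in o(L)$ and $\lambda_B\in o(\lambda)$, confirming the stated matching at the endpoint.

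The main obstacle is the $\beta<1$ Trotter-piece manipulation, because the $\alpha_{comm}$ ratio appearing in the stated $L_A$ bound is only visible if one works with the unclean Trotter scaling; using the repackaged clean form of Theorem \ref{thm:higher_order_cost_fixed} absorbs the partition-dependent denominators into $(1-q_B)$ and loses the structure. Tracking which of the two equivalent expressions for $C_{Trott}$ to substitute at which point is the main bookkeeping challenge, but once executed the three contributions combine to the claimed $C_{comp}\in o(\min\set{C_{Trott},C_{QD}})$ for all $\beta>0$.
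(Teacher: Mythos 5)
Your proposal follows essentially the same route as the paper's proof: expand the cost bound of Theorem \ref{thm:higher_order_cost_fixed} into Trotter-like and QDrift-like contributions with prefactor $(L_A+N_B)$, use $C_{QD}=C_{Trott}^\beta$ to express the cross ratio, and verify each piece is $o(\min\set{C_{Trott},C_{QD}})$ under the stated hypotheses, with the $\beta$-cases handled separately exactly as the paper does. The only substantive difference is that you explicitly track the $+1$ residual from the ceiling, which the paper silently drops; that is a reasonable extra care, though your justification for it ($L_A+N_B\in o(L)$ being ``a factor of $L$ smaller than either cost'') would need the mild additional assumption that the costs themselves grow at least linearly in $L$.
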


\subsection*{Conditions for Advantage for Simulations with Composite Channels} 
To help give an idea of when a Composite channel would be most effective we briefly and informally discuss the intuition provided throughout the paper of when one should expect to see cost savings from a Composite channel. The most straightforward tool to build intuition is the situation in which $t$ and $\epsilon$ are such that $C_{Trott}(H, t, \epsilon) = C_{QD}(H, t, \epsilon)$. We will refer to this particular ratio of $t$ and $\epsilon$ as the cost crossover time. In this setting, all one has to do is find a partitioning scheme such that the number of Trotter terms is much smaller than the total number of terms ($L_A \in o(L)$) and that the spectral norm of the remaining terms is negligible compared to the overall sum ($\lambda_B \in o(\lambda)$). Note that this last expression can be rewritten: $\lambda - \lambda_A \in o(\lambda) \implies 1 - \lambda_A / \lambda \in o(1)$. Combining these pieces of information tells us that the Composite framework should provide the best improvements whenever a vanishingly small number of terms contain almost all of the ``spectral weight" of the Hamiltonian and have negligible commutator structure.

When considering the task of how to partition a given Hamiltonian we can unfortunately not offer more insight beyond the intuition provided above. This is likely to be a very difficult problem that will have to take advantage of domain specific knowledge in regards to the provided Hamiltonian. A possible starting point to constructing partitions could be gleaned from the effectiveness of our provided probabilistic partitioning scheme when applied to an exponentially decaying Hamiltonian. This probabilistic scheme depends solely on the spectral norm of each term, so a useful starting point for constructing deterministic partitions could be to pick a cutoff weight in which stronger terms are assigned to Trotter and lighter ones to QDrift. If a cutoff can be found that is small compared to the total norm $\lambda$ and only has a small percentage of terms (roughly $\log_2 (L) / L$ would align with Theorem \ref{thm:exponential_decay}), then this should be enough to see significant improvements in cost for simulation times near the cost crossover time. We note that many ``toy" chemistry and material science models, such as Jellium \cite{babbush2018low} for interacting electrons and Hydrogen chains \cite{whitfield2011simulation} for molecules, exhibit these kinds of strong decays in spectral norms.

Another useful consideration is the question of when a given simulation is likely to \emph{not} see significant savings from a composite approach. When considering Hamiltonian norms and commutator structure the worst case scenario is one in which each term has equal spectral norm and there is a cyclic commutator behavior (e.g. angular momentum $[J_i, J_j] = \epsilon_{ijk} J_k)$ that is the same magnitude at any order. We also note that the ability to find useful partitions depends heavily on the simulation time to error ratio $t / \epsilon$. As this ratio tends to either 0 or $\infty$ the ability to find an economical partitioning vanishes. For example if one needs a very accurate simulation ($\epsilon \to 0$) then any terms that are put in a QDrift partition will require too many samples to meet the lower error budget compared to just putting the term in a higher-order Trotter formula. At the other extreme in which one has a higher error or very short time $t$, then any error savings by putting terms into a Trotter partition are likely to waste gates when sampling these terms with QDrift would suffice.

%%%%%%%%%%%%%%%%%%%%%%%%%%%%%%%%%%%%%%%%%%%%%%%%%%%%%%%%%%%%%%%%%%%%%%%%%%%%%%%%%%%%%%%%%%%%%%%%%%%%%%%%%%%%%%%%%%%%%%%%%%%%%%%%%%%%%%%%%%%%
%%%%%%%%%%%%%%%%%%%%                            SECTION: PRELIMINARIES
%%%%%%%%%%%%%%%%%%%%%%%%%%%%%%%%%%%%%%%%%%%%%%%%%%%%%%%%%%%%%%%%%%%%%%%%%%%%%%%%%%%%%%%%%%%%%%%%%%%%%%%%%%%%%%%%%%%%%%%%%%%%%%%%%%%%%%%%%%%%

\section{Preliminaries} \label{sec:prelim}
In this section we will first introduce the necessary notation we will use and then state known results about Trotter-Suzuki formulas and QDrift
channels. We work exclusively with time-independent Hamiltonians $H$ in a $2^n$ dimensional Hilbert space $\hilbSpace$. We also assume that $H$
consists of $L$ terms $H = \sum_{i = 1}^L h_i H_i$ where $h_i$ represents the spectral norm of the term, $H_i$ is a Hermitian operator on $\hilbSpace$, and $\norm{H_i} = 1$. Note without loss of generality we can always assume $h_i \geq 0$, as we can always absorb the phase into the 
operator $H_i$ itself. We use $\norm{M}$ to refer to the spectral norm, or the magnitude of the largest singular value of $M$. We use $\lambda$ to refer to the sum of $h_i$, namely $\lambda = \sum_i h_i$. We will also use subscripts on lambda, such as $\lambda_A$ to refer to sums of subsets of the terms of $H$. For example, if $H = 1 H_1 + 2 H_2 + 3 H_3$ and $G = 1 H_1 + 2 H_2$, then $\lambda = 6$ and $\lambda_G = 3$. 

We use $U(t)$ to refer to the unitary operator $e^{iHt}$ and $\capU(t)$ to refer to the channel $U(t) \rho U(t)^\dagger$. We will be particularly
concerned with simulations of subsets of the terms of $H$, which we denote as follows. We typically work with a partition of $H$ into two matrices
$H = A + B$, and we let $A = \sum_i a_i A_i$ and $B = \sum_j b_j B_j$, where we have simply relabeled the relevant $h_i$ and $H_i$ into $a$'s, $b$'s, $A$'s, and $B$'s. This allows us to define the exact unitary time evolution operators $U_A(t) = e^{i A t}$ and channels $\capUA(t) = U_A(t) \rho U_A(t)^\dagger$, similarly defined for $B$. As we will be working with approximations to these channels, any operator or channel with a tilde represents
an ``implemented" channel, for example a first-order Trotter operator for $A$ would look like $\widetilde{U_A}(t) = e^{i a_1 A_1 t} \ldots e^{i a_L A_L t}$. We avoid using $\mathcal{E}$ to represent an approximation or product formula as $\mathcal{E}$ will be used for error channels.

Although much of the literature for Trotter-Suzuki formulas is written in terms of unitary operators $U = e^{i H t}$ acting on state vectors $\ket{\psi}$ for our purposes it will prove most natural to consider a product formula as a channel $\mathcal{U} = e^{iHt} \rho e^{-iHt}$ acting on a density matrix $\rho$. After reviewing known results on unitary constructions of Trotter-Suzuki formulas we give a straightforward extension
of these bounds to channels. 

\subsection{Trotter-Suzuki Formulas}
\begin{definition}[Trotter-Suzuki Decomposition \cite{suzuki}]\label{def:TS}
Given a Hamiltonian $H$, let $\trotterU{1}(\rho; t)$ denote the first-order Trotter-Suzuki time evolution operator, which is defined as 
\begin{equation}
    \trotterU{1}(t) \coloneqq e^{i h_L H_L t} \ldots e^{i h_1 H_1 t}  = \prod_{i = 1}^{L} e^{i h_i H_i t}.
\end{equation}
Note that the ordering of the factors in the notation $\prod_{i = 1}^L$ is defined to start from the rightmost operator and end at the leftmost. Following this we can define the second-order Trotter-Suzuki time evolution operator as
\begin{align}
    \trotterU{2}(\rho; t) &\coloneqq e^{i h_1 H_1 \frac{t}{2}} \ldots e^{i h_L H_L  \frac{t}{2}} e^{i h_L H_L  \frac{t}{2}} \ldots e^{i h_1 H_1 \frac{t}{2}} \\
    &= \prod_{i = L}^{1} e^{i h_i H_i \frac{t}{2}} \prod_{j = 1}^{L} e^{i h_j H_j \frac{t}{2}}.
\end{align}
This formula serves as the base case for the higher-order formulas which can be written as 
\begin{equation}
    \trotterU{2k}(t) \coloneqq \trotterU{2k-2}(u_k t)^2 \cdot \trotterU{2k-2}((1-4 u_k)t) \cdot \trotterU{2k-2}(u_k t)^2,
\end{equation}
where $u_k \coloneqq 1 / \parens{4-4^{1/(2k - 1)}}$. In addition we define $\Upsilon \coloneqq 2 \cdot 5^{k-1}$ as the number of "stages" in the higher-order product formula. We can now introduce the time evolution channels as
\begin{equation}
    \curlyP{2k}(\rho; t)\coloneqq \trotterU{2k}(t) \rho \trotterU{2k}^\dagger(t),
\end{equation}
where for consistency we use the calligraphic $\curlyP{2k}$ to represent the applied channels.
\end{definition}

Before we introduce the cost and error scaling for Trotter-Suzuki formulas we will make use of the following notation that captures information about the commutator structure of a Hamiltonian or subset of terms from a Hamiltonian. First used in Childs et. al \cite{childs2021theory} the sum of norms of commutators $\alpha_{comm}$ (represented as $\widetilde{\alpha}_{comm}$ in \cite{childs2021theory}) is given by
\begin{equation}
    \alpha_{comm}(H, 2k) \coloneqq \sum_{\gamma_i \in \set{1,\ldots, L}} \parens{\prod h_{\gamma_i}} \norm{[H_{\gamma_{2k+1}}, [H_{\gamma_{2k}},\ldots[H_{\gamma_2}, H_{\gamma_1}]\ldots]}_{\infty}. \label{def:alpha_comm}
\end{equation}
Another variation we will make is the restriction of $\alpha_{comm}$ to subsets of $H$, for example if we can form two subsets $A, B$ of $H$ such that $H = A+ B$ then we can write the following
\begin{equation}
    \alpha_{comm}(A, 2k) = \sum_{\gamma_i \in \set{1,\ldots L}} \parens{\prod a_{\gamma_i}} \norm{[A_{\gamma_{2k+1}}, [A_{\gamma_{2k}},\ldots[A_{\gamma_2}, A_{\gamma_1}]\ldots]}_{\infty}.
\end{equation}
We can then define the commutator structure between the two subsets as all nested commutators that contain \emph{at least} one term from both $A$ and $B$. This then allows for the expression
\begin{equation}
    \alpha_{comm}(\set{A,B}, 2k) = \alpha_{comm}(H, 2k) - \alpha_{comm}(A, 2k) - \alpha_{comm}(B, 2k),
\end{equation}
as any nested commutator with only terms consisting of $A$  matrices is contained in $\alpha_{comm}(A, 2k)$ and similarly for $B$.

Now we can state a summary of the performance of Trotter-Suzuki formulas as proved in \cite{childs2021theory}.
\begin{theorem}[Trotter-Suzuki Formulas] \label{thm:trotter_cost}
    Given a Hamiltonian $H$, time $t$, and error bound $\epsilon$, a $2k\ts{th}$-order Trotter-Suzuki channel as defined in \refeq{def:TS} satisfies $\diamondnorm{\capU(t) - \curlyP{2k}(t)} < \epsilon$ and uses the following number of gates of the form $e^{i H_i t'}$
    \begin{equation}
        C_{Trott}(H, t, \epsilon, 2k) = \Upsilon L r \leq \Upsilon L \ceil{\frac{(\Upsilon t)^{1+1/2k}}{\epsilon^{1/2k}} \parens{\frac{4 \alpha_{comm}(H, 2k)}{2k+1}^{1/2k}}}.
    \end{equation}
    Similarly, a first-order Trotter-Suzuki formula has the following cost
    \begin{equation}
        C_{Trott}(H, t, \epsilon, 1) = L r \leq L \ceil{\frac{t^2}{2 \epsilon} \sum_{i, j} h_i h_j \norm{[H_i, H_j]}_{\infty}}
    \end{equation}
\end{theorem}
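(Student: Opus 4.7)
The plan is to invoke the single-step operator-norm bound of Childs et al.\ \cite{childs2021theory} for the $2k$-th order Suzuki formula, promote it to a diamond-norm bound via the standard unitary lift, and then telescope over the $r$ repetitions. Concretely, I would combine three ingredients: (i) the single-step unitary estimate $\norm{U(\tau) - \trotterU{2k}(\tau)} \leq \tfrac{2(\Upsilon \tau)^{2k+1}}{2k+1}\,\alpha_{comm}(H,2k)$ from \cite{childs2021theory}, which is the only place where the nested-commutator structure enters; (ii) the inequality $\diamondnorm{\capU - \widetilde{\capU}} \leq 2\norm{U - \widetilde{U}}$, valid for channels induced by unitaries and obtained by applying $\norm{U\rho U^\dagger - V\rho V^\dagger}_1 \leq 2\norm{U - V}$ to an arbitrary purification; and (iii) subadditivity of diamond distance under sequential composition, $\diamondnorm{\capU(t) - \curlyP{2k}(t/r)^{\circ r}} \leq r\,\diamondnorm{\capU(t/r) - \curlyP{2k}(t/r)}$, which uses the fact that quantum channels are contractive in the diamond norm.

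Combining these three inputs at step size $\tau = t/r$ gives a total diamond-norm error of at most $\tfrac{4(\Upsilon t)^{2k+1}\alpha_{comm}(H,2k)}{r^{2k}(2k+1)}$. Setting this at most $\epsilon$, solving for $r$, and taking the ceiling reproduces the expression stated in the theorem; multiplying by the $\Upsilon L$ exponentials used per step of $\curlyP{2k}$ (which follows by counting factors in the recursive definition of $\trotterU{2k}$ together with the $2L$ count for the base second-order formula) gives $C_{Trott}(H,t,\epsilon,2k) = \Upsilon L r$.

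For the first-order bound, the same aggregation argument applies, but the single-step input is replaced by the elementary second-order Taylor-expansion estimate $\norm{U(\tau) - \trotterU{1}(\tau)} \leq \tfrac{\tau^2}{2} \sum_{i,j} h_i h_j \norm{[H_i,H_j]}_\infty$, whose derivation only requires expanding both sides to second order and pairing non-commuting cross-terms. The only genuine obstacle in the higher-order case lives in step (i): unraveling the Suzuki recursion to identify the surviving commutator terms at order $2k+1$ and bundle them into $\alpha_{comm}(H,2k)$ is precisely the technical content of \cite{childs2021theory}. Since we invoke that result as a black box, the remainder of the argument is constant bookkeeping and the algebra of solving for $r$.
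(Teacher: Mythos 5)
Your proposal is correct and follows essentially the same route as the paper's proof: the same single-step operator-norm bound from Childs et al., the same factor-of-two lift from operator norm to diamond norm, the same subadditivity-under-composition telescoping, and the same algebra solving for $r$ before multiplying by the $\Upsilon L$ exponentials per segment. The only minor remark is that your first-order aggregation, carried out explicitly, yields $r \geq \frac{t^2}{\epsilon}\sum_{i,j} h_i h_j \norm{[H_i,H_j]}_\infty$ rather than the $\frac{t^2}{2\epsilon}$ prefactor in the theorem statement, but this is a bookkeeping discrepancy already present in the paper (which only says ``similar results hold'') and not a flaw in your argument.
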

\begin{proof}
    We first upper bound the diamond distance between our implementation and the ideal time evolution channels as follows.
    \begin{align}
        &\diamondnorm{\capU(t) - \curlyP{2k}(t)} \coloneqq \norm{\parens{\capU(t) - \curlyP{2k}(t)}\otimes \openone}_1 \label{eq:diamond_to_spectral_start} \\
        =& \max_{\rho : \norm{\rho}_1 \leq 1} \norm{\parens{e^{iHt}\otimes \openone} \rho \parens{e^{-i H t}\otimes \openone} - \parens{\trotterU{2k}(t)\otimes \openone} \rho \parens{\trotterU{2k}^\dagger(t)\otimes \openone} }_1 \\
        \leq& \max_{\rho : \norm{\rho}_1 \leq 1} \norm{\parens{e^{iHt}\otimes \openone} \rho \parens{e^{-i H t}\otimes \openone} - \parens{e^{i H t} \otimes \openone} \rho \parens{\trotterU{2k}^\dagger(t) \otimes \openone} }_1 \\
        ~& +\max_{\rho : \norm{\rho}_1 \leq 1} \norm{ \parens{e^{i H t} \otimes \openone} \rho \parens{\trotterU{2k}^\dagger(t) \otimes \openone}  - \parens{\trotterU{2k}(t)\otimes \openone} \rho \parens{\trotterU{2k}^\dagger(t)\otimes \openone} }_1 \nonumber \\
        =& \max_{\rho : \norm{\rho}_1 \leq 1} \norm{\rho \parens{e^{-i H t} - \trotterU{2k}^\dagger(t)}\otimes \openone}_1 + \max_{\rho : \norm{\rho}_1 \leq 1} \norm{\parens{e^{i H t} - \trotterU{2k}(t)}\otimes \openone \rho}_1 \\
        \leq & 2 \norm{e^{i H t} - \trotterU{2k}(t)}_\infty \max_{\rho : \norm{\rho}_1 \leq 1} \norm{\rho}_1 \\
        = & 2 \norm{e^{i H t} - \trotterU{2k}(t)}_\infty. \label{eq:TS_intermediate_1}
    \end{align}
    We can then make use of Eq. (189) and Theorem 10 from \cite{childs2021theory}, which provides the following bound
    \begin{equation}
        \norm{e^{iHt/r} - \trotterU{2k}(t/r)}_{\infty} \leq \frac{2 \alpha_{comm}(H, 2k)}{2k+1} \parens{\frac{\Upsilon t}{r}}^{2k+1}. \label{eq:TS_intermediate_2}
    \end{equation}
    We note that this equation differs from Eq. (189) in \cite{childs2021theory} due to the different $\alpha_{comm}$ used. The denominator of $(2k+1)!$ is replaced by $2k+1$ due to a factor of $(2k)!$ from upper bounds on the $\alpha_{comm}$ used in \cite{childs2021theory}. Note that this also leads to the extra factors of $\Upsilon^{2k}$, as opposed to just $\Upsilon$ in Eq. (189) in \cite{childs2021theory}.
    
    For the first-order formula we will use the following upper bound which follows from an application of the triangle inequality to Eq. (143) from \cite{childs2021theory}
    \begin{equation}
        \norm{e^{iHt/r} - \trotterU{1}(t/r)}_{\infty} \leq \frac{t^2}{2r^2} \sum_{i, j} h_i h_j \norm{[H_i, H_j]}_{\infty} 
    \end{equation}
    Combining Eqs. \eqref{eq:TS_intermediate_1} and \eqref{eq:TS_intermediate_2}, along with the inequality $\diamondnorm{X^{\circ r} - Y^{\circ r}} \leq r \diamondnorm{X - Y}$, yields
    \begin{align}
        \diamondnorm{\capU(t) - \curlyP{2k}(t/r)^{\circ r}} &\leq r \diamondnorm{\capU(t/r) - \curlyP{2k}(t/r)} \\
        &\leq \frac{4 r \alpha_{comm}(H, 2k)}{2k+1} \parens{\frac{\Upsilon t}{r}}^{2k+1}. \label{eq:trotter_diamond_error}
    \end{align}
    We then can require the inequality in Eq. \eqref{eq:trotter_diamond_error} to be less than $\epsilon$ and solve for $r$, yielding 
    \begin{equation}
        r > \frac{\parens{\Upsilon t}^{1+1/2k}}{\epsilon^{1/2k}} \parens{\frac{4 \alpha_{comm}(H, 2k)}{2k+1}}^{1/2k}. \label{eq:TS_intermediate_3}
    \end{equation}
    By taking the ceiling of the RHS of \eqref{eq:TS_intermediate_3} and plugging the result into $C_{Trott}(H, t, \epsilon) = \Upsilon L r$ yields the expression in the statement. Similar results hold for the first-order case. 
\end{proof}

\subsection{QDrift}
We now shift our attention to the other main product formula we will make use of, that is QDrift. Introduced by Campbell in \cite{qdrift}, the main premise of QDrift is that one randomly picks a term $H_i$ from the overall set of terms according to the ratio of spectral norms $\nicefrac{h_i}{\lambda}$ and then apply the exponential gate $e^{i H_i \tau}$, for some $\tau \propto t$. This is summarized in the following definition.
\begin{definition}[QDrift Channel] \label{def:qdrift_channel}
    Let $p_i = \frac{h_i}{\lambda}$, where $\lambda = \sum_i h_i$, represent a probability distribution over terms in a Hamiltonian $H = \sum_i h_i H_i$. We define the QDrift channel for a single sample from this distribution as
    \begin{equation}
        \qdchan(t) : \rho \mapsto \sum_i p_i e^{i H_i \lambda t} \rho e^{- i H_i \lambda t}.
    \end{equation}
\end{definition}

Below we restate the main results from \cite{qdrift}, in which multiple independent samples of the above channel are studied, with only minor modifications to the allowable range of $\epsilon$.
\begin{theorem}[QDrift] \label{thm:QDrift}
    Given a Hamiltonian $H$, time $t$, and error bound $\epsilon$ one can approximate the ideal unitary dynamics of $\capU(t)$ by taking $N$ i.i.d samples of the QDrift channel from Definition \ref{def:qdrift_channel}. To meet the error bound $\epsilon$, namely $\diamondnorm{\capU(t) - \qdchan(\nicefrac{t}{N})^{\circ N}} < \epsilon$, it suffices to choose $N = \nicefrac{4 t^2 \lambda^2}{\epsilon}$ if we restrict allowed values of $\epsilon$ to within the range $(0, \lambda t \ln (2)/2)$. This gives the cost of the channel, or the number of operator exponentials of the form $e^{i H_i t'}$ that the channel requires, as
    \begin{equation}
        C_{QD}(H, t, \epsilon) \leq \frac{4 \lambda^2 t^2}{\epsilon}.
    \end{equation}
\end{theorem}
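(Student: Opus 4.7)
The plan is to follow Campbell's original argument: bound the per-step diamond distance between $\qdchan(\tau)$ and $\capU(\tau)$ at the lowest non-vanishing order in $\tau = t/N$, then propagate to $N$ steps by the standard subadditivity of diamond distance under CPTP composition.

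First I would Taylor-expand both channels in $\tau$ and compare them order by order. The exact evolution satisfies $\capU(\tau)(\rho) = \rho + i\tau[H,\rho] - \frac{\tau^2}{2}[H,[H,\rho]] + O(\tau^3)$, while expanding each unitary conjugation $e^{i\lambda\tau H_i}\rho e^{-i\lambda\tau H_i}$ and averaging against $p_i = h_i/\lambda$ yields
\begin{equation*}
    \qdchan(\tau)(\rho) = \rho + i\tau[H,\rho] - \frac{(\lambda\tau)^2}{2}\sum_i p_i [H_i,[H_i,\rho]] + O\parens{(\lambda\tau)^3}.
\end{equation*}
The defining identity $\sum_i p_i \lambda H_i = H$ is precisely what forces the first-order terms in $\tau$ to match. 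The quadratic terms generically differ, since $\sum_i p_i\lambda^2 H_i^2 \neq H^2$, and this discrepancy is the only source of error at leading order.

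Second, I would bound this difference in diamond norm. Using $\norm{\lambda H_i} \leq \lambda$ and $\norm{H} \leq \lambda$, each $n$th-order summand in the Taylor remainders carries spectral norm at most $(\lambda\tau)^n/n!$; the diamond-to-spectral passage already used in the Trotter argument then converts these bounds into diamond-norm estimates on the super-operators acting on $\rho$. Geometric re-summation of the resulting series yields a per-step bound of the form $\diamondnorm{\capU(\tau) - \qdchan(\tau)} \leq 4(\lambda\tau)^2$, valid whenever $\lambda\tau$ is small enough that the cubic and higher tails are dominated by the quadratic term. The restriction $\epsilon \in (0, \lambda t \ln(2)/2)$ is precisely the range in which the target choice $N = 4\lambda^2 t^2/\epsilon$ ensures $\lambda\tau = \lambda t/N < \ln 2$, making the tail re-summation rigorous.

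Finally, I would invoke the standard subadditivity estimate $\diamondnorm{\capU(\tau)^{\circ N} - \qdchan(\tau)^{\circ N}} \leq N \diamondnorm{\capU(\tau) - \qdchan(\tau)}$, which follows from CPTP contractivity of the trace distance and was already used implicitly in the Trotter proof above. Combined with the per-step bound and $\tau = t/N$, this gives $\diamondnorm{\capU(t) - \qdchan(t/N)^{\circ N}} \leq 4\lambda^2 t^2/N$; requiring this to be at most $\epsilon$ yields $N \geq 4\lambda^2 t^2/\epsilon$, and since each application of $\qdchan$ invokes a single operator exponential, this same quantity is the gate count. The main obstacle is the remainder bound in the second step: unlike the deterministic Trotter tail, the QDrift remainder is a convex mixture of unitary conjugations, so one must carefully re-sum the higher-order terms with explicit constants in order to extract the clean form $4(\lambda\tau)^2$ along with the matching admissible range of $\epsilon$.
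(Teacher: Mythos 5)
Your proposal is essentially correct, but it does considerably more work than the paper itself does: the paper does not re-derive the QDrift error bound at all. It imports Campbell's inequality $\diamondnorm{\capU(t) - \qdchan(t/N)^{\circ N}} \leq \frac{2\lambda^2 t^2}{N} e^{2\lambda t/N}$ wholesale from the cited reference, and the entire ``proof'' consists of observing that with $N = 4\lambda^2 t^2/\epsilon$ and $\epsilon$ restricted to $(0, \lambda t \ln(2)/2)$ the prefactor satisfies $e^{2\lambda t/N} \leq 2$, giving $4\lambda^2 t^2 / N = \epsilon$. Your Taylor-expansion argument --- cancellation of the first-order commutator terms via $\sum_i p_i \lambda H_i = H$, bounding the mismatched second-order and higher terms, re-summing the tail, then applying subadditivity over $N$ segments --- is precisely the content of the proof in the cited work, so you are reconstructing the black box rather than taking a different route. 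Two small points of precision. First, your claim that the restriction on $\epsilon$ is ``precisely'' what makes $\lambda\tau < \ln 2$ is not quite the right accounting: with $N = 4\lambda^2 t^2/\epsilon$ one has $2\lambda t/N = \epsilon/(2\lambda t)$, so the condition actually being enforced is $e^{\epsilon/(2\lambda t)} \leq 2$, which holds whenever $\epsilon \leq 2\lambda t \ln 2$; the paper's range $\epsilon < \lambda t\ln(2)/2$ is strictly smaller than necessary, and your stated range is in the same spirit but the binding inequality is on $e^{2\lambda t/N}$, not on $\lambda\tau$ itself. Second, the hard step you flag at the end --- extracting the clean constant in the per-step bound from the convex mixture of unitary conjugations --- is genuinely the technical core of Campbell's proof and cannot be waved through as ``geometric re-summation'' without the explicit vectorization (or Liouvillian-exponential) argument; as written your sketch asserts the constant $4$ rather than deriving it, whereas the paper sidesteps this entirely by citation.
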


Our minor modification follows from the proof of the following expression
\begin{equation}
    \diamondnorm{\capU(t) - \qdchan(t/N)^{\circ N}} \leq \frac{2 \lambda^2 t^2}{N} e^{2 \lambda t / N}, \label{eq:qdrift_diamond_distance}
\end{equation}
which was given in \cite{qdrift}. We upper bound the coefficient $e^{2 \lambda t/N} \leq 2$ by using $N = 4 \lambda^2 t^2 /\epsilon$ and restricting $\epsilon \in (0, \lambda t \ln(2)/2)$.

%%%%%%%%%%%%%%%%%%%%%%%%%%%%%%%%%%%%%%%%%%%%%%%%%%%%%%%%%%%%%%%%%%%%%%%%%%%%%%%%%%%%%%%%%%%%%%%%%%%%%%%%%%%%%%%%%%%%%%%%%%%%%%%%%%%%%%%%%%%%
%%%%%%%%%%%%%%%%%%%%                            SECTION: first-order TROTTER
%%%%%%%%%%%%%%%%%%%%%%%%%%%%%%%%%%%%%%%%%%%%%%%%%%%%%%%%%%%%%%%%%%%%%%%%%%%%%%%%%%%%%%%%%%%%%%%%%%%%%%%%%%%%%%%%%%%%%%%%%%%%%%%%%%%%%%%%%%%%

\section{First-Order Trotter with QDrift}\label{sec:first_order_trotter}
The most straightforward Composite channel to analyze is combining a first-order Trotter formula with a QDrift channel. We proceed in four steps. First, we assume a partitioning $H = A + B$ which allows us to determine the diamond distance error scaling of the Composite channel. Next, we formulate an upper bound on the number of exponential gates of the form $e^{i H_i t}$ needed to achieve this error. Following this, we use the derived cost function to determine a useful partitioning scheme for determining whether a term in a given Hamiltonian $H$ should end up in the Trotter channel or the QDrift channel. Finally, we give an instance in which a Composite channel can offer asymptotic improvements over either a purely Trotter or QDrift
channel.

\subsection{Query Complexity}
To analyze the error of our Composite channel we need to first reduce the overall time evolution channel $\rho \mapsto e^{-iHt} \rho e^{+iHt}$ into the simpler pieces that we can analyze with our Trotter and QDrift results. Assuming a partitioning $H = A + B$, where $A$ consists of terms that we would like to simulate with Trotter and $B$ has the terms we would like to sample from with QDrift. We now introduce the ``outer-loop" error $E_{\set{A,B}}$ induced by this partitioning, which is as follows
\begin{equation}
    e^{-iHt} \rho e^{+iHt} = e^{-iBt}e^{-iAt} \rho e^{+iAt} e^{+iBt} + E_{\set{A,B}}(t).
\end{equation}
We use the phrase ``outer-loop" as this decomposition is done before any simulation channels are implemented. 

The rest of the error analysis is captured in the following lemma.
\begin{theorem}[First-Order Composite Channel] \label{thm:first_order_composite}
Given a time $t$, error bound $\epsilon$, and a partitioned Hamiltonian $H = A + B$ one can construct a first-order composite simulation channel $\widetilde{\capU}(t)$ that approximates the ideal channel $\capU(t)$ within a diamond distance $\epsilon$ as follows. Let $\widetilde{\capU}(t) = \widetilde{\capUB}(t) \circ \widetilde{\capUA}(t)$ represent the composition of a Trotter-Suzuki channel $\widetilde{\capUA}$ to simulate $A$ and a QDrift channel $\widetilde{\capUB}$ for $B$. By repeating $\widetilde{\capU}(t/r)$ for $r$ iterations, the diamond distance bound $\diamondnorm{\capU(t) - \widetilde{\capU}(t/r)^{\circ r}} < \epsilon$ can be acheived by using no more than 
\begin{align}
    C_{Comp}(A, B, t, \epsilon) &= (L_A + N_B) r \\
    &= (L_A + N_B) \ceil{\frac{t^2}{\epsilon} \parens{\sum_{i,j} a_i a_j \norm{[A_i, A_j]}_{\infty} +  \sum_{i,j} a_i b_j\norm{ [A_i, B_j]}_{\infty} + \frac{4 \lambda_B^2}{N_B}}} \label{eq:first_order_comp_cost}
\end{align} 
gates of the form $e^{i H_i t'}$. 

% Let $\capU(t)$ denote the exact unitary time evolution channel for time $t$, with $\capUA$ and $\capUB$ denoting the exact time evolution channels for Hermitian operators $A = \sum_i a_i A_i$ and $B = \sum_j b_j B_j$ respectively. Similarly to $H$ we can define without loss of generality that $a_i, b_j \in \mathbb{R}^+$ and $\norm{A_i}_\infty = \norm{B_j}_\infty = 1$. A product formula approximation $\widetilde{\capU} = \widetilde{\capUB} \circ \widetilde{\capUA}$ can achieve a diamond distance at most $\epsilon$ from $\capU$ if repeated a minimum of $r$ iterations for time $\nicefrac{t}{r}$ each iteration. Formally, if $r$ satisfies the lower bound
% \begin{equation}
%     r \geq \frac{t^2}{\epsilon} \left[\frac{2}{\sqrt{5}} \parens{\sum_{i,j} a_i a_j \norm{[A_i, A_j]} + \sum_{i, j} a_i b_j \norm{[A_i, B_j]}} + \frac{4 \lambda_B^2 }{N_B} \right],
% \end{equation}
% then $\diamondnorm{\capU(t) - \parens{\widetilde{\capUB}(\nicefrac{t}{r}) \circ \widetilde{\capUA}(\nicefrac{t}{r})}^r} \leq \epsilon$.
\end{theorem}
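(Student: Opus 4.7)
The plan is to split the per-step diamond distance into three natural contributions, bound each using a result from Section \ref{sec:prelim}, and then solve the resulting inequality for the number of outer iterations $r$. Write a single step of the Composite channel as $\widetilde{\capU}(t/r) = \widetilde{\capUB}(t/r) \circ \widetilde{\capUA}(t/r)$, where $\widetilde{\capUA}$ denotes the first-order Trotter channel on $A$ and $\widetilde{\capUB}$ denotes the composition of $N_B$ i.i.d.\ samples of the QDrift channel on $B$. Inserting the intermediate channels $\capUB(t/r) \circ \capUA(t/r)$ and $\capUB(t/r) \circ \widetilde{\capUA}(t/r)$, and using that the diamond norm is invariant under composition with unitary channels and sub-additive under composition with a CPTP channel on either side, yields
\begin{align}
    \diamondnorm{\capU(t/r) - \widetilde{\capU}(t/r)} &\leq \diamondnorm{\capU(t/r) - \capUB(t/r)\circ\capUA(t/r)} \nonumber \\
    &\quad + \diamondnorm{\capUA(t/r) - \widetilde{\capUA}(t/r)} + \diamondnorm{\capUB(t/r) - \widetilde{\capUB}(t/r)},
\end{align}
which I will call the outer-loop, Trotter, and QDrift errors respectively.

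Each of the three pieces can then be bounded using results already in hand. For the outer loop I apply the first-order Trotter analysis from the proof of Theorem \ref{thm:trotter_cost} to the two-block Hamiltonian $A+B$, which gives $\diamondnorm{\capU(t/r) - \capUB(t/r)\circ\capUA(t/r)} \leq \frac{t^2}{r^2}\norm{[A,B]}_\infty$; expanding $[A,B] = \sum_{i,j} a_i b_j [A_i, B_j]$ and applying the triangle inequality termwise then produces the cross-commutator sum $\frac{t^2}{r^2}\sum_{i,j} a_i b_j \norm{[A_i, B_j]}_\infty$. For the Trotter error, Theorem \ref{thm:trotter_cost} applied directly to $A$ over time $t/r$ gives $\frac{t^2}{r^2}\sum_{i,j} a_i a_j \norm{[A_i, A_j]}_\infty$. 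For the QDrift error, Theorem \ref{thm:QDrift} applied to $B$ with $N_B$ samples over time $t/r$ yields $\frac{4 \lambda_B^2 t^2}{N_B r^2}$, after restricting to the regime where the exponential prefactor in Eq.~\eqref{eq:qdrift_diamond_distance} is at most $2$.

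Finally, combining these three per-step bounds with the standard telescoping estimate $\diamondnorm{\widetilde{\capU}(t/r)^{\circ r} - \capU(t)} \leq r\, \diamondnorm{\widetilde{\capU}(t/r) - \capU(t/r)}$, the total diamond error is bounded by $\frac{t^2}{r}$ times the parenthesized expression in Eq.~\eqref{eq:first_order_comp_cost}. Requiring this to be at most $\epsilon$ and solving for the smallest positive integer $r$ reproduces the ceiling in the theorem, and since each outer iteration uses $L_A$ Trotter exponentials plus $N_B$ QDrift samples, the total gate count is exactly $(L_A + N_B) r$ as claimed. The step I expect to require the most care is the outer-loop bound: I need to verify that treating $A$ and $B$ as two effective blocks in a first-order Trotter analysis produces precisely the cross-commutator sum in Eq.~\eqref{eq:first_order_comp_cost} — with no unwanted factor of two and no accidental double-counting of self-commutators within $A$ or $B$ — so that when the three contributions are added, their normalizations assemble into the clean form in the theorem statement.
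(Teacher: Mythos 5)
Your proposal follows essentially the same route as the paper: the per-step error is split into the outer-loop ($A$-$B$ cross-commutator), Trotter-on-$A$, and QDrift-on-$B$ contributions, each bounded by the corresponding preliminary result, then combined via $\diamondnorm{X^{\circ r}-Y^{\circ r}}\leq r\,\diamondnorm{X-Y}$ and solved for $r$. The outer-loop step you flag as delicate is handled in the paper exactly as you propose — bounding $\diamondnorm{E_{\set{A,B}}(t/r)}$ by twice the spectral-norm first-order Trotter error of the two-block splitting and expanding the cross commutator termwise — so the proposal is correct and matches the paper's proof.
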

\begin{proof}
We will need to make use of the following minor result 
\begin{equation}
    \diamondnorm{X^{\circ r} - Y^{\circ r}} \leq r \diamondnorm{X - Y}, \label{eq:err_to_err_per_iter}
\end{equation}
where $X$ and $Y$ are channels. This follows straightforwardly from subadditivity of the diamond norm with respect to composition of channels. Now starting with the outer-loop decomposition mentioned above we can reduce the overall channel distance to a per-iteration distance as follows
\begin{align}
    \diamondnorm{\capU(t) - \widetilde{\capU}\parens{\nicefrac{t}{r}}^{\circ r}} &\leq \diamondnorm{\parens{\capUB(\nicefrac{t}{r}) \circ \capUA(\nicefrac{t}{r}) + E_{\set{A,B}}(\nicefrac{t}{r})}^{\circ r} - \parens{\widetilde{\capUA}(\nicefrac{t}{r}) \circ \widetilde{\capUB}(\nicefrac{t}{r})}^{\circ r}} \\
    &\leq r \parens{ \diamondnorm{\capUA(\nicefrac{t}{r}) - \widetilde{\capUA}(\nicefrac{t}{r})} + \diamondnorm{\capUB(\nicefrac{t}{r}) - \widetilde{\capUB}(\nicefrac{t}{r})} + \diamondnorm{E_{\set{A,B}}(\nicefrac{t}{r})} }. \label{eq:dmd_bnd_intermediate_1}
\end{align}
This is now in a form where we can use the results from Section \ref{sec:prelim} for Trotter formulas and QDrift channels. We use the QDrift results from Theorem \ref{thm:QDrift} that $\diamondnorm{\capUB(t/r) - \widetilde{\capUB}(t/r)} \leq \frac{4 \lambda_B^2 t^2}{N_B r^2}$ and Eqs \eqref{eq:diamond_to_spectral_start} - \eqref{eq:TS_intermediate_1} from Theorem \ref{thm:trotter_cost}  to reduce $\diamondnorm{\capUA(t/r) - \widetilde{\capUA}(t/r)} \leq 2 \norm{e^{iAt/r} - \trotterU{1}(t/r)}_{\infty}$. The last term we need to bound is the outer-loop error
\begin{align}
    \diamondnorm{E_{\set{A, B}}(\nicefrac{t}{r})} &= \diamondnorm{\capU(\nicefrac{t}{r}) - \capUB(\nicefrac{t}{r}) \circ \capUA(\nicefrac{t}{r})} \\
    &\leq 2 \norm{e^{i H (t/r)} - e^{i B (t/r)} e^{i A (t/r)}}_{\infty} \\
    &\leq \frac{t^2}{r^2} \sum_{i,j} a_i b_j \norm{[A_i, B_j]}_{\infty}
\end{align}

Plugging in Theorems \ref{thm:QDrift} and \ref{thm:trotter_cost} into Eq. \ref{eq:dmd_bnd_intermediate_1} yields
\begin{align}
    \frac{1}{r}\diamondnorm{\capU(t) - \widetilde{\capU}(\nicefrac{t}{r})^r} &\leq \parens{\frac{t}{r}}^2  \parens{\sum_{i,j} a_i a_j \norm{[A_i, A_j]}_{\infty} + \sum_{i, j} a_i b_j \norm{[A_i, B_j]}_{\infty}} +  \frac{4 \lambda_B^2 t^2 }{N_B r^2} \leq \frac{\epsilon}{r}, \label{eq:first_order_intermediate_1}
\end{align}
where $N_B$ represents the number of samples used by QDrift to simulate $e^{iBt}$. It is straightforward to solve for $r$ that satisfies the inequality in Eq \eqref{eq:first_order_intermediate_1} to plug into the expression $C_{Comp}(A, B, t, \epsilon) = (L_A + N_B) r$ which yields the theorem statement. 
\end{proof}

% Now that we have an explicit lower bound for the number of iterations needed to achieve a diamond distance of $\epsilon$ we can bound the number of exponential gates of the form $e^{i H_i t}$ needed to implement the composite channel on a physical device. It follows directly from lemma \ref{lem:first_order_composite_error}, with the additional observation that the first-order channel uses $L_A$ gates each iteration and QDrift uses $N_B$.
% \begin{lemma}\label{lem:first_order_composite_cost}
% The number of operator exponentials of the form $e^{i H_i t}$ needed to implement a first-order composite channel approximation $\widetilde{\capUB} \circ \widetilde{\capUA}$ to the ideal unitary time evolution $\capU$ is upper bounded by the following
% \begin{align}
%     C_{comp}(t, \epsilon, A, B) \coloneqq& (L_A + N_B) \cdot r \\
%     \leq& (L_A + N_B) \frac{t^2}{\epsilon} \left\lceil \frac{2}{\sqrt{5}} \parens{\sum_{i,j} a_i a_j \norm{[A_i, A_j]} + \sum_{i, j} a_i b_j \norm{[A_i, B_j]}} + \frac{4 \lambda_B^2 }{N_B} \right\rceil,
% \end{align}
% where the ceiling is necessary as $r$ must be an integer.
% \end{lemma}

We now use a relaxation of the first-order Composite channel cost from Eq. \eqref{eq:first_order_comp_cost} in which we allow for non-integer values, which is given as 
\begin{equation}
\widetilde{C}_{Comp}(A, B, t, \epsilon) \coloneqq (L_A + N_B) \frac{t^2}{\epsilon} \parens{\sum_{i,j} a_i a_j \norm{[A_i, A_j]}_{\infty} + \sum_{i,j} a_i b_j \norm{[A_i, B_j]}_{\infty} + \frac{4 \lambda_B^2}{N_B}}.
\end{equation}
One unspecified quantity in the above expression is $N_B$ which is specifically left as a user-defined parameter. This means we can optimize the non-integer cost $\widetilde{C}_{Comp}$ with respect to $N_B$, which is done in the following lemma.
\begin{lemma}\label{lem:first_order_opt_nb}
Let $\widetilde{C}_{Comp}(t, \epsilon, A, B)$ denote the non-integer cost of a first-order Composite channel approximation to $\capU(t)$. Then $\widetilde{C}_{Comp}$ can be optimized with respect to $N_B$ when
\begin{equation}
    N_B = \sqrt{\frac{4 \lambda_B^2 L_A}{\parens{\sum_{i,j} a_i a_j \norm{[A_i, A_j]}_{\infty} + \sum_{i, j} a_i b_j \norm{[A_i, B_j]}_{\infty}}}},
\end{equation}
note that this expression is only defined if $\norm{[A_i, A_j]}_{\infty} > 0$ or $\norm{[A_i, B_j]}_{\infty} > 0$ for at least one $A_i$ or $B_j$.
\begin{proof}
The result follows from basic calculus with the additional assumption that we will treat the above cost upper bound as exact
\begin{align}
    \frac{\partial \widetilde{C}_{comp}}{\partial N_B} = \frac{t^2}{\epsilon} \brackets{\sum_{i,j} a_i a_j \norm{[A_i, A_j]} + \sum_{i, j} a_i b_j \norm{[A_i, B_j]}  - \frac{4 \lambda_B^2 L_A}{N_B^2} }.
\end{align}
Setting the above equal to zero and solving for $N_B$ yields the stated value. The second derivative can be shown as 
\begin{equation}
    \frac{\partial^2 \widetilde{C}_{comp}}{\partial N_B^2} = \frac{4 t^2 \lambda_B^2 L_A}{\epsilon N_B^3} \geq 0,
\end{equation}
which indicates the optima found is the minimal cost with respect to $N_B$.
\end{proof}
\end{lemma}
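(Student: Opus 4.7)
The plan is straightforward single-variable calculus: treat $\widetilde{C}_{Comp}$ as a differentiable function of the continuous variable $N_B$ (permitted since the lemma concerns the non-integer cost), take the first derivative, solve for the critical point, and confirm via the second derivative that it is a minimum.

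First, I would make the $N_B$-dependence explicit by expanding the product. Writing $\Gamma \coloneqq \sum_{i,j} a_i a_j \norm{[A_i, A_j]}_{\infty} + \sum_{i,j} a_i b_j \norm{[A_i, B_j]}_{\infty}$ as shorthand for the commutator sum, we have
\begin{equation}
    \widetilde{C}_{Comp} = \frac{t^2}{\epsilon}\left[(L_A + N_B)\Gamma + \frac{4 L_A \lambda_B^2}{N_B} + 4\lambda_B^2\right],
\end{equation}
where only the first two summands in the bracket depend on $N_B$. The constant term $4\lambda_B^2$ can be ignored for optimization purposes.

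Next, differentiating yields $\partial_{N_B}\widetilde{C}_{Comp} = (t^2/\epsilon)\bigl[\Gamma - 4 L_A \lambda_B^2 / N_B^2\bigr]$. Setting this to zero and solving produces $N_B^2 = 4 L_A \lambda_B^2 / \Gamma$, which matches the expression claimed. Taking the second derivative gives $(t^2/\epsilon)\cdot 8 L_A \lambda_B^2/N_B^3$, which is nonnegative for all admissible $N_B > 0$; since the function is convex in $N_B$ on the positive reals, the critical point is the global minimum on this domain.

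There is no real obstacle in the argument; the only subtlety worth flagging is the domain requirement $\Gamma > 0$, i.e., at least one pair $(A_i, A_j)$ or $(A_i, B_j)$ must have a nonvanishing commutator, as the lemma already notes. If $\Gamma = 0$ then the interior critical-point equation has no finite positive solution, $\widetilde{C}_{Comp}$ becomes monotone in $N_B$ (decreasing from the $4 L_A \lambda_B^2/N_B$ contribution), and the optimization must be handled at the boundary rather than by differentiation — which is precisely why the lemma restricts its conclusion to the $\Gamma > 0$ case.
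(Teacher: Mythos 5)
Your proof is correct and follows essentially the same route as the paper: expand the product to isolate the $N_B$-dependence, set the first derivative to zero, and verify convexity with the second derivative. (Your second-derivative constant $8 L_A \lambda_B^2 t^2/(\epsilon N_B^3)$ is in fact the correct one — the paper's stated $4$ is a harmless slip — and your boundary remark for $\Gamma = 0$ matches the caveat already in the lemma statement.)
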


\subsection{Hamiltonian Partitioning} \label{sec:first_order_partitioning}
Now that we have upper bounded the number of operator exponentials needed for a Composite channel to satisfy $\diamondnorm{\capU(t) - \widetilde{\capU}(t)} < \epsilon$ with a predetermined partition we move on to the question of how to decide a partition. There are many different ways one could determine a partitioning, for example by using a greedy algorithm or a spectral norm based decider, and here we propose a new method that is based on our derived cost function. Our method allows one to take into account information about the commutation structure between terms and spectral norm information to compute a cost function gradient that can be minimized in a gradient descent approach. We also show an analytic minima of this gradient that allows for a greedy approach. 

The first step we have to take is to determine how to parametrize our cost function $\widetilde{C}_{Comp}$. We introduce new parameters $w_i$ which represent a weighting of each term $H_i$ between the Trotter and QDrift partitions. Starting with our Hamiltonian $H = \sum_i h_i H_i$ we rewrite each term as a parametrized sum, $h_i H_i \mapsto w_i h_i H_i + (1-w_i) h_i H_i$. Then we place all terms $w_i h_i H_i$ in the Trotter partition $A = \sum_i w_i h_i H_i$ and all the terms $(1-w_i) h_i H_i$ into the QDrift partition $B = \sum_i (1-w_i) h_i H_i$. Now instead of determining the discrete placement of each term into $A$ or $B$ we only need to determine an appropriate weighting $w_i$ of each term between the two partitions. As we would like the coefficients to remain positive after this remapping we require $w_i \in [0, 1]$. We will first work out the gradient of the cost with respect to 
each weight and then discuss its behavior. 

First we write the non-integer cost function of the weighted partitioned Composite channel as
\begin{equation}
    \widetilde{C}_{comp} = (L_A + N_B) \frac{t^2}{\epsilon} \parens{\sum_{i,j} w_i w_j h_i h_j \norm{[H_i, H_j]} + \sum_{i, j} w_i (1 - w_j) h_i h_j \norm{[H_i, H_j]} + \frac{4 \parens{\sum_i (1-w_i) h_i}^2 }{N_B} }, \label{eq:first_order_cost_partition}
\end{equation}
note that we are leaving $N_B$ as a user-defined integer and not the optimized value as found before. Now we can easily take the derivative of Eq. 
\eqref{eq:first_order_cost_partition} with respect to the $m\ts{th}$ weight $w_m$, which is
\begin{equation}
    \frac{\partial \widetilde{C}_{comp}}{\partial w_m} = (L_A + N_B) \frac{t^2}{\epsilon} \parens{ h_m \sum_j h_j \norm{[H_j, H_m]}_\infty -  \frac{8 h_m \sum_i (1-w_i) h_i}{N_B} } \label{eq:first_order_weighting_1}.
\end{equation}
This is enough information to perform a gradient descent to find a optima from an initial partitioning. However, it is relatively easy to
find the exact optima for a single Hamiltonian term with respect to the other weightings. We can set Eq. \eqref{eq:first_order_weighting_1} equal to 0 and solve for $w_m$ which yields
\begin{equation}
    \frac{\partial \widetilde{C}_{comp}}{\partial w_m} =0 \implies w_m = 1 - \sum_{i \neq m} \frac{h_i}{h_m} \parens{\frac{\norm{[H_i, H_m]}_{\infty}}{8} - (1-w_i)}. \label{eq:opt_first_order_weights}
\end{equation}

There are a few pieces of intuition we can gather from these expressions. First, if a term $H_m$ commutes with every other term in the Hamiltonian then $[H_i, H_m] = 0$ and $w_m = 1 + \sum_{i \neq m} \nicefrac{h_i}{h_m} (1-w_i)$, which is always greater than 1. Since we restrict our weights to $[0,1]$ this implies that the $m\ts{th}$ term should always be fully placed in the Trotter channel. The other piece of intuition is that smaller terms are pushed more towards the QDrift side of the partitioning. This can be seen from Eq. \eqref{eq:opt_first_order_weights} while considering the limit as $h_m \to 0$. If we assume that $\norm{[H_i, H_m]} \geq (1-w_i)$ on average, then the expression becomes $w_m \to -\infty$ in this limit, which we stop at 0.

One major drawback to the above expressions is the dependency of each optimal weight $w_m$ on every other weight $w_i$. As there does not seem
to be a clear basis in which to decouple these weights, this means that Eq. \eqref{eq:opt_first_order_weights} can only be used to update individual
weights given an initialization. This is the same situation as the greedy approach as discussed above, but we note that our expression gives us some
intuition as for why which weights or partitionings are chosen.

\subsection{Comparison with Trotter and QDrift}\label{sec:first_order_improvements}
Now that we have analyzed the cost and given a partitioning scheme we would like to know under what conditions this Composite channel can lead to comparable errors with lower gate cost. Instead of aiming to show that a Composite channel will outperform either first-order
Trotter or QDrift for arbitrary Hamiltonians we instead illustrate a concrete setting in which we achieve guaranteed asymptotic improvements. In 
later sections we are able to show more generic conditions on which asymptotic improvements can be obtained for higher-order formulas. 

The final case we consider for the first-order Trotter Composite channel is designed to take full advantage of this richer commutator structure of the Composite channel over first-order Trotter. Consider a Hamiltonian $H$ that has a partitioning into $A$ and $B$ such that the following two conditions hold
\begin{enumerate}
    \item The number of non-zero commutators between terms in $A$ scales with the square root of $L_A$. Mathematically, 
    \begin{equation}
        \abs{ \set{(i,j) : \norm{ [ A_i, A_j]} \neq 0} } \coloneqq N_{nz}^2 \in o(L_A).
    \end{equation}
    \item The strength of the $B$ terms, $\lambda_B = \sum_i b_i$, is asymptotically less than the the maximum commutator norm divided by the number of terms in $A$
    \begin{equation}\label{eq:lambdaBbd}
        \lambda_B^2 \leq a_{max}^2 \parens{{N_{nz}^4}/L_A^2},
    \end{equation}
    where $a_{max} \coloneqq \max_i a_i$.
    \item The number of terms in the $A$ partition is vanishingly small compared to the total number of terms $L_A \in o(L)$.
\end{enumerate}
Next we can use the optimal $N_B$ value from~\Cref{lem:first_order_opt_nb} and~\eqref{eq:lambdaBbd} to show that
\begin{equation}
    N_B^{-1}\in \bigo{\frac{1}{\lambda_B}\sqrt{\parens{N_{nz}^2 a_{max}^2+ L_A a_{max} \lambda_B}} }= \bigo{\frac{N_{nz} a_{max}}{\lambda_B} }.
\end{equation}
Similarly we have
\begin{equation}
    N_B \in \bigo{ \frac{\lambda_B\sqrt{L_A}}{a_{max} N_{nz} } }.
\end{equation}

Thus \Cref{thm:trotter_cost} shows that the number of exponentials needed to perform the simulation is in
\begin{align}
    C_{Comp}&\in \bigo{ \frac{t^2}{\epsilon}\left(L_A+\frac{\lambda_B \sqrt{L_A}}{a_{max} N_{nz} } \right) \left( a_{max}^2 N_{nz}^2 + L_A a_{max} \lambda_B + \frac{\lambda_B N_{n_z} a_{max}}{\sqrt{L_A}}\right) } \nonumber\\
    &\in \bigo{ \frac{t^2 L_A}{\epsilon} \left({a_{max}^2 N_{nz}^2}{} \right) } \\
    &\in o\parens{\frac{t^2}{\epsilon} L_A^2 a_{max}^2}.
\end{align}
Were we to use the lowest order Trotter formula for this simulation, the cost would be
\begin{align}
    C_{trot} &\in \bigo{\frac{t^2}{\epsilon} \left(L N_{nz}^2 a_{max}^2 \right) } \\
    &\in o\parens{\frac{t^2}{\epsilon} L L_A a_{max}^2} \\
    &\subseteq \omega(C_{comp}).
\end{align}
In contrast the cost for QDrift is
\begin{equation}
    C_{QD} = \bigo{ \frac{t^2}{\epsilon} \left({L^2 a_{max}^2} \right) } \subseteq \omega(C_{comp}).
\end{equation}
This shows that there exist circumstances where the cost of the Composite channel scales better than either of the two methods that compose it.

%%%%%%%%%%%%%%%%%%%%%%%%%%%%%%%%%%%%%%%%%%%%%%%%%%%%%%%%%%%%%%%%%%%%%%%%%%%%%%%%%%%%%%%%%%%%%%%%%%%%%%%%%%%%%%%%%%%%%%%%%%%%%%%%%%%%%%%%%%%%
%%% higher-order FORMS
%%%%%%%%%%%%%%%%%%%%%%%%%%%%%%%%%%%%%%%%%%%%%%%%%%%%%%%%%%%%%%%%%%%%%%%%%%%%%%%%%%%%%%%%%%%%%%%%%%%%%%%%%%%%%%%%%%%%%%%%%%%%%%%%%%%%%%%%%%%%

\section{Higher-Order Trotter Formulas} \label{sec:higher_order_trotter}
We now move on from first-order Trotter formulas to arbitrary higher-order Trotter formulas. To analyze this case there are a few distinct differences with the first-order channels. The first is that we now have a choice for what order formula we would like to use for the outer-loop decomposition of $\widetilde{\capU}$. For example, a first-order decomposition would be $\widetilde{\capU}(t) = \widetilde{\capUB}(t) \circ \widetilde{\capUA}(t)$ and a second-order decomposition would be $\widetilde{\capU}(t) = \widetilde{\capUA}(t/2) \circ \widetilde{\capUB}(t/2) \circ \widetilde{\capUB}(t/2) \circ \widetilde{\capUA}(t/2)$. In general, we can choose any order formula we like but it is analytically convenient to match the innermost Trotter formula. The next difference is that the time scaling between QDrift, Trotter, and the outer-loop errors could all be of different orders in $t/r$ which leads to a non-analytically solvable polynomial in $r$. The last issue that we address is that the commutator structure is no longer quadratic with respect to the Hamiltonian spectral norms, so we cannot follow the term weighting partitioning scheme from the first-order case. We will follow the same organizational structure as the first-order case and first set up our definitions and bound the diamond distance error, then compute the number of $e^{i H_i t}$ queries, followed by developing a partitioning scheme, and finally discuss the cost comparisons between our Composite channel and its constituents. 

\subsection{Query Complexity} \label{sec:higher_order_complexity}
We first need to determine an error equation for the Composite channel which we will then use to bound the number of iterations needed. In the first
order Trotter formula we simply used the following overall evolution $\widetilde{\capU} = \widetilde{\capUB} \circ \widetilde{\capUA}$, but this is
not sufficient for the higher-order case. We now introduce a generalization of this which mimics the Trotter formula recursion.
\begin{definition}[Higher-Order Outer-Loop]\label{def:higher_order_loop}
Let $\capU(t) = \capUB(t) \circ \capUA(t)$ denote the first-order outer-loop decomposition. We define the second-order outer-loop decomposition as
\begin{equation}
    \capU^{(2)}(t) \coloneqq \capUA(\nicefrac{t}{2}) \circ \capUB(\nicefrac{t}{2}) \circ \capUB(\nicefrac{t}{2}) \circ \capUA(\nicefrac{t}{2}).
\end{equation}
This forms the base case for the recursive strategy for higher-order outer-loops defined as
\begin{equation}
    \capU^{(2k)}(t) \coloneqq \capU^{(2k - 2)}(u_k t)^2 \circ \capU^{(2k - 2)}((1 - 4 u_k ) t) \circ \capU^{(2k - 2)}(u_k t)^2,
\end{equation}
where $u_k \coloneqq 1 / \parens{4-4^{1/(2k - 1)}}$ and $\Upsilon \coloneqq 2 \cdot 5^{k-1}$. Note that we use the same recursive strategy to define approximations
to the overall time evolution channel where we put tildes on each of the implemented channels.
\end{definition}

To analyze the overall error we need to break down the overall channel into individual channels that we have known results for. We specifically
use the approach of using the same decomposition order for the outer-loop that we use for the innermost Trotter formula.
\begin{lemma} \label{lem:diamond_dist_higher_order}
Let $\capU$ denote the exact unitary time evolution channel and $\widetilde{\capU}_{2k}$ denote an implemented product formula according to Definition \ref{def:higher_order_loop} for a partitioning of $H$ into $A, B$. Then we have the following diamond distance upper bound
\begin{equation}
    \diamondnorm{\capU(t) - \widetilde{\capU}^{(2k)}(t)} \leq \Upsilon \diamondnorm{\capUA(t) - \widetilde{\capUA}(t)} + \Upsilon \diamondnorm{\capUB(t) - \widetilde{\capUB}(t)}  + 2 \norm{e^{i H t} - \trotterU{2k}(\set{A,B}, t)}.
\end{equation}
\begin{proof}
The proof follows from repeated applications of the triangle inequality as well as subadditivity of the diamond norm with respect to channel composition
\begin{align}
    \diamondnorm{\capU(t) - \widetilde{\capU}^{(2k)}(t)} &\leq \diamondnorm{\capU(t) - \capU^{(2k)}(t)} + \diamondnorm{\capU^{(2k)}(t) - \widetilde{\capU}^{(2k)}(t)} \\
    &\leq 2 \norm{e^{i H t} - \trotterU{2k}(\set{A,B}, t)} + \diamondnorm{\capU^{(2k)}(t) - \widetilde{\capU}^{(2k)}(t)} \\
    &= 2 \norm{e^{i H t} - \trotterU{2k}(\set{A,B}, t)} \nonumber \\
    & ~ ~ + \diamondnorm{\capUB(t_\Upsilon) \circ \capUA(t_\Upsilon) \circ \ldots \capUB(t_1) \circ \capUA(t_1) - \widetilde{\capUB}(t_\Upsilon) \circ \widetilde{\capUA}(t_\Upsilon) \circ \ldots \widetilde{\capUB}(t_1) \circ \widetilde{\capUA}(t_1)} \\
    &\leq 2 \norm{e^{i H t} - \trotterU{2k}(\set{A,B}, t)} + \max_{t_i} \Upsilon \parens{\diamondnorm{\capUA(t_i) - \widetilde{\capUA}(t_i)} + \diamondnorm{\capUB(t_i) - \widetilde{\capUB}(t_i)}}.
\end{align} 
Note that each $t_i$ is a constant multiple of $t$, at each layer in the recursive formula $t$ picks up either a factor of $1 - 4 u_k$, $u_k$ or $\nicefrac{1}{2}$. Since $u_k \leq \nicefrac{1}{2}$ we can say that $\abs{1 - 4 u_k} \leq 1$. This implies that we can upper bound each time interval as $t_i \leq t$, which is sufficient for our purposes. Plugging this in to the previous equation yields the expression in the statement.  
\end{proof}
\end{lemma}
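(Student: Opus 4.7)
The plan is to split the total error by triangle inequality into two conceptually distinct pieces: an \emph{outer-loop} error that accounts for approximating $e^{iHt}$ by a $2k$-th order Trotter-Suzuki decomposition into exact $A$ and $B$ subevolutions, and an \emph{implementation} error that accounts for replacing each exact subevolution by its approximation. Concretely, I would insert $\pm\,\capU^{(2k)}(t)$ inside the diamond norm and apply subadditivity, reducing the goal to bounding $\diamondnorm{\capU(t)-\capU^{(2k)}(t)}$ and $\diamondnorm{\capU^{(2k)}(t)-\widetilde{\capU}^{(2k)}(t)}$ separately.

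For the outer-loop piece, I would reuse the diamond-to-spectral reduction already carried out in the proof of Theorem~\ref{thm:trotter_cost}, Eqs.~\eqref{eq:diamond_to_spectral_start}--\eqref{eq:TS_intermediate_1}, noting that the argument is purely formal: it only uses that both channels are conjugations by unitaries. The ``terms'' are now just the two operator blocks $A$ and $B$ instead of individual $h_i H_i$, so the difference collapses to $2\norm{e^{iHt}-\trotterU{2k}(\set{A,B},t)}$, which is exactly the third term in the stated bound.

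For the implementation piece, I would unfold the recursive Definition~\ref{def:higher_order_loop} to expose $\capU^{(2k)}(t)$ as a composition of $\Upsilon=2\cdot 5^{k-1}$ alternating ``stages'' of the form $\capUB(t_i)\circ \capUA(t_i)$ (or with the order flipped inside a reflection), where each $t_i$ is a product of factors inherited from the recursion. The approximated channel $\widetilde{\capU}^{(2k)}(t)$ has exactly the same shape with each $\capUA(t_i)$ replaced by $\widetilde{\capUA}(t_i)$ and similarly for $B$. Subadditivity of the diamond norm under channel composition then turns this into a sum of $\Upsilon$ per-stage contributions, each of which is at most $\diamondnorm{\capUA(t_i)-\widetilde{\capUA}(t_i)}+\diamondnorm{\capUB(t_i)-\widetilde{\capUB}(t_i)}$.

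The one step that requires care, and is really the only obstacle, is converting the sum of per-stage errors at the various $t_i$ into $\Upsilon$ copies of the error at the full time $t$. Each $t_i$ is obtained from $t$ by a product of factors drawn from $\set{u_k,\,1-4u_k,\,1/2}$; since $u_k=1/\bigl(4-4^{1/(2k-1)}\bigr)\in(0,1/2]$, each such factor has absolute value at most $1$, so an induction on the recursion depth gives $|t_i|\leq t$. Combined with monotonicity of the per-stage diamond errors in the simulation time (implicit in the Trotter and QDrift bounds one would plug in later), this lets me replace each $t_i$ by $t$ inside the stagewise max and collect the $\Upsilon$ identical contributions, yielding the stated inequality. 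The rest of the argument is purely bookkeeping over the recursion.
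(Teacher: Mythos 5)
Your proposal is correct and follows essentially the same route as the paper: triangle inequality via $\pm\,\capU^{(2k)}(t)$, the diamond-to-spectral reduction for the outer-loop block error, subadditivity over the $\Upsilon$ stages, and the observation that every recursion factor in $\set{u_k,\,1-4u_k,\,1/2}$ has absolute value at most $1$ so each $t_i\leq t$. Your explicit remark that replacing $t_i$ by $t$ relies on monotonicity of the per-stage error in the simulation time is a point the paper leaves implicit, but it is the same argument.
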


Now that we have derived a basis for the Composite channel error we can provide an upper bound on the number of operator exponentials needed to accurately approximate the ideal time evolution channel. 

\assCost*

\begin{proof}
We first note that by using prior arguments, namely Eq. \eqref{eq:err_to_err_per_iter} and Lemma \ref*{lem:diamond_dist_higher_order}, it is sufficient to show that $\diamondnorm{\capU(\nicefrac{t}{r}) - \widetilde{\capU}(\nicefrac{t}{r})} \leq \nicefrac{\epsilon}{r}$ to satisfy the total diamond distance error bound of $\epsilon$. Using Lemma \ref{lem:diamond_dist_higher_order} as well as the Trotter and QDrift errors from Eqs \eqref{eq:trotter_diamond_error} and \eqref{eq:qdrift_diamond_distance}
\begin{align}
    \diamondnorm{\capU(\nicefrac{t}{r}) - \widetilde{\capU}(\nicefrac{t}{r})} &\leq \Upsilon \diamondnorm{ \capUA(\nicefrac{t}{r}) - \widetilde{\capUA}(\nicefrac{t}{r})} + \Upsilon \diamondnorm{\capUB(\nicefrac{t}{r}) - \widetilde{\capUB}(\nicefrac{t}{r})} + 2\norm{e^{iHt/r} - \trotterU{2k}(\set{A,B}, \nicefrac{t}{r})}  \\
    &\leq 2 \Upsilon \norm{e^{i A t/r} - \trotterU{2k}(A,\nicefrac{t}{r})} + 2\norm{e^{iHt/r} - \trotterU{2k}(\set{A,B}, \nicefrac{t}{r})} + \parens{\frac{t}{r}}^2 \frac{4 \Upsilon \lambda_B^2}{N_B} \\
    &\leq \parens{\frac{t}{r}}^{2k+1} \frac{4 \Upsilon^{2k+1}}{2k + 1} \parens{\Upsilon \alpha_{comm}(A, 2k) + \alpha_{comm}(\set{A,B}, 2k)} + \parens{\frac{t}{r}}^2 \frac{4 \Upsilon \lambda_B^2}{N_B}.
\end{align}

It will prove useful for brevity to define the following quantities
\begin{align}
    P(t) &\coloneqq t^{2k+1} \frac{4 \Upsilon^{2k+1}}{2k+1}\parens{\Upsilon \alpha_{comm}(A,2k) + \alpha_{comm}(\set{A,B},2k)} \label{def:p_of_t} \\
    Q(t) &\coloneqq t^2 \frac{4 \Upsilon \lambda_B^2}{N_B} \label{def:q_of_t} \\
\end{align}
where $P$ represents the ``product formula" error and $Q$ captures the QDrift error. We can then use results from Theorems \ref{thm:trotter_cost} and \ref{thm:QDrift}, as well as the upper bound $\alpha_{comm}(\set{A,B}, 2k) \leq \Upsilon \alpha_{comm}(\set{A,B}, 2k)$ to write the following expressions
\begin{align}
    \frac{P(t)^{1/2k}}{\epsilon^{1/2k}} &\leq C_{Trott}(H, t, \epsilon) \frac{(1-q_B)^{1/2k}}{\Upsilon^{1 - 1/2k}L} \\
    \frac{Q(t)}{\epsilon} &= C_{QD}(H, t, \epsilon) \frac{\Upsilon \lambda_B^2}{\lambda^2 N_B}.
\end{align}
This gives our error as $\diamondnorm{\capU(\nicefrac{t}{r}) - \widetilde{\capU}(\nicefrac{t}{r})} \leq \frac{P(t)}{r^{2k+1}} + \frac{Q(t)}{r^2}$. We now pivot to finding a good value for $r$ that satisfies this inequality. Since there are no generic analytic solutions to polynomials of the form $a x^n + b x^2 = c$ for an arbitrary positive integer $n$ we have to resort to lower bounds on $r$. In other words, we would like to have a computable lower bound $r_{min} < r$ such that the following inequalities are satisfied
\begin{equation}
    \frac{P(t)}{r^{2k+1}} + \frac{Q(t)}{r^2} \leq \frac{P(t)}{r_{min}^{2k+1}} + \frac{Q(t)}{r_{min}^2} \leq \frac{\epsilon}{r} \leq \frac{\epsilon}{r_{min}}. \label{eq:big_rmin_bounds}
\end{equation}
We can make use of the above by finding expressions relating $r_{min}$ to $Q$ and $P$. The first expression we will make use of is
\begin{align}
    \frac{P(t)}{r^{2k+1}} + \frac{Q(t)}{r^2} \leq \frac{P(t)}{r^2 r_{min}^{2k-1}} + \frac{Q(t)}{r^2} &\leq \frac{\epsilon}{r} \\
    \frac{1}{\epsilon} \parens{\frac{P(t)}{r_{min}^{2k-1}} + Q(t)} &\leq r, \label{eq:intermediate_rmin_bound}
\end{align}
which reduces our task to finding a bound on $r_{min}$ using just $P$ alone. This is feasible if we revisit Eq. \ref{eq:big_rmin_bounds} and use the assumption that $Q(t) \geq 0$ for all possible inputs
\begin{align}
    \frac{P(t)}{r_{min}^{2k+1}} + \frac{Q(t)}{r_{min}^2} &\leq \frac{\epsilon}{r_{min}} \\
    \frac{P(t)}{r_{min}^{2k+1}} &\leq \frac{\epsilon}{r_{min}} \\
    \parens{\frac{P(t)}{\epsilon}}^{1/2k} &\leq r_{min}. \label{eq:rminBound}
\end{align}
Plugging \ref{eq:rminBound} into \ref{eq:intermediate_rmin_bound} yields
\begin{equation}
    \parens{\frac{P(t)}{\epsilon}}^{1/2k} + \frac{Q(t)}{\epsilon} \leq r.
\end{equation}

Now that we have a lower bound for $r$ this gives us an expression for the query cost as
\begin{align}
    C_{comp}(A,B, t, \epsilon) &= \Upsilon (\Upsilon L_A + N_B) r \\
    &= \Upsilon (\Upsilon L_A + N_B) \ceil{\frac{P(t)^{1/2k}}{\epsilon^{1/2k}} + \frac{Q(t)}{\epsilon}}, 
\end{align}
plugging in equations \eqref{def:p_of_t} and \eqref{def:p_of_t}, along with their simplifications in terms of $C_{Trott}$ and $C_{QD}$ 
yields the expressions in the statement.
\end{proof}

\subsection{Probabilistic Partitioning} \label{sec:probabilistic_partitioning}
As mentioned in Section \ref{sec:first_order_partitioning} there are multiple ways to go about determining a partition. In this section we develop 
a novel probabilistic approach that can be used to compute expectation values of necessary parameters in our composite cost function. Importantly the probabilities
are computable in time $\Theta(L)$ and require only simple constants to be evaluated. We first discuss why we cannot simply adapt the methods from
Section \ref*{sec:first_order_partitioning} before developing our approach. 

The main difficulty with adopting our weighting approach from before is the combinatorial landmine $\frac{\partial \alpha_{comm}(\set{A,B}, 2k)}{\partial w_m}$. When considering $\frac{\partial \alpha_{comm}(\set{A,B}, 2k)}{\partial w_m}$ note that $\alpha_{comm}$ has $2k+1$ factors of variously indexed $w_{\gamma}$ with nested commutators. Taking the derivative with respect to one of the innermost commutator terms leaves for a sum over all possibilities in where this term could be placed and over the remaining operators in the nested commutator. For feasibility one would have to upper bound $\alpha_{comm}$ with spectral 1-norms, such as $\alpha_{comm}(A, 2k) \in \bigo{\lambda_A^{2k+1}}$, which we do following this discussion. Unfortunately, these upper bounds do not help analytically as $\alpha_{comm}(\set{A,B}, 2k) \in \bigo{\sum_l \lambda_{A}^{l} \lambda_B^{2k+1-l}}$. The resulting derivative is a polynomial over all possible powers of $w_i$ and $(1-w_i)$, which as we have seen before does not have analytically solvable roots in general. This however could be used as the basis for a numeric approach, where one could compute these gradients as a subroutine in an optimization scheme, but it is not useful for our discussion.

The approach we consider is by reinterpreting the weights $w_i$ from Section \ref{sec:first_order_partitioning} as probabilities $p_i$ for each term to end up in Trotter or QDrift. This means that the expected Hamiltonian we simulate with Trotter is $\expect{A} = \sum_i p_i h_i H_i$ and the expected QDrift partition is $\expect{B} = \sum_i (1-p_i) h_i H_i$. We also introduce the indicators variables $I_i^A$ which is 1 if the $i\ts{th}$ term ends up in Trotter and 0 if it is in QDrift. Similarly we can define $I_i^B = 1 - I_i^A$. One main benefit is that this now gives us a probability over all possible partitions, which allows us to compute expectation values for quantities such as $C_{comp}, P,$ and $Q$. We remark that computing the expected value of the cost $\expect{C_{comp}(A,B)}$, which is our main priority, is different than computing the cost of the expected partition $C_{comp}(\expect{A}, \expect{B})$. The expected partition is computationally no different than the weighting scheme mentioned above so we instead compute the expectation of costs over partitions, which is clearly defined and computationally tractable.  

The first task we have is to find a useful distribution for each of the $p_i$'s. To do so we start with our cost function and introduce a heuristic that will allow us to find computable values for $p_i$. The first objects we introduce bounds for are the $\alpha_{comm}$ terms, which will be of additional use later on. 
\begin{lemma} \label{lem:bounds_on_alpha_and_p}
Let $\alpha_{comm}(\set{A,B}, 2k), \alpha_{comm}(A, 2k)$ be defined as in Eq. \eqref{def:alpha_comm}. Then it holds that 
\begin{align}
    \alpha_{comm}(\set{A,B}, 2k) &\leq 2^{2k} \sum_{l = 1}^{2k} \lambda_A^{l} \lambda_B^{2k + 1 - l} \\
    \alpha_{comm}(A, 2k) &\leq 2^{2k} \lambda_A^{2k+1}.
\end{align}
These inequalities can be used to upper bound $P(t)$ as
\begin{align}
    P(t) &\leq t^{2k+1} \frac{2^{2k + 2} \Upsilon^{2k+1}}{2k+1} \parens{\Upsilon \lambda_A^{2k+1} + \sum_{l = 1}^{2k} \lambda_A^{l} \lambda_B^{2k + 1 - l}} \\
    &\leq 2 \frac{2k + \Upsilon}{2k + 1} (2 \Upsilon \lambda t)^{2k+1} \eqqcolon P_{max}(t), \label{def:pMax}
\end{align}
where we introduce the upper bound $P_{max}(t)$ for $P$ which will be of use later.

\begin{proof}
The bounds on the $\alpha_{comm}$ factors are straightforward and computed as follows from the triangle inequality and submultiplicativity of the spectral norm. First we compute the commutator for just the $A$ terms
\begin{align}
    \alpha_{comm}(A, 2k) &\leq \sum_{\gamma_{2k+1}=1}^{L_A} \ldots \sum_{\gamma_1 =1}^{L_A} ||[A_{\gamma_{2k+1}}, \ldots,[A_{\gamma_2}, A_{\gamma_1}]\ldots] || \\
    &\leq \sum_{\gamma_{2k+1}=1}^{L_A} \ldots \sum_{\gamma_1 =1}^{L_A} 2^{2k} \norm{A_{\gamma_1}} \ldots \norm{A_{\gamma_{2k+1}}} \\
    &= 2^{2k} \lambda_A^{2k+1}, \label{eq:alphaCommA}
\end{align}
which is easily generalized to the commutator structure between the $A$ and $B$ terms as
\begin{align}
    \alpha_{comm}(\set{A,B},2k) &\leq \sum_{l=1}^{2k} \sum_{i_1, \ldots, i_{l} = 1}^{L_A} \sum_{i_{l+1},\ldots, i_{2k+1} = 1}^{L_B} \norm{[A_{i_1}, [A_{i_2},\ldots, [B_{i_{2k}}, B_{i_{2k+1}}]\ldots] } \\
    &\leq \sum_{l=1}^{2k} \sum_{i_1, \ldots, i_{l} = 1}^{L_A} \sum_{i_{l+1},\ldots, i_{2k+1} = 1}^{L_B} 2^{2k} \norm{A_{i_1}} \norm{A_{i_2}} \ldots \norm{B_{i_{2k}}} \norm{B_{i_{2k+1}}} \\
    &\leq 2^{2k} \sum_{l=1}^{2k} \lambda_A^{l} \lambda_B^{2k+1 - l}, \label{eq:alphaCommAB}
\end{align}
and we note that at least one power of $\lambda_A$ and $\lambda_B$ must be present in each term as there must be a minimum of one term from $A$ and a minimum of one term from $B$ in the original nested commutator. We can then use the simple bounds $\lambda_A \leq \lambda$ and $\lambda_B \leq \lambda$ to get partition independent bounds. Plugging these into the definition of $P(t)$ from Eq. \eqref{def:p_of_t} yields the bounds in the statement. 
\end{proof}
\end{lemma}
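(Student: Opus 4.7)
The plan is to handle the two $\alpha_{comm}$ bounds first, since they are essentially independent estimates, and then substitute them into the definition of $P(t)$ from Eq.~\eqref{def:p_of_t} to obtain the advertised closed form. Throughout I would lean on the norm convention $\norm{H_i}=1$ along with the elementary inequality $\norm{[X,Y]}\leq 2\norm{X}\norm{Y}$ applied $2k$ times recursively. Iterating this bound on a depth $2k$ nested commutator produces a factor of $2^{2k}$ times the product of the spectral norms of the individual operators.

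For $\alpha_{comm}(A,2k)$, I would start from its definition, pull the coefficients $a_{\gamma_i}$ outside, and upper bound each nested commutator norm by $2^{2k}\norm{A_{\gamma_1}}\cdots\norm{A_{\gamma_{2k+1}}}=2^{2k}$. The remaining sum $\sum_{\gamma_1,\ldots,\gamma_{2k+1}}\prod_i a_{\gamma_i}$ factors as $\parens{\sum_i a_i}^{2k+1}=\lambda_A^{2k+1}$, yielding the first inequality. For $\alpha_{comm}(\set{A,B},2k)$ I would use the identity $\alpha_{comm}(\set{A,B},2k)=\alpha_{comm}(H,2k)-\alpha_{comm}(A,2k)-\alpha_{comm}(B,2k)$: the surviving tuples are exactly those with at least one $A$ and one $B$. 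Grouping them by $l$, the number of $A$-type factors (so $l$ ranges over $1,\ldots,2k$), the same triangle-inequality plus submultiplicativity argument produces a bound of $2^{2k}\lambda_A^l\lambda_B^{2k+1-l}$ per arrangement, and summing over $l$ gives the stated bound.

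For the $P(t)$ inequality, I would substitute the two commutator bounds into
\[
P(t)=t^{2k+1}\frac{4\Upsilon^{2k+1}}{2k+1}\parens{\Upsilon\,\alpha_{comm}(A,2k)+\alpha_{comm}(\set{A,B},2k)},
\]
factor out $2^{2k}$, and bound both $\lambda_A$ and $\lambda_B$ by $\lambda=\lambda_A+\lambda_B$. The geometric-looking sum $\sum_{l=1}^{2k}\lambda_A^l\lambda_B^{2k+1-l}$ then collapses to at most $2k\,\lambda^{2k+1}$, so the bracketed expression becomes $(\Upsilon+2k)\lambda^{2k+1}$. Combining the resulting constants $4\cdot 2^{2k}=2\cdot 2^{2k+1}$ with the $\Upsilon^{2k+1}$ and $\lambda^{2k+1}$ factors reassembles the clean form $2\cdot\tfrac{2k+\Upsilon}{2k+1}(2\Upsilon\lambda t)^{2k+1}$.

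There is really no deep obstacle here; the work is entirely in bookkeeping. The one point that requires a little care is ensuring the mixed sum is handled with the correct range $l\in\set{1,\ldots,2k}$ (excluding the pure $A$ and pure $B$ cases that have already been subtracted off) and that the final algebra matches the definition of $\Upsilon$ and the prefactor $4/(2k+1)$ in $P(t)$. The tightness of the mixed bound could in principle be improved by tracking the combinatorial multiplicity of each arrangement, but since the downstream application only needs the looser $P_{max}(t)$, the stated form suffices.
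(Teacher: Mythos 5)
Your proposal is correct and follows essentially the same route as the paper's proof: the triangle inequality plus $\norm{[X,Y]}\leq 2\norm{X}\,\norm{Y}$ iterated $2k$ times to produce the $2^{2k}$ factor, factorization of the coefficient sums into powers of $\lambda_A$ and $\lambda_B$, grouping of the mixed tuples by the number $l\in\set{1,\dots,2k}$ of $A$-type factors, and finally the crude bounds $\lambda_A,\lambda_B\leq\lambda$ to collapse everything to $P_{max}(t)$. The only quibble is with your closing aside: tracking the combinatorial multiplicity of the $\binom{2k+1}{l}$ placements of the $A$-factors would \emph{enlarge} the mixed bound rather than tighten it, but this remark is outside the proof proper and the argument you give matches the paper's.
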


Returning to the original task of finding useful probabilities $p_i$, we now discuss what heuristics we can introduce to help us towards this goal. The cost of a Composite channel was computed as $C_{comp} \leq (\Upsilon L_A + N_B) \parens{\frac{P(t)^{1/2k}}{\epsilon^{1/2k}} + \frac{Q(t)}{\epsilon}}$, which we can use to give an upper bound on the expected cost as
\begin{align}
    \expect{C_{comp}(A, B, t, \epsilon, 2k)} &\leq \expect{(\Upsilon L_A + N_B) \parens{\frac{P(t)^{1/2k}}{\epsilon^{1/2k}} + \frac{Q(t)}{\epsilon}}} \\
    &\leq (\Upsilon L + N_B) \parens{\frac{\expect{P(t)^{1/2k} } }{\epsilon^{1/2k}} + \frac{\expect{Q(t)}}{\epsilon}},
\end{align}
where we upper bounded $L_A$ with $L$ and the number of gates performed during the QDrift channel as $N_B$. This latter point is a bit more subtle than it first appears, as there could be a non-zero probability of having 0 or 1 terms in $B$, which would mean no matter how many QDrift samples $N_B$ we take we only need to apply either 0 or 1 exponential gate to implement them. 

This expression makes clear that the Composite channel cost is a balancing act between higher-order product scaling and QDrift scaling. The heuristic we introduce is that we would like these expected quantities to be somewhat of the same magnitude. This is motivated by the observation that if one simulation method has much higher cost than another method we can simply start our partitioning off completely in the smaller cost method. We can then shift probability mass to the higher cost channel until the two contributions to the Composite channel are comparable. By making $\expect{P(t)^{1/2k}} \epsilon^{-1/2k} \approx \expect{Q(t)} \epsilon^{-1}$ rigorous we will get useful expressions for the probabilities $p_i$. This is done in the following lemma. 
\assProb*
% \begin{lemma}\label{lem:prob_lemma}
% For a composite simulation of $H$ for time $t$ and error $\epsilon$ we have that choosing 
% \begin{equation}
%     1-p_i = \min\set{\frac{\lambda}{h_i L} \parens{ \sqrt{ N_B \parens{\frac{\epsilon}{\lambda t}}^{1-1/2k} \parens{\frac{2k+\Upsilon}{2k+1}}^{1/2k} \frac{\Upsilon^{1/2k}}{2^{1-1/k}} } - 1},1}\coloneqq \min \set{\frac{\chi}{h_i},1}, \label{eq:prob_def}
% \end{equation}
% along with choosing the number of QDrift samples $N_B$ to satisfy
% \begin{equation}
%     N_B \geq \parens{\frac{\lambda t}{\epsilon}}^{1 - 1/2k} \parens{\frac{2k + 1}{2k + \Upsilon}}^{1/2k} \frac{2^{1-1/k}}{\Upsilon^{1/2k}}
% \end{equation}
% guarantees the following
% \begin{enumerate}
%     \item $p_i \in [0,1]$
%     \item the expectation value of the coefficients in the QDrift partion satisfies $$\expect{\lambda_B} \leq \sqrt{\frac{P_{max}(t)^{1/2k} \epsilon^{1-1/2k} N_B}{4 t^2}}$$
% \end{enumerate}
% \end{lemma}
\begin{proof}
We begin by making rigorous our notion that $\expect{P(t)^{1/2k}} \epsilon^{-1/2k} \approx \expect{Q(t)} \epsilon^{-1}$. To do so we will equate an upper bound for $\expect{P(t)}$ with a lower bound for $\expect{Q}$. As we know that $\expect{P(t)^{1/2k}} \leq P_{max}(t)^{1/2k}$ from Lemma \ref{lem:bounds_on_alpha_and_p}, we only need to lower bound $\expect{Q}$. This is rather straightforward
\begin{equation}
    \expect{Q(t)} = \frac{4 \Upsilon \expect{\lambda_B^2} t^2}{N_B} \geq \frac{4 \expect{\lambda_B}^2 t^2}{N_B}, 
\end{equation}
which is given from the definition of $Q$ along with Jensen's Inequality and convexity of $f(x) = x^2$. To enforce our heuristic of approximately equal we set the lower bound on $\expect{Q}/\epsilon$ to be less than the upper bound on $\expect{P^{1/2k}}/\epsilon^{1/2k}$. This is straightforward as
\begin{align}
    \frac{4 \Upsilon \expect{\lambda_B}^2 t^2}{N_B \epsilon} &\leq \frac{P_{max}(t)^{1/2k}}{\epsilon^{1/2k}} \\
    \expect{\lambda_B} &\leq \frac{1}{2t} \sqrt{\frac{P_{max}(t)^{1/2k} N_B \epsilon^{1-1/2k} } {\Upsilon} } .
\end{align}

As $\lambda_B$ is simply the sum of the spectral norms for each term in the $B$ channel, we can write it as $\sum_i \expect{I_i^B h_i}$. By plugging in the expectation of $I_i^B$ and $P_{max}$ from Eq. \eqref{def:pMax} into the above we arrive at the following
\begin{equation}
    \sum_i (1-p_i) h_i \leq \lambda  \sqrt{N_B \parens{\frac{\epsilon}{\lambda t}}^{1 - 1/2k} \parens{\frac{2k + \Upsilon}{2k+1}}^{1/2k} \frac{\Upsilon^{1/2k}}{2^{1-1/k}}  }. \label{eq:prob_bound_from_lambdab}
\end{equation}
It is now be apparent that we would like $1 - p_i$ to be proportional to the RHS of the above. There are a few adjustments that need to be made for consistency, such as introducing a minimum, dividing by $L$ and subtracting a factor of $\lambda$, and including these gives us our final definition as
\begin{equation}
    1 - p_i \coloneqq \min \set{\frac{\lambda }{h_i L} \parens{\sqrt{N_B \parens{\frac{\epsilon}{\lambda t}}^{1 - 1/2k} \parens{\frac{2k + \Upsilon}{2k+1}}^{1/2k} \frac{\Upsilon^{1/2k}}{2^{1-1/k}}  } - 1} , 1}.
\end{equation}
For convenience, we define 
\begin{equation}
    \chi \coloneqq \frac{\lambda }{L } \parens{\sqrt{N_B \parens{\frac{\epsilon}{\lambda t}}^{1 - 1/2k} \parens{\frac{2k + \Upsilon}{2k+1}}^{1/2k} \frac{\Upsilon^{1/2k}}{2^{1-1/k}}  } - 1 } \label{def:chi}
\end{equation}
such that $1 - p_i = \min \set{\frac{\chi}{h_i}, 1}$. We will also define
\begin{equation}
    \probIndexSet \coloneqq \set{i : 1 - p_i < 1} \label{def:omega}
\end{equation}
which consists of the indices of terms that have a non-zero probability of being in the Trotter channel. The complement of $\probIndexSet$, denoted $\probIndexSet^C$, contains the remaining indices of $\set{1, 2, \ldots, L} \ \probIndexSet$ that are guaranteed to be placed in the QDrift channel. We now show that the definition of our probabilities leads to the correct bound in \eqref{eq:prob_bound_from_lambdab}
\begin{align}
    \sum_i (1 - p_i) h_i &= \sum_{i \in \probIndexSet} \frac{\chi}{h_i} h_i + \sum_{i \in \probIndexSet^C} h_i \\
    &= \chi |\probIndexSet| + \lambda_{\probIndexSet^C} \\
    &= \frac{\lambda |\probIndexSet|}{L} \parens{\frac{1}{2}\parens{\frac{4k + 2 \Upsilon}{2k+1}}^{1/4k} \sqrt{N_B 2^{1 + 1/2k} \Upsilon^{1/2k} \parens{\frac{\epsilon}{\lambda t}}^{1 - 1/2k}} - 1} + \lambda_{\probIndexSet^C} \label{eq:consistency_bound_1} \\
    &\leq \frac{\lambda}{2}\parens{\frac{4k + 2 \Upsilon}{2k+1}}^{1/4k} \sqrt{N_B 2^{1 + 1/2k} \Upsilon^{1/2k} \parens{\frac{\epsilon}{\lambda t}}^{1 - 1/2k}} + (\lambda_{\probIndexSet^C} - \lambda) \label{eq:consistency_bound_2} \\
    &\leq \lambda \sqrt{N_B \parens{\frac{\epsilon}{\lambda t}}^{1 - 1/2k} \parens{\frac{2k + \Upsilon}{2k+1}}^{1/2k} \frac{\Upsilon^{1/2k}}{2^{1-1/k}}  }
\end{align}
as desired. In the step from \eqref{eq:consistency_bound_1} to \eqref{eq:consistency_bound_2} we assumed that the innermost term is positive. This is necessary for our probabilities to be greater than 0, but is not necessarily true as $\frac{\epsilon}{\lambda t} \to 0$ for fixed $N_B$. Therefore by requiring $1 - p_i > 0$ we introduce the following lower bound on $N_B$
\begin{equation}
    N_B \geq \parens{\frac{\lambda t}{\epsilon}}^{1 - 1/2k} \parens{\frac{2k + 1}{2k + \Upsilon}}^{1/2k} \frac{2^{1 - 1/k}}{\Upsilon^{1/2k}}. \label{eq:nb_lower_bound}
\end{equation}
We have therefore satisfied both guarantees as outlined in the statement of the Theorem, which completes the proof.
\end{proof}

There are a few comments to be made about the behavior of some of the quantities introduced in the above lemma. First we look at the lower bound on 
$N_B$, which scales as $\Theta\parens{\parens{\frac{\lambda t}{\epsilon}}^{1 - 1/2k}}$. If we assume that $t, \epsilon$ are independent of $L$ then this overall scaling is sublinear with respect to $L$ as $\lambda \leq \max_i h_i L$, which indicates that by moving a term from Trotter to QDrift we do not automatically lose out in gate cost. If $t, \epsilon$ are dependent on $L$ then we cannot make the same guarantee. Second, if we parametrize $N_B$ to be within a constant factor of this lower bound $N_B = ( 1 + c)^2 \parens{\frac{\lambda t}{\epsilon}}^{1 - 1/2k} \parens{\frac{2k + 1}{2k + \Upsilon}}^{1/2k} \frac{2^{1 - 1/k}}{\Upsilon^{1/2k}}$, we can then simplify the expression for $\chi$ as 
\begin{align}
    \chi &= \frac{\lambda }{L} \parens{\sqrt{N_B \parens{\frac{\epsilon}{\lambda t}}^{1 - 1/2k} \parens{\frac{2k + \Upsilon}{2k+1}}^{1/2k} \frac{\Upsilon^{1/2k}}{2^{1-1/k}}  } - 1} \\
    &= c \frac{\lambda}{L}.
\end{align}
This is a nice simplification that we will use later and shows $\chi$ can be thought of as an ``average strength" of the overall Hamiltonian. This then gives the intuition that our probability definitions are somewhat analogous to an inverse importance sampling procedure with respect to the spectral norms $h_i$. We note that as $c \to \infty$, $1-p_i = \min \set{\frac{c \lambda}{L h_i} , 1} \to 1$, which implies that $p_i \to 0$. This means that as we increase the number of QDrift samples the probability distribution will put more probability mass into the QDrift partition. Intuitively, the resulting partition takes advantage of having a low-error QDrift simulation by placing more terms into it's partition. Contrast this behavior with the opposite limit, $c \to 0$. In this case we have a very noisy QDrift partition, due to the QDrift error scaling as $1/N_B$, and we see that the distribution in this case properly places more probability mass into the Trotter partition.

%%%%%%%%%%%%%%%%%%%%%%%%%%%%%%%%%%%%%%%%%%%%%%%%%%%%%%%%%%%%%%%%%%%%%%%%%%%%%%%%%%%%%%%%%%%%%%%%%%%%%%%%%%%%%%%%%%%%%%%%%%%%%%%%%%%%%%%%%%%%%%%%%%%%%%
%%%%%%%%%%%%%%%%%%%%%%%%%%%%%%%%%%%    COMPARISON SUBSECTION
%%%%%%%%%%%%%%%%%%%%%%%%%%%%%%%%%%%%%%%%%%%%%%%%%%%%%%%%%%%%%%%%%%%%%%%%%%%%%%%%%%%%%%%%%%%%%%%%%%%%%%%%%%%%%%%%%%%%%%%%%%%%%%%%%%%%%%%%%%%%%%%%%%%%%%

\subsection{Comparison with Trotter and QDrift}

In this section we shift our focus to analyzing when a two term Composite channel can outperform a simulation of just a Trotter or QDrift channel. 
The first result we show gives asymptotic bounds on certain quantities that result from a partitioning, such as $\lambda_B$, that yield asymptotic
improvements for the overall Composite channel query cost over either Trotter or QDrift. After analyzing the asymptotic cost for a 
predetermined partition we then investigate when the probabilistic partitioning scheme introduced above can yield gate cost improvements. We find that
in the extremal cases for the probabilities (i.e $p_i \to 0$ or $p_i \to 1$ for all $i$)
our resulting Composite channel exactly matches the QDrift and second-order Trotter costs, with minor constant factors for higher-order Trotter
channels. Finally, in the case where a Hamiltonian has spectral norms that decay exponentially, i.e. $h_i = 2^{-i}$, in expectation our partitioning
meets the requirements for the asymptotic improvements shown in Theorem \ref*{thm:higher_order_improvements_general} under the condition that
$t/\epsilon$ is such that $C_{Trott} = C_{QD}$. 

\subsubsection{Deterministic Partitioning Improvements} \label{sec:higher_order_improvements}

We first investigate the asymptotic bounds that a partition must satisfy in order to offer asymptotic improvements over either Trotter or QDrift. 

\assImprovements*

\begin{proof}
    We start with the expression for the Composite channel cost from Theorem \ref{thm:higher_order_cost_fixed}    
    \begin{align}
         C_{comp} \leq \Upsilon (\Upsilon L_A + N_B) \ceil{C_{Trott}(H, t, \epsilon, 2k) \frac{(1-q_B)^{1/2k}}{\Upsilon^{1-1/2k} L} + C_{QD}(H, t, \epsilon) \frac{\Upsilon}{N_B} \frac{\lambda_B^2}{\lambda^2}}, \label{eq:cost_comparison_intermediate} 
    \end{align}
    where $1-q_B$ quantifies the contribution of the $A$ partition to the overall nested commutator structure $\alpha_{comm}(H,2k)$. The most 
    straightforward way to proceed is to split this expression based on our two cases, $\beta > 1$ and $0 < \beta < 1$. 
    
    We first examine
    the case when $\beta > 1$, which implies $C_{Trott} < C_{QD}$. Our composite cost expression then can be written as
    \begin{equation}
        C_{comp} \leq C_{Trott} \parens{\Upsilon^{1 +1/2k} (1-q_B)^{1/2k} \frac{L_A}{L} + \Upsilon^3 \frac{\lambda_B^2}{\lambda^2} \frac{L_A}{N_B} C_{QD}^{1 - 1/\beta} + \Upsilon^{1/2k} (1-q_B)^{1/2k} \frac{N_B}{L} + \Upsilon^2 \frac{\lambda_B^2}{\lambda^2} C_{QD}^{1-1/\beta} } .
    \end{equation}
    Our goal is to show that if each of the terms above
    are in $o(1)$, then the sum is in $o(1)$ which implies $C_{comp} \in o(C_{Trott})$. Starting with the simplest term 
    \begin{equation}
        \Upsilon^{1 + 1/2k} \frac{(1-q_B)^{1/2k}L_A}{L} \in o(1),
    \end{equation}
    which holds when $L_A (1 - q_B)^{1/2k} \in o(L)$. The second term we analyze is the QDrift only term 
    \begin{equation}
        \Upsilon^2 \frac{\lambda_B^2}{\lambda^2} C_{QD}^{1-1/\beta}  = \parens{\frac{t^2}{\epsilon}}^{1-1/\beta} \frac{\lambda_B^2}{\lambda^{1/\beta}} \in o(1),
    \end{equation}
    where we used the assumption that $\lambda_B \in o\parens{\lambda^{1/2\beta} \parens{\frac{\sqrt{\epsilon}}{t}}^{1-1/\beta}}$ to reduce the expression to $o(1)$. We now move on to the remaining terms involving $N_B$. The first we can simplify is 
    \begin{equation}
        \Upsilon^{1/2k} (1-q_B)^{1/2k} \frac{N_B}{L} \in o(1),
    \end{equation}
    by the assumption that $N_B (1 - q_B)^{1/2k} \in o(L)$. The second term involving $N_B$ is
    \begin{equation}
        \Upsilon^3 \frac{ C_{QD}^{1 - 1/\beta} \lambda_B^2}{\lambda^2} \frac{L_A}{N_B} \in o(1),
    \end{equation}
    where we used the fact that $\frac{ C_{QD}^{1 - 1/\beta} \lambda_B^2}{\lambda^2} \in o(1)$ and the assumption that $N_B \in \Omega(L_A)$. Given that all four terms in the cost function expression \ref{eq:cost_comparison_intermediate} are $o(1)$, we have shown that $C_{comp} \in o(C_{Trott})$ for $\beta > 1$. 
    
    We can now move on to the case when $0 < \beta < 1$, which essentially repeats the above logic. In this situation the cost expression from Eq \eqref{eq:cost_comparison_intermediate} reduces to 
    \begin{equation}
        C_{comp} \leq C_{QD} \parens{\Upsilon^{1 +1/2k} \frac{(1-q_B)^{1/2k} C_{Trott}^{1 - 1/\beta} L_A}{L} + \Upsilon^3 \frac{\lambda_B^2}{\lambda^2} \frac{L_A }{N_B} + \Upsilon^{1/2k}  (1-q_B)^{1/2k} \frac{N_B }{L} C_{Trott}^{1-1/\beta} + \Upsilon^2 \frac{\lambda_B^2}{\lambda^2}}.
    \end{equation}
    Starting with the rightmost term we have
    \begin{equation}
        \Upsilon^2 \frac{\lambda_B^2}{\lambda^2} \in o(1),
    \end{equation}
    which is guaranteed by the assumption $\lambda_B \in o(\lambda)$. We then look at the only other term that does not include a factor of $N_B$,
    which reduces to
    \begin{equation}
        \Upsilon^{1 +1/2k} (1-q_B)^{1/2k}\frac{L_A}{L} C_{Trott}^{1 - 1/\beta} = \Upsilon^{2 + 1/2k} \parens{\frac{t^{1+1/2k}}{\epsilon^{1/2k}}}^{1 - 1/\beta} \frac{L_A}{L^{1/\beta}} \parens{\frac{\alpha_{comm}(A, 2k) + \alpha_{comm}(\set{A,B}, 2k)}{\alpha_{comm}^{1/\beta}(H, 2k)}}^{1/2k}.
    \end{equation}
    This can be shown to be in $o(1)$ given Assumption 1 for $ 0 < \beta < 1$. We can utilize the fact that this term is vanishing to reduce one of the other terms involving $C_{Trott}$ as 
    \begin{equation}
        \Upsilon^{1/2k} N_B \frac{(1-q_B)^{1/2k} C_{Trott}^{1-1/\beta}}{L} = \frac{N_B}{L_A} \cdot o(1) \in o(1),
    \end{equation}
    where we use the assumption that $N_B \in \bigo{L_A}$ for the last step. The last remaining term is rather straightforward
    \begin{equation}
        \Upsilon^2 \frac{\lambda_B^2}{\lambda^2} \frac{L_A}{N_B} = \frac{L_A}{N_B} \cdot o(1) \in o(1),
    \end{equation}
    where we used $N_B \in \Omega(L_A)$ as well as  $\lambda_B \in o( \lambda)$. Taken together, we have shown that $C_{comp} = C_{QD} \cdot o(1)$, implying $C_{comp} \in o(C_{QD})$ for $0 < \beta < 1$.

    This combined with the above proof for $\beta > 1$ implies that $C_{comp} \in o(\min \set{C_{QD}, C_{Trott}})$ if the assumptions in the
    statement are met, completing the proof.
\end{proof}

\begin{table}[h!]
\centering
\begin{tabular}{|c||c|c|}
    \hline
    & $C_{QD} > C_{Trott}$ & $C_{QD} < C_{Trott}$ \\
    \hline \hline
    $L_A \in$ & $o\parens{\frac{L}{(1-q_B)^{1/2k}}}$ & $o\parens{ L^{1/\beta} \parens{\frac{\epsilon^{1-1/\beta}}{t^{(2k+1)(1-1/\beta)}} \frac{\alpha_{comm}^{1/\beta}(H)}{\alpha_{comm}(A) + \alpha_{comm}(\set{A,B})} }^{1/2k}}$ \\
    \hline $\lambda_B \in$ & $o \parens{\lambda^{1/\beta} \parens{\frac{\sqrt{\epsilon}}{t}}^{1-1/\beta}}$ & $ o(\lambda)$ \\
    \hline (Lower Bound) $N_B \in$ & $\Omega(L_A)$ & $\Omega(L_A)$ \\
    \hline (Upper Bound) $N_B \in$ & $o\parens{\frac{L}{(1-q_B)^{1/2k}}}$ & $\bigo{L_A}$ \\
    \hline
\end{tabular}
\caption{Summary of asymptotic requirements for parameters of interest when $C_{QD} = C_{Trott}^{\beta}$ to yield $C_{Comp} \in o(\min \set{C_{QD}, C_{Trott}})$.}
\end{table}

Now that we have given bounds on a partitioning that can yield asymptotic improvements, we turn to the problem of finding a good value of $N_B$ that 
can satisfy these necessary assumptions. We first show a straightforward generalization of the first-order optimal value of $N_B$ to the higher-order Trotter case. 
\begin{lemma} \label{lem:optimal_nb_higher_order}
    Given a Hamiltonian $H$ with fixed partitions $A$ and $B$, the optimal number of QDrift samples $N_B$ is given as
    \begin{equation}
        N_B = 2 \lambda_B \sqrt{\parens{\frac{\Upsilon t}{\epsilon}}^{1-1/2k} L_A \parens{\frac{2k+1}{\Upsilon \alpha_{comm}(A) + \alpha_{comm}(\set{A,B})}}^{1/2k}}
    \end{equation}
    \end{lemma}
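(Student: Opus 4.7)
My plan is to mirror the proof of Lemma \ref{lem:first_order_opt_nb}: treat the query cost bound from Theorem \ref{thm:higher_order_cost_fixed} as a differentiable function of $N_B$ (dropping the ceiling), compute its derivative, set it to zero, and verify it is a minimum via the second derivative test.

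First I would introduce compact abbreviations. Let
\begin{equation}
T \coloneqq \frac{(\Upsilon t)^{1 + 1/2k} 4^{1/2k}}{\epsilon^{1/2k}} \parens{\frac{\Upsilon \alpha_{comm}(A, 2k) + \alpha_{comm}(\set{A,B}, 2k)}{2k+1}}^{1/2k}, \qquad Q \coloneqq \frac{4 \Upsilon \lambda_B^2 t^2}{\epsilon},
\end{equation}
so that the (relaxed) cost of Theorem \ref{thm:higher_order_cost_fixed} reads
\begin{equation}
\widetilde{C}_{comp}(N_B) = \Upsilon (\Upsilon L_A + N_B)\parens{T + \tfrac{Q}{N_B}} = \Upsilon\brackets{\Upsilon L_A T + Q + N_B T + \tfrac{\Upsilon L_A Q}{N_B}}.
\end{equation}
Only the last two terms depend on $N_B$, and the structure is identical to the first-order case: one piece linear in $N_B$ (from the product formula error per iteration times the QDrift gates per iteration) and one piece proportional to $1/N_B$ (from the QDrift variance per iteration times the number of Trotter gates).

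Next I would differentiate: $\partial_{N_B}\widetilde{C}_{comp} = \Upsilon\brackets{T - \Upsilon L_A Q / N_B^2}$. Setting this to zero and solving yields $N_B^2 = \Upsilon L_A Q / T$, i.e.
\begin{equation}
N_B = \sqrt{\frac{\Upsilon L_A Q}{T}}.
\end{equation}
The second derivative $\partial^2_{N_B}\widetilde{C}_{comp} = 2 \Upsilon^2 L_A Q / N_B^3 \geq 0$ confirms this is the minimum. The remaining work is purely algebraic bookkeeping: substitute the definitions of $T$ and $Q$, collect the factors of $\Upsilon$, $t$, $\epsilon$, and the $\alpha_{comm}$ terms, and simplify the resulting exponents. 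In particular, the ratio $Q/T$ contributes $\lambda_B^2 (\Upsilon t / \epsilon)^{1-1/2k}$ up to a factor of $4^{1-1/2k}$ coming from the $4$ in $Q$ and the $4^{1/2k}$ in $T$, and the $\alpha_{comm}$ factor inverts into $\parens{(2k+1)/[\Upsilon \alpha_{comm}(A) + \alpha_{comm}(\set{A,B})]}^{1/2k}$.

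The only real obstacle is keeping the $4$'s, $\Upsilon$'s, and $(1-1/2k)$ exponents straight so that the square root collapses into the stated form with the leading factor of $2\lambda_B$. (Tracing through, $4^{(1-1/2k)/2} = 2^{1-1/2k}$, which is bounded by the stated coefficient $2$ for all $k\geq 1$, so the lemma's equality should be read as the natural simplification of this bound.) Beyond that, the argument is a direct one-variable optimization, and convexity of $\widetilde{C}_{comp}$ in $N_B$ guarantees that the critical point found is the unique minimizer on $(0,\infty)$.
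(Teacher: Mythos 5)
Your proposal is correct and follows essentially the same route as the paper: the paper likewise treats the relaxed cost bound $\Upsilon(\Upsilon L_A + N_B)\parens{P(t)^{1/2k}\epsilon^{-1/2k} + Q(t)\epsilon^{-1}}$ as a function of $N_B$, sets its derivative to zero to obtain $N_B^2 \propto L_A Q/T$, and confirms a minimum via the positive second derivative, exactly as you do. Your bookkeeping of the constants (the $4^{(1-1/2k)/2}=2^{1-1/2k}$ prefactor and the powers of $\Upsilon$) is if anything slightly more careful than the paper's, which is loose with these same factors in its final substitution, so the small mismatch with the stated coefficient $2$ is not a gap in your argument.
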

\begin{proof}
    This is a rather straightforward result which follows the logic of the first-order results in section \ref{sec:first_order_improvements} so we will simply show some of the intermediate steps. Starting with the upper bound on $C_{Comp}$
    \begin{align}
        C &\leq \Upsilon (\Upsilon L_A + N_B) \parens{\frac{P(t)^{1/2k}}{\epsilon^{1/2k}} + \frac{Q(t)}{\epsilon}} ,\end{align}
    we then require that the upper bound be an optima with respect to $N_B$, namely that $\frac{\partial}{\partial N_B} C_{Comp} = 0$. Computing the derivative using the expressions for $Q(t)$ and $P(t)$, given in Eqs. \eqref{def:q_of_t} and \eqref{def:p_of_t} respectively, we find
        \begin{align}
        0 &= \Upsilon \parens{\frac{P(t)^{1/2k}}{\epsilon^{1/2k}} + \frac{Q(t)}{\epsilon}} - \Upsilon(\Upsilon L_A + N_B) \frac{Q(t)}{N_B^2} \\
        \frac{P(t)^{1/2k}}{\epsilon^{1/2k}} N_B^2 &= \Upsilon L_A \frac{4 \lambda_B^2 t^2}{\epsilon} \\
        N_B &= 2 \lambda_B \sqrt{\frac{\Upsilon L_A t^2}{\epsilon^{1-1/2k} P(t)^{1/2k}}}.
    \end{align}
    Plugging in the expression for $P(t)$ yields the stated result.
\end{proof}

Now that we have given an expression for an optimal value of $N_B$ given a partitioning, we briefly show that this $N_B$ value
satisfies the assumptions from Theorem \ref{thm:higher_order_improvements_general} for the case when $C_{QD} = C_{Trott}$. As mentioned
in the theorem statement, for this situation we note that $N_B \in \omega(L_A)$ and $N_B \in o(L)$ is sufficient to satisfy the requirements.
The equal cost condition yields the equation
\begin{align}
    C_{QD} &= C_{Trott} \\
    \parens{\frac{t}{\epsilon}}^{1-1/2k} &=  \frac{L \alpha_{comm}^{1/2k}(H, 2k)}{\lambda^2} \cdot O(1).
\end{align}
This then simplifies the optimal value for $N_B$ as
\begin{align}
    N_B &= 2 \lambda_B \sqrt{\parens{\frac{\Upsilon t}{\epsilon}}^{1-1/2k} L_A \parens{\frac{2k+1}{\Upsilon \alpha_{comm}(A, 2k) + \alpha_{comm}(\set{A,B}, 2k )}}^{1/2k}} \\
    &= \frac{\lambda_B}{\lambda} \sqrt{L_A L} (1 - q_B)^{1/2k} \cdot O(1).
\end{align}
This last expression is clearly seen to lie in both $\omega(L_A)$ and $o(L)$ if $\lambda_B \in o(\lambda)$ and $L_A \in o(L)$. This shows that 
at times $t$ and errors $\epsilon$ where QDrift and Trotter have equal query cost the optimal value of $N_B$ leads to a Composite channel that has
asymptotic improvements over either of the channels it composes.

%%%%%%%%%%%%%%%%%%%%%%%%%%%%%%%%%%%%%%%%%%%%%%%%%%%%%%%%%%%%%%%%%%%%%%%%%%%%%%%%%%%%%%%%%%%%%%%%%%%%%%%%%%%%%%%%%%%%%%%%%%%%%%%%%%
%%%%%%%%%%%%%%%%%%%%%%%%%%      PROBABILISTIC IMPROVEMENTS / EDGE CASES
%%%%%%%%%%%%%%%%%%%%%%%%%%%%%%%%%%%%%%%%%%%%%%%%%%%%%%%%%%%%%%%%%%%%%%%%%%%%%%%%%%%%%%%%%%%%%%%%%%%%%%%%%%%%%%%%%%%%%%%%%%%%%%%%%%
\subsubsection{Probabilistic Partitioning Performance} \label{sec:prob_limits}
We now focus on showing that a Composite channel with a probabilistic partitioning scheme according to Lemma \ref{lem:prob_lemma} can lead to gate cost savings compared to either Trotter or QDrift channels alone. The main expression we will be working with is
\begin{align}
    \expect{C_{comp}(H, t, \epsilon, N_B)} &\leq \expect{(L_A + N_B)\parens{\frac{P(t)^{1/2k}}{\epsilon^{1/2k}} + \frac{Q(t)}{\epsilon}}} \\
    &\leq \sqrt{\expect{L_A^2} \frac{\expect{P(t)^{1/k}}}{\epsilon^{1/k}}} + \sqrt{\expect{N_B^2} \frac{\expect{P(t)^{1/k}} }{\epsilon^{1/k}} } \nonumber \\
    &\quad + \sqrt{\expect{L_A^2} \frac{\expect{Q(t)^2}}{\epsilon^2}} + \sqrt{\expect{N_B^2} \frac{\expect{Q(t)^2}}{\epsilon^2}} \\
    &\leq \parens{\sqrt{\expect{L_A^2}} + \sqrt{\expect{N_B^2}}} \parens{\frac{\expect{P(t)^{1/2k}}}{\epsilon^{1/2k}} + \frac{\sqrt{\expect{Q(t)^2}}}{\epsilon}}, \label{eq:expected_cost_basic}
\end{align}
which follows from repeated applications of the Cauchy-Schwarz inequality and Jensen's inequality for $f(x) = x^{1/k}$ for positive $x$. The approach we take is to first give useful upper bounds on this expression and then show that these bounds saturate to the costs for Trotter and QDrift as we 
shift our probability mass towards either 0 or 1 for each term. Finally, we investigate a specific Hamiltonian that in expectation satisfies the
assumptions for asymptotic improvements from Theorem \ref{thm:higher_order_improvements_general}. 

In this section we will present some lemmas that provide useful upper bounds to the quantities used in the above expected cost expression. Since these proofs are relatively straightforward applications of the definitions given above we leave the proofs to appendix \ref{sec:appendix_a}. 

\begin{restatable}{lemma}{laSquared} \label{lem:la_squared}
Let $L_A$ denote the number of terms simulated with a higher-order Trotter formula in a Composite channel with the assumptions as defined in section \ref{sec:prelim}. Using a probabilistic partitioning scheme as defined in \ref{lem:prob_lemma}, then the second moment of $L_A$ obeys
$$\mathbb{E}(L_A^2) \le |\probIndexSet|^2 - |\probIndexSet| \sum_{i \in \probIndexSet} \frac{\chi}{h_i},$$
where $\chi$ is given in \ref{def:chi} and $\probIndexSet$ is the set of $p_i$ such that $1-p_i < 1$ as given in \ref{def:omega}.
\end{restatable}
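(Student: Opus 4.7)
The plan is to express $L_A$ as a sum of independent Bernoulli indicators and compute its second moment essentially by direct expansion. Let $I_i^A$ be the indicator that term $h_i H_i$ is placed in the Trotter partition, so $I_i^A \sim \mathrm{Bernoulli}(p_i)$, independently across $i$ by the construction of the probabilistic partitioning in Lemma \ref{lem:prob_lemma}. By the definition of $\probIndexSet$ in \eqref{def:omega}, every index $i \in \probIndexSet^C$ satisfies $1 - p_i = 1$ and so $p_i = 0$, meaning these indices contribute nothing to $L_A$. Hence it is equivalent to work with $L_A = \sum_{i \in \probIndexSet} I_i^A$.

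Next I would expand
$$L_A^2 = \sum_{i \in \probIndexSet} (I_i^A)^2 + \sum_{\substack{i,j \in \probIndexSet \\ i \neq j}} I_i^A I_j^A,$$
use $(I_i^A)^2 = I_i^A$, and invoke independence to get $\mathbb{E}[L_A^2] = \sum_{i \in \probIndexSet} p_i + \sum_{i \neq j} p_i p_j$ where the latter sum is restricted to $\probIndexSet$. Writing $S := \sum_{i \in \probIndexSet} p_i$ and using the crude bound $p_j \leq 1$ on each factor in the cross term gives
$$\sum_{\substack{i,j \in \probIndexSet \\ i \neq j}} p_i p_j = \sum_{i \in \probIndexSet} p_i \sum_{\substack{j \in \probIndexSet \\ j \neq i}} p_j \leq (|\probIndexSet| - 1)\, S,$$
so that $\mathbb{E}[L_A^2] \leq |\probIndexSet| \cdot S$.

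To finish, I would substitute the explicit form $1 - p_i = \chi/h_i$, valid precisely for $i \in \probIndexSet$ by \eqref{def:chi} and \eqref{def:omega}, which yields $S = |\probIndexSet| - \sum_{i \in \probIndexSet} \chi/h_i$. Multiplying by $|\probIndexSet|$ recovers the stated bound $|\probIndexSet|^2 - |\probIndexSet| \sum_{i \in \probIndexSet} \chi/h_i$. The main obstacle here is really just bookkeeping rather than a sharp analytic step: one must track that independence across terms comes from the construction of the partition, that the restriction to $\probIndexSet$ is what permits the substitution $p_i = 1 - \chi/h_i$, and that $p_j \leq 1$ alone is enough; no finer inequality such as Cauchy--Schwarz is needed, and the looseness of this bound is what makes the final inequality non-tight.
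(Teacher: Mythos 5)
Your proof is correct and follows essentially the same route as the paper's: expand the second moment of a sum of independent Bernoulli indicators, use $(I_i)^2 = I_i$ and independence, and bound the off-diagonal terms using only that the probabilities lie in $[0,1]$ before substituting $1-p_i = \chi/h_i$ on $\probIndexSet$. The only cosmetic difference is that you work directly with $L_A = \sum_{i\in\probIndexSet} I_i^A$ whereas the paper detours through $L_B = L - L_A$ and the $I_i^B$ indicators; the two computations discard exactly the same cross-term slack and land on the identical final bound.
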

This result makes intuitive sense, as $L_A$ can be no larger than $|\probIndexSet|$ as this is the set of terms that have non-zero probability of ending up in the Trotter partition. This expression then adjusts this upper bound by $|\probIndexSet| \sum_{i \in \probIndexSet} (1-p_i)$, which can be thought of as the expected number of terms that will end up being placed in QDrift from $\probIndexSet$. 

We know move on to bounding the expected contribution from the QDrift channel due to the error, namely $Q(t)$. 
\begin{restatable}{lemma}{qUpperBounds} \label{lem:expect_q_upper_bounds}
Let $Q$ denote the contribution to the error of a Composite channel, where $Q$ is defined in \ref{def:q_of_t}, with the standard assumptions from Section \ref{sec:prelim} and a probabilistic partitioning scheme as outlined in \ref{lem:prob_lemma}. The following upper bounds hold on the first two moments for $Q$
\begin{align}
    \expect{Q(t)} &\leq \frac{4 t^2}{N_B} \parens{\chi \lambda_{\probIndexSet} + \parens{\chi |\probIndexSet| + \lambda_{\probIndexSet^C}}^2} \label{eq:expect_q_upper_bound} \\
    \expect{Q(t)^2} &\leq \frac{4 t^2 \lambda^2}{N_B} \expect{Q(t)} \leq \frac{16 t^4 \lambda^2}{N_B^2} \parens{\chi \lambda_{\probIndexSet} + \parens{\chi |\probIndexSet| + \lambda_{\probIndexSet^C}}^2} \label{eq:expect_q_squared_upper_bound}
\end{align}
\end{restatable}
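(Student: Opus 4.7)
The plan is to reduce both bounds to an analysis of the random variable $\lambda_B = \sum_i I_i^B h_i$, where $I_i^B$ is a Bernoulli indicator with $\Pr[I_i^B = 1] = 1 - p_i$, and the indicators are mutually independent across $i$. Since $Q(t)$ depends on the random partition only through $\lambda_B^2$, the first moment bound reduces to controlling $\expect{\lambda_B^2}$ and the second moment bound reduces to controlling $\expect{\lambda_B^4}$.

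For $\expect{\lambda_B^2}$, I would expand the square and use independence of the $I_i^B$ together with $(I_i^B)^2 = I_i^B$ to obtain the standard variance-plus-mean-squared decomposition
\begin{equation}
\expect{\lambda_B^2} = \sum_i p_i(1-p_i) h_i^2 + \expect{\lambda_B}^2.
\end{equation}
I would then split the index set into $\probIndexSet$ and $\probIndexSet^C$. For $i \in \probIndexSet^C$ the probability $1 - p_i = 1$ saturates the minimum, so $p_i = 0$ and those terms drop out of the variance sum. For $i \in \probIndexSet$ I substitute $1 - p_i = \chi / h_i$, giving $p_i(1-p_i)h_i^2 = \chi h_i - \chi^2 \leq \chi h_i$, so the variance sum is at most $\chi \lambda_{\probIndexSet}$. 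For the mean, the same split gives $\expect{\lambda_B} = \chi |\probIndexSet| + \lambda_{\probIndexSet^C}$ directly from the definition of $\chi$. Multiplying by $4t^2/N_B$ (with the $\Upsilon$ factor handled by the definition of $Q$ in \eqref{def:q_of_t}) produces \eqref{eq:expect_q_upper_bound}.

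For the second moment, rather than computing $\expect{\lambda_B^4}$ directly, I would observe that $\lambda_B \leq \lambda$ holds \emph{deterministically} for every realization of the partition, since $\lambda_B$ is a sum of a subset of the nonnegative coefficients $h_i$. Consequently $\lambda_B^4 \leq \lambda^2 \lambda_B^2$ pointwise, so taking expectations yields $\expect{\lambda_B^4} \leq \lambda^2 \expect{\lambda_B^2}$. Translating back through the definition of $Q$ gives $\expect{Q(t)^2} \leq (4 t^2 \lambda^2 / N_B) \expect{Q(t)}$, and chaining with \eqref{eq:expect_q_upper_bound} produces \eqref{eq:expect_q_squared_upper_bound}.

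The computation is essentially routine once the right decomposition is chosen; the only mild subtlety is being careful about the contribution of $\probIndexSet^C$ (where the probability saturates) when computing the variance sum, and recognizing that $\expect{\lambda_B^2} \neq \expect{\lambda_B}^2$ so the variance term cannot be discarded. The deterministic bound $\lambda_B \leq \lambda$ is the key trick that avoids expanding a fourth moment of a sum of independent Bernoullis, which would otherwise be the most tedious part of the argument.
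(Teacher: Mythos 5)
Your proposal is correct and follows essentially the same route as the paper's proof: the paper expands $\expect{\lambda_B^2} = \sum_i h_i^2\expect{I_i^B} + \expect{\lambda_B}^2 - \sum_i h_i^2\expect{I_i^B}^2$, which is exactly your variance-plus-mean-squared decomposition (with the same splitting over $\probIndexSet$ and $\probIndexSet^C$ and the same discarding of the subleading $\chi^2$ terms), and it obtains $\expect{\lambda_B^4}\le \lambda^2\expect{\lambda_B^2}$ from the same pointwise bound $\lambda_B\le\lambda$. The one loose end --- that the definition $Q(t)=4\Upsilon t^2\lambda_B^2/N_B$ actually produces a prefactor $4\Upsilon t^2/N_B$ rather than the $4t^2/N_B$ in the statement --- is not resolved by your parenthetical remark, but this constant-factor inconsistency is already present in the paper's own statement and proof, so it is not a gap introduced by your argument.
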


The main intuition to be gained from here is that as $|\probIndexSet| \to 0$ the dominant contribution comes from $\lambda_{\probIndexSet^C}$ which is to be expected as this is the set of terms that are guaranteed to be placed in QDrift.

We now move on to the last term we need to bound, which is $P$. 
\begin{restatable}{lemma}{pUpperBound} \label{lem:p_upper_bound}
Let $P(t)$ denote the product formula error scaling as defined in \ref{def:p_of_t} for a Composite channel with the standard assumptions as defined in \ref{sec:prelim}. Using a probabilistic partitioning scheme as defined in \ref{lem:prob_lemma} then the following upper bounds hold for the expected value of $P(t)$
\begin{equation}
    \expect{P(t)} \leq \frac{(2 \Upsilon)^{2 + 2k}}{2k+1} t^{2k+1} \lambda^{2k} \parens{\lambda_{\probIndexSet} - \chi |\probIndexSet|} .\label{eq:p_upper_bound}
\end{equation}
\end{restatable}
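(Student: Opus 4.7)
The plan is to derive a pathwise (partition-dependent) upper bound on $P(t)$ that is linear in $\lambda_A$, and then take expectation using the probabilities defined in Lemma~\ref{lem:prob_lemma}. Substituting the two commutator estimates from Lemma~\ref{lem:bounds_on_alpha_and_p} into the definition \eqref{def:p_of_t} gives
\begin{equation*}
P(t) \leq t^{2k+1}\frac{4\Upsilon^{2k+1}}{2k+1}\Bigl(\Upsilon\, 2^{2k}\lambda_A^{2k+1} + 2^{2k}\sum_{l=1}^{2k}\lambda_A^l\lambda_B^{2k+1-l}\Bigr).
\end{equation*}
The key algebraic step is to relax $\alpha_{comm}(\set{A,B},2k) \leq \Upsilon\,\alpha_{comm}(\set{A,B},2k)$, valid since $\Upsilon \geq 1$; this factors a common $\Upsilon$ out of both terms and glues the two polynomial pieces into the single sum $\sum_{l=1}^{2k+1}\lambda_A^l \lambda_B^{2k+1-l}$.

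I would then pull $\lambda_A$ out of the combined sum and reindex to obtain $\lambda_A\sum_{l=0}^{2k}\lambda_A^l\lambda_B^{2k-l}$. Because each monomial in the binomial expansion of $(\lambda_A+\lambda_B)^{2k}$ carries a coefficient $\binom{2k}{l}\geq 1$, the unweighted sum is term-by-term dominated by $(\lambda_A+\lambda_B)^{2k} = \lambda^{2k}$. Collecting all constants yields the clean deterministic bound
\begin{equation*}
P(t) \leq \frac{(2\Upsilon)^{2+2k}}{2k+1}\, t^{2k+1}\lambda^{2k}\lambda_A,
\end{equation*}
which is random only through its linear dependence on $\lambda_A$.

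Finally, I would take expectation and use linearity together with $\lambda_A = \sum_i I_i^A h_i$ and $\expect{I_i^A} = p_i$ to write $\expect{\lambda_A} = \sum_i p_i h_i$. By the probabilities constructed in Lemma~\ref{lem:prob_lemma}, one has $p_i = 0$ for $i \in \probIndexSet^C$ and $p_i = 1 - \chi/h_i$ for $i \in \probIndexSet$, so only indices in $\probIndexSet$ contribute, each giving $h_i - \chi$. Summing yields $\expect{\lambda_A} = \lambda_{\probIndexSet} - \chi|\probIndexSet|$, and inserting this into the deterministic bound above recovers the claimed inequality.

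The only subtle point is the ordering of estimates: the relaxation by $\Upsilon$ on $\alpha_{comm}(\set{A,B},2k)$ has to be applied \emph{before} the binomial bound is invoked on the merged sum. Doing it the other way around introduces an extra additive $+1$ inside the constant, giving a prefactor like $2(2\Upsilon)^{2+2k}/(2k+1)$ rather than the tight $(2\Upsilon)^{2+2k}/(2k+1)$ appearing in the statement. Beyond that, all the manipulations are elementary.
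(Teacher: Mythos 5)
Your proof is correct and follows essentially the same route as the paper's: both insert the $\alpha_{comm}$ bounds from Lemma~\ref{lem:bounds_on_alpha_and_p}, reduce the resulting polynomial to the deterministic bound $\Upsilon\lambda^{2k}\lambda_A$ (the paper via $2k\le\Upsilon$ after factoring out $\lambda^{2k-1}$, you via $\Upsilon\ge 1$ plus the binomial theorem), and then evaluate $\expect{\lambda_A}=\lambda_{\probIndexSet}-\chi|\probIndexSet|$. The only cosmetic difference is that you apply the deterministic bound before taking the expectation whereas the paper takes the expectation first; the constants agree.
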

Even though this upper bound on $P$ is not tight at all we can still capture interesting edge case behavior. Note that as $\probIndexSet \to \varnothing$ this indicates that all of our terms are in the set $\probIndexSet^C$, meaning the probability they are in the Product Formula channel tends to 0, which is reflected as $\lambda_{\probIndexSet} \to 0$ and $|\probIndexSet| \to 0$ which implies that $\expect{P(t)} \to 0$ appropriately. We will investigate the other 
regime in which $1 - p_i \to 0$ shortly. 

Now that we have workable upper bounds we would like to make some direct comparisons between a Composite channel and both Trotter and QDrift. However before showing that this expression can outperform known upper bounds on Trotter or QDrift, we first would like to perform some consistency checks. What we need to show is that as we shift the probability mass over our partitions, the Composite channel cost should tend towards Trotter or QDrift, based on which direction we shift the mass. In other words, as $p_i \to 0$ for all $i$, we would like $\expect{C_{Comp}} \to C_{QD}$ and as $p_i \to 1$ we would like $\expect{C_{Comp}} \to  C_{Trott}$. In the following Theorem we verify this intuition precisely for the QDrift regime and second-order Trotter formulas. For higher-order formulas we are able to show that as $p_i \to 1$ the Composite channel cost is at most $1.12$ times worse than the Trotter cost.
 
\begin{theorem}[Expected Cost reduction to Trotter and QDrift in proper limits]
    Assume as inputs a Hamiltonian $H$, time $t$, and error $\epsilon$. Let $p_i$ denote the probability of assigning term $H_i$ of $H$ to the Trotter partition of a composite simulation channel, given by Lemma \ref{lem:prob_lemma}. Then as we vary the number of QDrift samples $N_B$ the expected cost of the Composite channel, $\expect{C_{Comp}}$, saturates towards the Trotter and QDrift gate costs in the respective limits for $N_B$. Specifically, as $N_B$ approaches its lower bound in Eq. \eqref{eq:nb_lower_bound} then $\expect{C_{Comp}} \to \parens{\Upsilon^{1/2k} / 2^{1 - 1/2k}} \cdot C_{Trott} \leq 1.12 C_{Trott}$ and as $N_B \to \infty$ then $\expect{C_{Comp}} \to C_{QD}$ exactly. Note that for $k=1$, or a second-order Trotter formula, $\Upsilon^{1/2k} / 2^{1-1/2k} = 1$.
\end{theorem}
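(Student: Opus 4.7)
The plan is to handle the two limits separately by tracking what happens to the quantity $\chi$ from Eq. \eqref{def:chi}. At $N_B = N_B^{min}$, plugging the lower bound of Eq. \eqref{eq:nb_lower_bound} into the definition of $\chi$ makes the expression inside the square root equal to exactly $1$, so $\chi = 0$. Hence $1-p_i = 0$ for every $i$ with $h_i > 0$, which places every term deterministically into the Trotter partition, with $\probIndexSet = \{1,\dots,L\}$, $\lambda_{\probIndexSet} = \lambda$, and $\lambda_{\probIndexSet^C} = 0$. Sending $N_B \to \infty$ instead forces $\chi \to \infty$, so eventually $\chi > h_i$ for all $i$ and $1-p_i = 1$ uniformly, giving $\probIndexSet = \varnothing$, $L_A = 0$ almost surely, and $\lambda_B = \lambda$. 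The rest of the proof proceeds by substituting these limiting partitions into the expected cost bound \eqref{eq:expected_cost_basic} and applying Lemmas \ref{lem:la_squared}, \ref{lem:expect_q_upper_bounds}, and \ref{lem:p_upper_bound}.

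For the QDrift limit I would use the degenerate partition directly. Since $L_A = 0$ almost surely, $\sqrt{\expect{L_A^2}} = 0$ eliminates the first summand. Emptiness of $A$ makes $\alpha_{comm}(A,2k) = \alpha_{comm}(\{A,B\},2k) = 0$ so $P(t) = 0$, and only $\sqrt{\expect{N_B^2}} \cdot \sqrt{\expect{Q(t)^2}}/\epsilon = N_B \cdot \sqrt{\expect{Q(t)^2}}/\epsilon$ survives. Applying Lemma \ref{lem:expect_q_upper_bounds} with $\chi \lambda_{\probIndexSet} = 0$ and $\lambda_{\probIndexSet^C} = \lambda$ gives $\sqrt{\expect{Q(t)^2}} \leq 4 t^2 \lambda^2/N_B$, the $N_B$ factors cancel, and the bound equals exactly $4\lambda^2 t^2/\epsilon = C_{QD}$, matching Theorem \ref{thm:QDrift}.

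For the Trotter limit I would substitute $\chi = 0$ and $\probIndexSet = \{1,\dots,L\}$. Now $\sqrt{\expect{L_A^2}} = L$ (from Lemma \ref{lem:la_squared}), $\sqrt{\expect{N_B^2}} = N_B^{min}$, and $\sqrt{\expect{Q(t)^2}} = 0$ because $\lambda_{\probIndexSet^C} = 0$. Jensen's inequality applied to the concave map $x \mapsto x^{1/2k}$ combined with Lemma \ref{lem:p_upper_bound} yields $\expect{P(t)^{1/2k}} \leq \bigl((2\Upsilon)^{2+2k} t^{2k+1} \lambda^{2k+1}/(2k+1)\bigr)^{1/2k}$. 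I would then compare the resulting product $(L + N_B^{min}) \cdot \expect{P(t)^{1/2k}}/\epsilon^{1/2k}$ against $C_{Trott}$ from Theorem \ref{thm:trotter_cost} after substituting the loose spectral-norm bound $\alpha_{comm}(H,2k) \leq 2^{2k} \lambda^{2k+1}$ from Lemma \ref{lem:bounds_on_alpha_and_p}, so that both expressions are written in the same variables. The powers of $\lambda^{1+1/2k}$, $t^{1+1/2k}$, $(2k+1)^{-1/2k}$, and $4^{1/2k}$ cancel, and the remaining $\Upsilon$ and $2$ factors simplify to the claimed constant $\Upsilon^{1/2k}/2^{1-1/2k}$. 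The numerical bound $1.12$ then follows from evaluating this constant at $\Upsilon = 2 \cdot 5^{k-1}$ for integer $k \geq 1$, giving exactly $1$ at $k = 1$.

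The hardest part will be the algebraic bookkeeping in the Trotter limit, where factors arising from Lemma \ref{lem:p_upper_bound}'s $(2\Upsilon)^{2+2k}$ prefactor must be reconciled with the factors from the $\alpha_{comm}(H,2k) \leq 2^{2k} \lambda^{2k+1}$ substitution into $C_{Trott}$. The identity $\Upsilon^{1/2k}/2^{1-1/2k} = (2\Upsilon)^{1/2k}/2$ gives a cleaner intermediate form for aligning these terms. A secondary subtlety is verifying that $N_B^{min}$ contributes only at sub-leading order relative to $L$; this follows because $N_B^{min}$ is determined by $t$, $\epsilon$, $\lambda$, and $k$ but is independent of $L$, so $N_B^{min}/L \to 0$ in the asymptotic comparison and the prefactor converges to $L$.
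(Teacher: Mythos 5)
Your overall strategy is the same as the paper's: drive the probabilities to their extreme values via $\chi$, substitute the degenerate partitions into the moment bounds of Lemmas \ref{lem:la_squared}, \ref{lem:expect_q_upper_bounds}, and \ref{lem:p_upper_bound}, and compare against the $\alpha_{comm}$-relaxed forms of $C_{Trott}$ and $C_{QD}$. Your QDrift limit is correct and matches the paper's computation (the $N_B$ factors cancel against $\sqrt{\expect{Q(t)^2}} \leq 4t^2\lambda^2/N_B$, leaving exactly $4\lambda^2 t^2/\epsilon$), and your identification of the constant $\Upsilon^{1/2k}/2^{1-1/2k}$ and the cancellation structure in the Trotter limit is the right plan.

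There is, however, a genuine gap in your treatment of $\sqrt{\expect{N_B^2}}$ in the Trotter limit. You keep $\sqrt{\expect{N_B^2}} = N_B^{min}$ and argue it is negligible because "$N_B^{min}$ is independent of $L$, so $N_B^{min}/L \to 0$." This fails on two counts. First, the theorem is a statement about a \emph{fixed} Hamiltonian, $t$, and $\epsilon$ as $N_B$ approaches its lower bound; there is no $L \to \infty$ limit being taken, so an "asymptotic comparison" in $L$ cannot be invoked to discard a term, and with a prefactor of $L + N_B^{min}$ rather than $L$ you do not obtain the exact constant $\Upsilon^{1/2k}/2^{1-1/2k}$, nor exact equality at $k=1$. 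Second, even if one did take $L \to \infty$, $N_B^{min} \propto (\lambda t/\epsilon)^{1-1/2k}$ depends on $\lambda = \sum_i h_i$, which generically grows with $L$, so the premise is false. The paper closes this gap differently: it observes that when every $p_i \to 1$ the QDrift partition $B$ is empty with probability tending to $1$, and an empty (or singleton) partition requires $0$ (or $1$) operator exponentials regardless of how many samples $N_B$ are nominally taken; hence $\expect{N_B^2} = \prob{|B|=1}\cdot 1 + \prob{|B|>1}\cdot N_B^2 \to 0$. You need this observation (or an equivalent accounting of the \emph{realized} QDrift gate count) to make the prefactor converge to $L$ and obtain the claimed exact saturation.
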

\begin{proof}
The first limit we tackle is the QDrift regime. Dropping the specific probabilities that were defined in Lemma \ref{lem:prob_lemma} we consider adjusting each probability $p_i \to 0$ for all $i$. We have three pieces we then need to upper bound: $P$, $Q$, and $L_A$. We can ignore $\sqrt{\expect{N_B^2}}$ as there is no probability we do not have an empty or near-empty QDrift channel. The first observation is that as each $p_i \to 0$ and as we set more and more probabilities to $1 - p_i = 1$ then the size of $|\probIndexSet|$ will decrease accordingly. We consider the case where $|\probIndexSet| \to 0$ as this represents the QDrift only channel. This limit then implies that $\lambda_{\probIndexSet} \to 0$ and $\lambda_{\probIndexSet^C} \to \lambda$. This allows us to vastly simplify many of our bounds
\begin{align}
    \expect{L_A^2} &\leq |\probIndexSet|^2 - |\probIndexSet| \sum_{i \in \probIndexSet} \frac{\chi}{h_i} \to 0 \\
    \expect{P(t)^{1/2k}} & \frac{(2 \Upsilon)^{1 + 1/k}}{(2k+1)^{1/2k}} \lambda (\lambda_{\probIndexSet} - \chi |\probIndexSet|)^{1/2k} \to 0 \\
    \expect{Q(t)^2} &\leq \frac{16 t^4 \lambda^2}{N_B^2} \parens{\chi \lambda_{\probIndexSet} + \parens{\chi |\probIndexSet| + \lambda_{\probIndexSet^C}}^2} \to \frac{16 t^4 \lambda^4}{N_B^2}.
\end{align}
By plugging these into the cost expression from Eq. \ref{eq:expected_cost_basic} we get the following limit
\begin{align}
    \expect{C_{comp}(H, t, \epsilon, N_B) | p_i \to 0 \text{ } \forall i} &\leq \parens{\sqrt{\expect{L_A^2}} + \sqrt{\expect{N_B^2}}} \parens{\frac{\expect{P(t)^{1/2k}}}{\epsilon^{1/2k}} + \frac{\sqrt{\expect{Q(t)^2}}}{\epsilon}} \\
    &\to \frac{4 t^2 \lambda^2}{\epsilon},
\end{align}
which exactly matches the upper bounds on the cost of a QDrift only channel. We remind the reader of the extraneous factor of 2 that we incur from restricting $\epsilon \in (0, \ln(2) \lambda t)$ as opposed to Campbell's original result \cite{qdrift}. 

Next we look at the limit in which $p_i \to 1$ and we are performing a Trotter only channel. By using the probability distributions from Lemma \ref{lem:prob_lemma} it is most straightforward to consider this limit by using a parametrized value for $N_B$ 
\begin{equation}
    N_B(c) = (1 + c)^2 \parens{\frac{\lambda t}{\epsilon}}^{1 - 1/2k} \frac{1}{4(2 \Upsilon)^{1 + 1/2k}} \parens{\frac{2k + 1}{4k + 2 \Upsilon}}^{1/2k},
\end{equation}
which was shown to lead to $\chi = c \frac{\lambda}{L}$. For probabilities which are in $\probIndexSet$, we can say $p_i = 1 - c \frac{\lambda}{L h_i}$. This then implies that by sending $c \to 0$ we appropriately have $\chi \to 0$ and $p_i \to 1$. We can use this to compute the limiting upper bound
on $\expect{L_A^2}$ as 
\begin{equation}
    \expect{L_A^2} \leq |\probIndexSet|^2 - |\probIndexSet| \sum_i \nicefrac{\chi}{h_i} \to L^2. 
\end{equation}
The contribution due to the QDrift error is also straightforward in the limiting case as $c \to 0$
\begin{equation}
    \expect{Q(t)^2} \leq \frac{16 t^4 \lambda^2}{N_B^2} \parens{\chi \lambda_{\probIndexSet} + \parens{\chi |\probIndexSet| + \lambda_{\probIndexSet^C}}^2} \to 0,
\end{equation}
as $\chi \to 0$ and $\lambda_{\probIndexSet^C} \to 0$. The remaining simple upper bound is for $P$ which is as follows
\begin{equation}
    \expect{P(t)^{1/2k}} \leq \frac{(2 \Upsilon)^{1+1/k}}{(2k+1)^{1/2k}} t^{1+1/2k} \lambda (\lambda_{\probIndexSet} - \chi |\probIndexSet|)^{1/2k} \to \frac{(2 \Upsilon)^{1+1/k}}{(2k+1)^{1/2k}} (\lambda t)^{1+1/2k}.
\end{equation}
The last quantity we have to bound is $\sqrt{\expect{N_B^2}}$, which up until now has been harmlessly upper bounded by $N_B$. However we now have a very high probability, tending to 1, of not having any terms in our QDrift channel. This means that the expected number of exponential gates performed during the QDrift sampling procedures should tend to 0. We make this rigorous through a simple definition of expectation
\begin{align}
    \expect{N_B^2} &= \prob{|B| = 0} \cdot 0 + \prob{|B| = 1} \cdot 1 + \prob{|B| > 1 } \cdot N_B^2 \\
    &= \sum_i \prod_{j \neq i} (1 - p_i) p_j + (1 - \prob{|B| = 0} - \prod_{j \neq i} (1 - p_i) p_j) \cdot N_B^2 \\
    &\to 0 + (1 - \prod_i p_i) \cdot N_B^2 \\
    &\to 0.
\end{align}
Intuitively, the above is simply a result of $\prob{|B| = 0} \to 1$ as $p_i \to 1$, resulting in all other terms tending to 0. This gives the final upper bound on our expected cost as
\begin{align}
    \expect{C_{comp}(H, t, \epsilon, N_B) | p_i \to 1 \text{ } \forall i} &\leq \parens{\sqrt{\expect{L_A^2}} + \sqrt{\expect{N_B^2}}} \parens{\frac{\expect{P(t)^{1/2k}}}{\epsilon^{1/2k}} + \frac{\sqrt{\expect{Q(t)^2}}}{\epsilon}} \\
    &\to \frac{(L \lambda t)^{1 + 1/2k}}{\epsilon^{1/2k}} \frac{(2 \Upsilon)^{1+1/k}}{(2k+1)^{1/2k}}.
\end{align}
We note that by using similar upper bounds on the factor of $\alpha_{comm}$ for the Trotter gate cost yields the following
\begin{align}
    C_{Trott} &\leq \frac{L t^{1+1/2k}}{\epsilon^{1/2k}} \alpha_{comm}^{1/2k}(H, 2k) \frac{\Upsilon^{1 + 1/2k} 4^{1/2k}}{(2k+1)^{1/2k}} \\
    &= \frac{L (\lambda t)^{1+1/2k}}{\epsilon^{1/2k}} \frac{\Upsilon^{1+1/2k} 2^{2 + 1/2k}}{(2k+1)^{1/2k}}.
\end{align}
The simple ratio $\expect{C_{comp}} / C_{trott} = \Upsilon^{1/2k} 2^{-1 + 1/2k} = 2^{-1 + 1/k} 5^{1/2 - 1/2k}$ shows that for $k = 1$ we exactly
match the Trotter cost. For higher-orders this constant factor is negligible and is upper bounded by $1.12$. 
\end{proof}

%%%%%%%%%%%%%%%%%%%%%%%%%%%%%%%%%%%%%%%%%%%%%%%%%%%%%%%%%%%%%%%%%%%%%%%%%%%%%%%%%%%%%%%%%%%%%%%%%%%%%%%%%%%%%%%%%%%%%%%%%%%%%%%%%%
%%%%%%%%%%%%%%%%%%%%%%%%%%      EXPONENTIALLY DECAYING HAMILTONIAN TERMS
%%%%%%%%%%%%%%%%%%%%%%%%%%%%%%%%%%%%%%%%%%%%%%%%%%%%%%%%%%%%%%%%%%%%%%%%%%%%%%%%%%%%%%%%%%%%%%%%%%%%%%%%%%%%%%%%%%%%%%%%%%%%%%%%%%
\subsubsection{Exponentially Decaying Hamiltonian Terms}

Now that we have shown that the probabilistic partitioning scheme leads to a composite cost that saturates to $C_{Trott}$ and $C_{QD}$ in the appropriate regime, we will analyze a situation in the middle of these two limits. This is rather difficult to do in generality, so we will restrict our attention to Hamiltonians that have exponentially decaying spectral norms for each term, i.e. $h_i = 2^{-i}$ when sorted by spectral norm, and at the crossover time $t$ such that $C_{QD} = C_{Trott}$. For this analysis we would like to show that the parameters of the partitioning, namely $L_A$, $N_B$, and $\lambda_B$, satisfy the constraints for asymptotic improvements given by Theorem \ref{thm:higher_order_improvements_general}. At the crossover time these constraints are simply $L_A \in o(L)$, $\lambda_B \in o(\lambda)$ and $N_B \in \Theta(L_A)$. We prove that in expectation these parameters will satisfy these constraints, then by using simple tail bounds we show that the probability the parameters fall into an asymptotically bad regime is vanishingly small.

\begin{theorem} \label{thm:exponential_decay}
    Given a Hamiltonian $H = \sum_i h_i H_i$ with exponentially decaying spectral norms $h_i = 2^{-i}$, a small commutator structure $\alpha_{comm} \in \Theta \parens{\frac{\log_2 L}{L}}$, time $t$, and error $\epsilon$ such that $C_{Trott}(H, t, \epsilon) = C_{QD}(H,t,\epsilon)$, then the probabilistic partitioning scheme given by Lemma \ref{lem:prob_lemma} yields a partition that satisfies the asymptotic requirements of Theorem \ref{thm:higher_order_improvements_general} with high probability.
\end{theorem}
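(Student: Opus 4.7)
The plan is to verify each of the three hypotheses of Theorem \ref{thm:higher_order_improvements_general} in the $\beta = 1$ regime (namely $L_A \in o(L)$, $\lambda_B \in o(\lambda)$, and $N_B \in \Theta(L_A)$) for the exponentially decaying Hamiltonian, and then promote those expectation-level statements to high-probability statements via standard concentration inequalities.

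First I would evaluate the structural parameters of the probabilistic partitioning of Lemma \ref{lem:prob_lemma} under $h_i = 2^{-i}$. Since $\lambda = \sum_{i=1}^L 2^{-i} \to 1$ as $L \to \infty$, the identification $\chi = c \lambda / L$ (available under the $N_B = (1+c)^2 \cdot (\text{lower bound})$ parametrization discussed after Lemma \ref{lem:prob_lemma}) implies that $\probIndexSet = \set{i : h_i > \chi}$ has size $|\probIndexSet| = \Theta(\log L)$. This immediately bounds $\expect{L_A} = \sum_{i \in \probIndexSet} p_i \leq |\probIndexSet| \in \Theta(\log L) \subseteq o(L)$. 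For the expected weight in $B$, I split it into the deterministic tail $\lambda_{\probIndexSet^C} = \sum_{i > |\probIndexSet|} 2^{-i} = \Theta(1/L)$ and the random contribution from $\probIndexSet$ with expectation $\chi |\probIndexSet| = \Theta(\log L / L)$, giving $\expect{\lambda_B} \in \Theta(\log L / L) \subseteq o(\lambda)$.

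Next I would promote these expectation bounds to high-probability statements. Since $L_A = \sum_{i \in \probIndexSet} I_i^A$ is a sum of independent Bernoulli indicators, a Chernoff bound yields $L_A \leq 2 \expect{L_A} = O(\log L)$ with probability at least $1 - L^{-\Omega(1)}$. Similarly, $\lambda_B$ decomposes into a deterministic piece plus a sum of independent bounded variables $I_i^B h_i$ with $h_i \leq 1/2$, so a Bernstein- or Hoeffding-style bound yields $\lambda_B \leq 2 \expect{\lambda_B} = o(\lambda)$ with high probability. A union bound preserves high probability, giving $L_A \in o(L)$ and $\lambda_B \in o(\lambda)$ simultaneously with probability $1 - L^{-\Omega(1)}$.

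Finally I would set $N_B$ to a constant multiple of $L_A$ and verify consistency with the lower bound on $N_B$ required by Lemma \ref{lem:prob_lemma}. Here the hypothesis $\alpha_{comm} \in \Theta(\log L / L)$ together with the crossover condition $C_{Trott} = C_{QD}$ pins the scaling of $t/\epsilon$ (roughly $t/\epsilon \sim L \cdot (\log L)^{1/(2k-1)}$), which in turn determines the lemma's $N_B$ lower bound and lets me check its compatibility with $N_B \in \Theta(L_A)$. This consistency step is the main technical obstacle I anticipate, since one must simultaneously respect the probability-nonnegativity lower bound on $N_B$ and keep $N_B$ small enough not to dominate the gate count of Theorem \ref{thm:higher_order_cost_fixed}; the commutator hypothesis is precisely what creates room for this balance. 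Once consistency is confirmed, Theorem \ref{thm:higher_order_improvements_general} applies on the high-probability event above and yields $C_{comp} \in o(\min\set{C_{Trott}, C_{QD}})$, completing the proof.
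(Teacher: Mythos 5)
Your treatment of the first two conditions tracks the paper's proof almost step for step: the same parametrization of $N_B$ as a constant multiple of the nonnegativity lower bound giving $\chi = c\lambda/L$, the same computation $|\probIndexSet| \in \Theta(\log_2 L)$ from solving $h_j > \chi$, the same split $\expect{\lambda_B} \leq \chi|\probIndexSet| + \lambda_{\probIndexSet^C} \in \Theta(\log_2 L / L)$, and a multiplicative Chernoff bound for $L_A$. One caveat: your claim that Bernstein/Hoeffding gives $\lambda_B \leq 2\expect{\lambda_B}$ with high probability does not go through, because individual terms of $\probIndexSet$ with $h_i \gg \expect{\lambda_B}$ land in $B$ with polynomially small but nonnegligible probability (e.g.\ the term with $h_i \approx L^{-1/2}$ lands in $B$ with probability $\Theta(L^{-1/2})$ and alone exceeds $2\expect{\lambda_B} = \Theta(\log_2 L/L)$), so multiplicative concentration around the mean fails. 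What is actually needed is only $\lambda_B \in o(\lambda) = o(1)$ with high probability, and the paper gets this directly from Markov's inequality: $\prob{\lambda_B \geq x} \leq \expect{\lambda_B}/x \in \bigotilde{1/(xL)}$.

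The genuine gap is the third condition, which you explicitly defer as ``the main technical obstacle I anticipate.'' This is the load-bearing step of the theorem --- it is the only place the commutator hypothesis enters --- and your own preliminary estimate shows it does not close as you have set it up. Under the parametrization you adopted in your first paragraph, $N_B$ is not free to be ``set to a constant multiple of $L_A$''; it is already pinned to $\Theta$ of the lower bound, and the crossover condition $C_{QD} = C_{Trott}$ gives
\begin{equation*}
    N_B \in \Theta\parens{\parens{\frac{\lambda t}{\epsilon}}^{1-1/2k}} = \Theta\parens{L\, \alpha_{comm}(H,2k)^{1/2k}}.
\end{equation*}
Reading the hypothesis literally as $\alpha_{comm} \in \Theta(\log_2 L/L)$ --- which is what your estimate $t/\epsilon \sim L(\log L)^{1/(2k-1)}$ reflects --- this yields $N_B \in \Theta\parens{L^{1-1/2k}(\log_2 L)^{1/2k}}$, which is polynomially larger than $L_A \in \Theta(\log_2 L)$, so $N_B \in \Theta(L_A)$ fails and with it the hypotheses of Theorem \ref{thm:higher_order_improvements_general}. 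The paper closes this step by instead requiring $\alpha_{comm}(H,2k)^{1/2k} \in \Theta(\log_2 L/L)$ (note the exponent, which the printed theorem statement omits), under which $N_B \in \Theta(L \cdot \log_2 L / L) = \Theta(\log_2 L) = \Theta(L_A)$. You need to either carry out this computation under the corrected hypothesis or observe that the literal hypothesis is insufficient; as written, your proposal stops exactly where the proof becomes nontrivial.
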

\begin{proof}
    The outline of the proof is to first provide a parametrization of $N_B$ that simplifies the calculation, then use this to bound expectations of $L_A$ and $\lambda_B$, then finally to use tail bounds to show that the expectation values are sufficient to work with. We first resolve a conundrum involving $N_B$. As we have seen in Section \ref{sec:prob_limits}, the value of $N_B$ essentially determines the partitioning, however in Theorem \ref{thm:higher_order_improvements_general} we see that $N_B$ has to satisfy certain asymptotic requirements that depend on the partitioning! To get around this we introduce the following parametrization
    \begin{equation}
        N_B(c) = (1 + 2^{-c})^2 \parens{\frac{\lambda t}{\epsilon}}^{1 - 1/2k} \parens{\frac{2k+1}{2k + \Upsilon}}^{1/2k} \frac{2^{1-1/k}}{\Upsilon^{1/2k}}, \label{eq:nb_parametrization}
    \end{equation}
    where $c \in \Theta(1)$ is a constant and can be negative. This specific form of $(1+2^{-c})^2$ leads to many simplifications, the first of which is $\chi = 2^{-c} \lambda L^{-1}$, following the definition of $\chi$ from Eq. \eqref{eq:prob_def}. This allows us to compute the size of the sampling set of indices $|\probIndexSet|$ as we just need to determine which $j$ leads to $1 - p_j < 1$. This is shown as
\begin{align}
    1 - p_j &< 1 \\
    \frac{\chi }{h_j} &< 1 \\
    2^j &< 2^c \frac{L}{\lambda} \\
    j &< c + \log_2 {L} - \log_2{\lambda}.
\end{align}
We next compute $\lambda = \sum_i h_i = \sum_{i} 2^{-i} = 1 - 2^{-L}$ to yield a value for $\expect{L_A}$. Since $\log (1-x) \in \bigo{x}$ for $x \to 0$, it is easy to see that $|S|$ is asymptotically small compared to $L$:
$$|S| = \lfloor j \rfloor \leq c + \log_2 (L) + \bigo{2^{-L}} \in \Theta \parens{\log_2 L}. $$
This shows that $\expect{L_A}$ satisfies it's asymptotic requirements:
\begin{align}
    \expect{L_A} &= \sum_{i \in \probIndexSet} \expect{I_i} \\ 
    &= \sum_{i \in \probIndexSet} p_i \\
    &= \sum_{i \in \probIndexSet} 1 - (1-p_i) \\
    &= |\probIndexSet| - \sum_{i \in \probIndexSet} \frac{\chi}{h_i} \\
    &= |\probIndexSet| - \frac{2^{-c} \lambda}{L} \sum_{i \in \probIndexSet} 2^{-i} \\
    &\in \Theta(|\probIndexSet|) \\
    &= \Theta(\log_2 L) \subset o(L).
\end{align}
The next task is to show that $L_A$ does not deviate from $\expect{L_A}$ in an asymptotically significant way with high probability. As $L_A$ is the sum of Bernoulli random variables we can use the multiplicative Chernoff bound
\begin{align}
    \prob{|L_A - \expect{L_A}| > \delta \expect{L_A}} &\leq 2 e^{- \delta^2 \expect{L_A} / 3} \\
    & \in \Theta \parens{\frac{1}{L^{\delta^2}}}.
\end{align}
We see that this vanishes even for constant deviations $\delta \in O(1)$ of $L_A$ from $\expect{L_A}$, implying that any asymptotically significant deviations, such as $\delta \in O(L^a)$, vanish even quicker.
 
We now move on to bounding $\expect{\lambda_B}$. The first thing we need to do is provide an upper bound on $|\probIndexSet^C|$, which is done via a lower bound on $|\probIndexSet|$. For this we can use the simple bound $\lfloor x \rfloor \geq x - 1$. This allows us to compute the required sums for $\expect{\lambda_B}$:
\begin{align}
    \expect{\lambda_B} &= \sum_{i \in \probIndexSet} (1-p_i) h_i + \sum_{i \in \probIndexSet^C} h_i \\
    &\leq \chi |\probIndexSet| + \sum_{i = c + \log_2 L - \log_2 \lambda - 1}^L 2^{-i} \\
    &\leq \frac{2^{-c} \lambda}{L} \parens{c + \log_2 L - \log_2 \lambda} + 2^{1 - c - \log_2(L) + \log_2 (\lambda) + 1} - 2^{-L} \\
    &= \frac{2^{-c} \lambda}{L}(c + \log_2 L - \log_2 \lambda) + \frac{2^{2 - c} \lambda}{L} - 2^{-L} \\
    &\leq \frac{2^{-c} \lambda}{L}(4 + c + \log_2 L - \log_2 \lambda) \\
    &\in \bigo{\frac{(1 - 2^{-L})(4 + c + \log_2 L + \sum_{k=1}^{\infty} 2^{-kL}/k )}{L}} \\
    &\subseteq \bigotilde{L^{-1}} \\
    &\subseteq o(1).
\end{align}
For the tail bound necessary on $\lambda_B$ we need to show that with high probability it will be in $o(1)$. For this a straightforward application of Markov's inequality to bound the probability that $\lambda_B$ is greater than some constant suffices:
\begin{align}
    \prob{\lambda_B \geq x} &\leq \frac{\expect{\lambda_B}}{x} \\
    &\leq \frac{2^{-c} \lambda }{x L} (4 + c + \log_2 L -\log_2 \lambda) \\ 
    &\in \bigotilde{\frac{1}{x L}},
\end{align}
which can be clearly seen to approach 1 as $L \to \infty$ for any constant value $x \in \Theta(1)$. Note that Markov's inequality is applicable due to our assumption that $h_i \geq 0$ for all $i$.

The last expression we need to satisfy is $N_B \in \Theta(L_A)$. To do so we will use the bounds on $t,\epsilon$ to find a useful expression for $N_B$
\begin{align}
    C_{QD} &= C_{Trott} \\
    \frac{4 \lambda^2 t^2}{\epsilon} &= \Upsilon^{2 + 1/2k} L \frac{t^{1 + 1/2k}}{\epsilon^{1/2k}} \alpha_{comm}(H,2k)^{1/2k} (4/2k+1)^{1/2k} \\
    \parens{\frac{\lambda t}{\epsilon}}^{1-1/2k} &= \frac{\Upsilon^{2+1/2k}}{4^{1-1/2k} (2k+1)^{1/2k}} \frac{L \alpha_{comm}(H,2k)^{1/2k}}{\lambda^{1/2k}}.
\end{align}
Now we plug this in to the parametrization of $N_B$ given in Eq. \eqref{eq:nb_parametrization}
\begin{align}
    N_B &= (1 + 2^{-c})^{2} \parens{\frac{2k+1}{2k + \Upsilon}}^{1/2k} \frac{2^{1-1/k}}{\Upsilon^{1/2k}} \frac{\Upsilon^{2+1/2k}}{4^{1-1/2k} (2k+1)^{1/2k}} \frac{L \alpha_{comm}(H,2k)^{1/2k}}{\lambda^{1/2k}} \\
    &\in \Theta \parens{\frac{L \alpha_{comm}(H,2k)^{1/2k}}{\lambda^{1/2k}}}.
\end{align}
Since $\lambda \to 1$, we can drop it from the denominator in the asymptotic limit. It is then straightforward to see that if $$\alpha_{comm}(H,2k)^{1/2k} \in \Theta\parens{\frac{\log_2 L}{L}},$$ then $N_B \in \Theta(\log_2 L) = \Theta(L_A)$, meaning that $N_B$ satisfies its asymptotic requirements from Theorem \ref{thm:higher_order_improvements_general}. This completes the proof.
\end{proof}

\section{General Composite Channels}
Now that we have gone over the details of how to apply a Composite simulation for the particular case of Trotter and QDrift, let us now consider how one could generalize this exact same idea to any combination of known simulation methods.  This approach, we will see, provides a broader perspective within which LCU, multiproduct, Trotter and QDrift can be seen to be special cases of the larger framework.

\begin{claim}
Assume a Hilbert space of the form $\mathcal{H}_A \otimes \mathcal{H}_{sys}$ where $\mathcal{H}_A$ is the Hilbert space of an ancillary system.  Specifically, let $N_t$ represent the number of different individual exponentials that are called in a single step and let $W_j$ and $V_j $ be for each $j\in 1,\ldots, N_t$ be unitary operations acting on disjoint subspace $\mathcal{H}_{A,j}$ such that $\bigcup_j \mathcal{H}_{A,j} = \mathcal{H}_A$.  Next let ${\rm Sel}_j\ket{p}_j \ket{\psi}_{sys} = \ket{p}_j e^{-i H_j t} \ket{\psi}_{sys}$ select a particular time evolution based on the index stored in the $j^{\rm th}$ register then any Composite channel consisting of randomized Trotter formulas, determistic Trotter formulas, multiproduct formulas and QDrift can be thought of as special cases of the following template
$$
\capU_{comp} : \ketbra{0}{0}\otimes\rho \mapsto {\rm Tr}_{A} \left(\prod_{j=1}^{N_t} (V_j\otimes \openone) {\rm Sel}_j (W_j\otimes \openone) \rho (W_j^\dagger\otimes \openone){\rm Sel}_j^\dagger (V_j^\dagger\otimes \openone)\right)
$$
\end{claim}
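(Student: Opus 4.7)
The plan is to establish the claim constructively: for each of the four methods (deterministic Trotter, QDrift, randomized Trotter, and multiproduct/LCU), I will exhibit explicit choices of the ancilla structure $\mathcal{H}_{A,j}$, the unitaries $W_j$ and $V_j$, and the selection operators ${\rm Sel}_j$, and then verify by direct calculation that the template reproduces the channel of interest after the partial trace. Once each constituent method fits into the template, a Composite channel that interleaves them corresponds to partitioning the index set $\{1,\ldots,N_t\}$ into disjoint blocks, one per method, and instantiating the appropriate choices on each block, since the $\mathcal{H}_{A,j}$ are disjoint by assumption and the template is a product over $j$.

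First I would handle the deterministic Trotter case as a warm-up. Here $\mathcal{H}_{A,j}$ is the register holding the index $p_j$ of the term to be simulated; I take $W_j$ to be any unitary with $W_j \ket{0}_j = \ket{p_j}_j$ and $V_j = W_j^\dagger$. Then $(V_j \otimes \openone)\,{\rm Sel}_j\,(W_j \otimes \openone)\ket{0}_j\ket{\psi} = \ket{0}_j\, e^{-iH_{p_j}t}\ket{\psi}$, so the ancilla is uncomputed and the system evolves under the chosen exponential. Composing over $j$ reproduces $\prod_j e^{-iH_{p_j}t}$, which is exactly the deterministic Trotter-Suzuki product from Definition \ref{def:TS}. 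Next I handle QDrift (and more generally randomized Trotter) by letting $W_j$ be a unitary with $W_j \ket{0}_j = \sum_i \sqrt{p_i^{(j)}}\ket{i}_j$ for the QDrift importance-sampling weights $p_i^{(j)} = h_i/\lambda$, and setting $V_j = \openone$. A short calculation shows
\begin{equation}
{\rm Tr}_{A,j}\!\left[{\rm Sel}_j\bigl(W_j\ketbra{0}{0}W_j^\dagger \otimes \rho\bigr){\rm Sel}_j^\dagger\right] = \sum_i p_i^{(j)} e^{-iH_i t}\rho\, e^{+iH_i t},
\end{equation}
which is exactly one QDrift step as in Definition \ref{def:qdrift_channel}. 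Iterating over $j$ reproduces the full $N$-sample QDrift channel of Theorem \ref{thm:QDrift}.

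The main obstacle is the multiproduct (LCU) case, because here the standard block-encoding implementation of $\sum_i c_i U_i$ requires \emph{post-selection} on the ancilla returning to $\ket{0}$, whereas the template performs a partial trace and therefore natively realizes a mixture, not a coherent linear combination. My plan to handle this is to absorb the post-selection into the definition of $V_j$: take $W_j$ to be a PREP unitary with $W_j\ket{0}_j = \frac{1}{\sqrt{\sum_i|c_i|}}\sum_i \sqrt{|c_i|}\,\ket{i}_j$, ${\rm Sel}_j$ to apply the constituent $U_i$'s controlled on the prepare register, and $V_j$ to be a UNPREP followed by an oblivious-amplitude-amplification round conditioned on an additional flag qubit within $\mathcal{H}_{A,j}$, so that the reduced channel on the system coincides (up to the usual LCU error) with the target multiproduct evolution; the residual ancilla content factorizes and is harmless under the trace. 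This step requires invoking the oblivious amplitude amplification guarantee to justify that tracing, rather than post-selecting, still yields the correct channel action on $\rho$, and is the only place where the argument moves beyond a direct identity.

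Finally, to assemble a general Composite channel I would argue that because the $\mathcal{H}_{A,j}$ are disjoint by hypothesis, the choices made for different $j$'s do not interfere: we may freely assign each index $j$ to the Trotter, QDrift, randomized-Trotter, or multiproduct case and use the corresponding $(W_j, V_j, {\rm Sel}_j)$, and the resulting product of $N_t$ blocks is still of the template form. In particular, the Trotter+QDrift Composite channel studied in Sections \ref{sec:first_order_trotter} and \ref{sec:higher_order_trotter} is recovered by choosing the deterministic-Trotter instantiation for the indices corresponding to terms of $A$ and the QDrift instantiation for those corresponding to $B$, thereby situating our main construction inside the broader template.
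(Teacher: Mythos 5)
Your proposal follows essentially the same route as the paper: instantiate $(W_j, V_j, {\rm Sel}_j)$ case by case (prepare-and-uncompute for deterministic Trotter, a $\sqrt{h_i/\lambda}$-weighted preparation with $V_j=\openone$ for QDrift, a uniform superposition over permutations for randomized Trotter, and a PREP/SELECT/UNPREP block for multiproduct formulas), then compose across the disjoint ancilla registers. The only substantive difference is that you confront the trace-versus-postselection mismatch in the multiproduct case head-on by folding oblivious amplitude amplification into $V_j$, whereas the paper computes the postselected output $(\bra{0}\otimes\openone)\Lambda(\cdot)(\ket{0}\otimes\openone)$ and defers the amplitude-amplification issue to a closing remark — your treatment is, if anything, the more careful of the two.
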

\begin{proof}
First let us show that a segment of QDrift can be thought of as a special case of this template.  In this case let us take $V_j=\openone$ and $W_j$ to be a unitary such that for a Hamiltonian of the form $H_{B} = \sum_{k=0}^{L-1} h_k H_k$ with $\sum_k h_k = \lambda_B$ and $h_k\ge 0$, then
\begin{equation}
    W_j \ket{0} = \sum_{k=0}^{N-1} \sqrt{\frac{h_k}{\lambda_B}} \ket{k}.
\end{equation}
We then have from some elementary algebra that
\begin{equation}
    {\rm Tr}_{A_j}  (V_j\otimes \openone) {\rm Sel}_j (W_j\otimes \openone) \rho (W_j^\dagger\otimes \openone){\rm Sel}_j^\dagger (V_j^\dagger\otimes \openone) = \sum_{k} \frac{h_k}{\lambda_B} e^{-i \lambda_B H_k t} \rho e^{i \lambda_B H_k t} = \qdchan(H_B, t) \circ \rho.
\end{equation}
As each Hilbert space $\mathcal{A}_j$ is assumed to be disjoint, we then have that this also holds for the multi-segment case as well.  Specifically, let us assume that for some positive integer $Q$ that the channel maps
\begin{equation}
    \ketbra{0}{0}\otimes\rho \mapsto {\rm Tr}_{A} \left(\prod_{j=1}^{Q} (V_j\otimes \openone) {\rm Sel}_j (W_j\otimes \openone) \rho (W_j^\dagger\otimes \openone){\rm Sel}_j^\dagger (V_j^\dagger\otimes \openone)\right) = \qdchan(H_B, t)^{\circ Q} \circ \rho \label{eq:base}
\end{equation}
We then have using the fact that each of the $V_j$ and $W_j$ act on disjoint Hilbert spaces that
\begin{align}
    &{\rm Tr}_{A} \left(\prod_{j=1}^{Q+1} (V_j\otimes \openone) {\rm Sel}_j (W_j\otimes \openone) \rho (W_j^\dagger\otimes \openone){\rm Sel}_j^\dagger (V_j^\dagger\otimes \openone)\right) \nonumber\\
    &= {\rm Tr}_{A_{Q+1}} {\rm Tr}_{\bigcup_{q=1}^Q A_{q}} \left(\prod_{j=1}^{Q+1} (V_j\otimes \openone) {\rm Sel}_j (W_j\otimes \openone) \rho (W_j^\dagger\otimes \openone){\rm Sel}_j^\dagger (V_j^\dagger\otimes \openone)\right)\nonumber\\
    &= {\rm Tr}_{A_{Q+1}}  \Biggr((V_{Q+1}\otimes \openone) {\rm Sel}_{Q+1} (W_{Q+1}\otimes \openone)\nonumber\\
    &\qquad \times {\rm Tr}_{\bigcup_{q=1}^Q A_{q}}\left(\prod_{j=1}^{Q} (V_j\otimes \openone) {\rm Sel}_j (W_j\otimes \openone) \rho (W_j^\dagger\otimes \openone){\rm Sel}_j^\dagger (V_j^\dagger\otimes \openone)\right)\nonumber\\
    & \qquad \times (W_{Q+1}^\dagger\otimes \openone){\rm Sel}_{Q+1}^\dagger (V_{Q+1}^\dagger\otimes \openone) \Biggr)\label{eq:induction}
\end{align}
We then have from~\eqref{eq:base} and~\eqref{eq:induction} that
\begin{equation}
    {\rm Tr}_{A} \left(\prod_{j=1}^{Q+1} (V_j\otimes \openone) {\rm Sel}_j (W_j\otimes \openone) \rho (W_j^\dagger\otimes \openone){\rm Sel}_j^\dagger (V_j^\dagger\otimes \openone)\right) = \qdchan(H_B, t)^{\circ Q+1} \circ \rho,
\end{equation}
and thus the claim holds for all positive integer $Q$ by induction since we already demonstrated the $Q=1$ case.

Next we wish to show that any randomized Trotter formula can be thought of as a special case of this framework.  While we have solely focused on deterministic Trotter formulas in this paper, we generalize to randomly ordered formulas in this section, as developed in ~\cite{childs2019faster}, because the formalism required is the same as the deterministic case. We define the first-order Trotter channel given by a permutation $\sigma \in S_L$, where $S_L$ is the symmetric group on $L$ elements, as 
\begin{equation}
    \capU_{\sigma}^{(1)}(H, t) : \rho \mapsto \prod_{i = 1}^{L} e^{i H_{\sigma(i)} t} \rho \prod_{i = L}^{1} e^{-i H_{\sigma(i)}t}.
\end{equation}
The first-order Randomized Trotter-Suzuki channel is then given as $\frac{1}{L!} \sum_{\sigma \in S_L} \capU_{\sigma}^{(1)} \circ \rho$ . Our goal is to show that this is clearly encodable within the framework developed here.

We start by noting that for a single iteration a single ancilla register is sufficient. For a Hamiltonian $H_A$ we then take $V_{RTS} = \openone$, $W_{RTS} \ket{0} = \sum_{\sigma \in S_{L_A}} \frac{1}{\sqrt{L_A!}} \ket{\sigma}$, where $\ket{\sigma}$ can be an arbitrary binary encoding of an index of $S_{L_A}$. Then define the action of the Select operator as ${\rm Sel_\sigma} \parens{\ketbra{\sigma}{\sigma} \otimes \rho} {\rm Sel_{\sigma}^\dagger} = \ketbra{\sigma}{\sigma} \otimes \capU_{\sigma}^{(1)} \circ \rho $. This then gives the overall action of the channel as
\begin{equation}
    {\rm Tr}_{A} \parens{ (V_{RTS} \otimes \openone) {\rm Sel_{\sigma}} (W_{RTS} \otimes \openone) \ketbra{0}{0} \otimes \rho (W_{RTS}^\dagger \otimes \openone) {\rm Sel_{\sigma}^\dagger} (V_{RTS}^\dagger \otimes \openone)} = \frac{1}{L_A!} \sum_{\sigma \in S_{L_A}} \capU_{\sigma}^{(1)}(H_A, t) \circ \rho,
\end{equation}
which matches the first-order randomized Trotter-Suzuki formula as given in \cite{childs2019faster}. It is a straightforward process to extend this mapping to higher-order randomized formulas.

% Specifically, in this setting let us take $V_j=\openone$, $W_j \ket{0} = \sum_{k=0}^{N-1} \sqrt{{\rm Pr}(j)}\ket{j}$ for some probabilities ${\rm Pr}(j)$ but take ${\rm Sel}_j\ket{j}\ket{\psi} = \ket{j}\prod_{q=1}^K \exp(-iH_{p_{j,q}} t_{j,q})\ket{\psi}$ for a positive integer $K$ such that for any $j$ $\prod_{q=1}^K \exp(-iH_{p_{j,q}} t_{j,q})$ is a Trotter formula approximation to a Hamiltonian to $e^{-i H_A t}$.  Then under these assumptions we have from the same reasoning above that the channel maps
% \begin{equation}
% \ketbra{0}{0}\otimes \rho \mapsto \sum_{k=0}^{N-1} {\rm Pr}(j) \prod_{q=1}^K \exp(-iH_{p_{j,q}} t_{j,q})
% \end{equation}
% which is a probabilistic mixture of Trotter-Suzuki channels.  Repeating the same inductive argument given for QDrift shows that a multi-segment channel can be constructed the same way using additional ancillary registers.

The final case that fits within this framework is the multi-product formula.  The primary difference between this case and the probabilistic Trotter formula is that a quantum superposition of the Trotter formulas is used~\cite{childs2012hamiltonian,low2019well, faehrmann2021randomizing}.  
Specifically, this method finds a set of $N$ coefficients, $c_\sigma$, such that
\begin{equation}
\|\sum_{\sigma=1}^{N} c_\sigma  (\prod_{q=1}^{L_A} \exp(-i[H_A]_{q} t/k_{\sigma}))^{k_\sigma} - e^{-iH_A t}\| \in O(t^{2N+1})
\end{equation}
To implement multiproduct formulas within our formalism, we define $$W_{MPF}\ket{0} \coloneqq \sum_{\sigma =1}^{N} \sqrt{|c_\sigma|} / \sqrt{\sum_j |c_\sigma|} \ket{\sigma},$$ with $V_{MPF}^\dagger= W_{MPF}$.  The coefficients $c$ are chosen such that they are the solutions to the Vandermonde system of equations for some sequence of positive integers $k_j$~\cite{childs2012hamiltonian}
\begin{equation}
\begin{bmatrix}
1 & k_1^{-1} & k_1^{-2} &\cdots & k_1^{-N+1}\\
1 & k_2^{-1} & k_2^{-2} &\cdots & k_2^{-N+1}\\
\vdots& & &\ddots &\vdots\\
1 & k_N^{-1} & k_N^{-2} &\cdots & k_N^{-N+1}\\
\end{bmatrix}^T \begin{bmatrix}
c_1 \\c_2 \\ \vdots \\c_N
\end{bmatrix}
=
\begin{bmatrix}
1 \\0 \\\vdots \\0
\end{bmatrix}
\end{equation}
We take
${\rm Sel}_\sigma\ket{\sigma}\ket{\psi} = \ket{\sigma} U_{\rm TS}(H_A,t) \ket{\psi} = \ket{\sigma}(\prod_{q=1}^{L_A} \exp(-i[H_A]_{q} /k_\sigma t))^{k_\sigma}\ket{\psi}$. This  encapsulates the select operation needed in~\cite{wiebe2010higher,low2019well}.  The LCU Lemma~\cite{childs2012hamiltonian,berry2015simulating} then implies that this channel will map, 
\begin{equation}
    (\bra{0}\otimes \openone)\Lambda(\ketbra{0}{0}\otimes \rho)(\ket{0}\otimes \openone) = \kappa^{-2}\sum_{\sigma' =1}^N\sum_{\sigma=1}^{N} c_\sigma  (\prod_{q=1}^{L_A} \exp(-i[H_A]_{q} t/k_{\sigma}))^{k_\sigma}\rho c_{\sigma'}  (\prod_{q=1}^{L_A} \exp(-i[H_A]_{q} t/k_{\sigma}))^{-k_\sigma},
\end{equation}
where $\kappa^{-1}$ is a constant needed to block encode the formula within a larger unitary matrix.  Thus multiproduct formulas can also be considered as special cases of a Composite channel.

Finally, following the exact same arguments used above since the registers used to block encode each of these individual channels are taken to be disjoint, any composition of the channels applied in this manner forms a Composite channel as we defined it above.  This proves our claim.
\end{proof}
Note that above we do not discuss the complications of using oblivious amplitude amplification to prevent the success probability of multiproduct formulas shrinking exponentially with the number of segments~\cite{berry2015hamiltonian}.  This generalization is straightforward though and simply requires wrapping the evolution within a larger unitary transformation that resembles Grover's search.  Further, it is also clear that the exact same ideas used for multi-product formulas could be used to implement a linear combinations of unitaries approach such as~\cite{berry2015simulating} or~\cite{low2019hamiltonian}.  For simplicity, we choose our language above to focus on the case of composite Trotter-like channels but this approach also falls within our framework. \cite{Chen_2021} Qubitization is somewhat of an awkward fit within this approach~\cite{low2018hamiltonian}, as it involves a sequence of rotations as opposed to simple linear expressions, and so it remains the one major simulation method that falls outside of the generalization of the Composite channel that we present above. We hope that this provides a compact framework for future analysis of composite simulations with more involved simulation techniques.

\section{Discussion}\label{sec:discussion}
This work has presented a new framework to analyze and design quantum simulation algorithms that have a compositional character.  In particular, we examine mixing Trotter-Suzuki methods and QDrift and find evidence that the whole is greater than the sum of its parts.  The central new concept behind this work is the idea of term partitioning, both in a deterministic as well as a randomized setting, which allows us to selectively apply a simulation algorithms to the parts of the Hamiltonian that benefit from high accuracy while relegating less sensitive parts of the Hamiltonian to a low-accuracy approximation.  

We show for low order formulas that in cases where there are a small number of large terms that are (mostly) mutually commuting and a large number of smaller terms in the Hamiltonian, then a deterministic partitioning exists such that the lowest order Trotter formula combined with QDrift provides an asymptotic advantage over either model.  For higher-order Trotter formulas we provide a set of asymptotic regimes for parameters of a predetermined partition, namely the number of terms in the Trotter partition and the sum of the spectral norms for the QDrift partition, that yield asymptotic improvements over Trotter or QDrift individually. Moreover we provide a method for producing partitions probabilistically that is easy to compute and exactly matches the operator exponential cost for second-order Trotter formulas and QDrift in the relevant limits. 

To showcase how each of these results can be used to analyze a Hamiltonian we specifically examined the case where a given $H$ has exponentially decaying spectral norms ($h_i = 2^{-i}$) and a small commutator structure ($\alpha_{comm}(H; 2k) \in \Theta (\log_2 L / L)$). We show that by using the probabilistic scheme provided that we can produce partitions that satisfy the asymptotic requirements in Theorem \ref{thm:higher_order_improvements_general} with probability approaching 1 as $L \to \infty$. We note that although this example Hamiltonian is contrived for the purposes of demonstrating our techniques there are many Hamiltonians, such as electronic structure Hamiltonians, that are dominated by a handful of large terms with many more  small terms. Commutator structures are highly dependent on specific Hamiltonians but our techniques give an asymptotic scaling that leads to \emph{provable} advantage. This gives strong evidence that optimized partitions that take advantage of information about the Hamiltonian could lead to significant empirical improvements.

Looking forward, while this framework shows a way that we can think about combining several disparate Hamiltonian simulation methods, this contribution is by no means the end of this line of inquiry.  Firstly, this work was inspired to no small extent by the coalescing scheme presented in~\cite{coalescing_con_wiebe} wherein a factor of $10$ improvement was found numerically by decreasing the frequency with which low importance Hamiltonian terms are applied when looking at chemical simulations.  Our framework provides a much more general way of thinking about such partitioning and further numerical work would be useful to understand the impact that these ideas will have on systems of interest, such as chemical compounds or lattice gauge theories.

More broad approaches within our framework can also be considered specifically, multiple orders of Trotter formulas could be considered and further randomization over the ordering of terms can also be performed easily within the randomization part of our algorithm.  Incorporating further gradations in the accuracy of the underlying formula could lead to further practical improvements and pave the way for more highly optimized compilers. In systems where commutator or spectral norm information about subsets of the Hamiltonian is known these techniques could bring significant circuit depth reductions for simulations on quantum computers.

\acknowledgements{
M.H. and N.W. were supported by the US DOE National Quantum Information Science Research Centers, Co-design Center for Quantum Advantage (C2QA) under contract number DE-SC0012704.  We thank Burak Catli for useful conversations and feedback on this work.  
}

\bibliographystyle{quantum}
\bibliography{bib}

\appendix
\section{Moment Bounds for Higher-Order Formulas} \label{sec:appendix_a}
We now return to proving the moment bounds from Section \ref{sec:higher_order_improvements}. 
\laSquared*
\begin{proof}
Given that the simplest definition of our probabilities is for $1-p_i$ we will try to work with expressions for the $B$ channel as much as possible. It is easy to convert between the two as
\begin{equation}
    \expect{L_A^2} = \expect{(L - L_B)^2} = L^2 -2L\expect{L_B} + \expect{L_B^2}.
\end{equation}
The expectation value of $L_B$ then follows from plugging in the definitions
\begin{equation}
    \expect{L_B} = \sum_i\expect{I_i^B} = \sum_i 1-p_i = \chi \sum_{i \in \probIndexSet} \frac{1}{h_i} + |\probIndexSet^C|. \label{eq:subPartLAsquare}
\end{equation}

Now we find a relatively simple upper bound for $\expect{L_B^2}$ if use the two facts that $I_i^B$ and $I_j^B$ are independent for $i \neq j$ and that $\parens{I_i^B}^2 = I_i^B$
\begin{align}
    \expect{L_B^2} &= \expect{\parens{\sum_i I_j^B}^2} \\
    &= \sum_i \expect{I_i^B} + \sum_{i \neq j} \expect{I_i^B}\expect{I_j^B} \\
    &= \sum_i (1-p_i) + \parens{\sum_i (1-p_i)}^2 - \sum_i (1-p_i)^2\\
    &= \chi \sum_{i \in \probIndexSet} \frac{1}{h_i} +|\probIndexSet^C| + \parens{\chi \sum_{i \in \probIndexSet} \frac{1}{h_i} +|\probIndexSet^C|}^2 - \sum_{i \in \probIndexSet} \frac{\chi^2}{h_i^2} - |\probIndexSet^C|. \label{eq:expectLBsquare}
\end{align}
Combining equations \eqref{eq:subPartLAsquare} and \eqref{eq:expectLBsquare} we get the following expression for an upper bound on $L_A^2$
\begin{align}
    \expect{L_A^2} &= L^2 - 2L(\chi \sum_{i \in \probIndexSet}\frac{1}{h_i} + |\probIndexSet^C|) + \chi \sum_{i \in \probIndexSet}\frac{1}{h_i} +\parens{\chi \sum_{i \in \probIndexSet}\frac{1}{h_i} + |\probIndexSet^C|}^2 - \sum_{i \in \probIndexSet} \frac{\chi^2}{h_i^2} \\
    &= L^2 -2L|\probIndexSet^C| + |\probIndexSet^C|^2 +  \parens{-2L + 1 + 2 |\probIndexSet^C|}\sum_{i \in \probIndexSet}\frac{\chi}{h_i} + \parens{\sum_{i \in \probIndexSet} \frac{\chi}{h_i}}^2 - \sum_{i \in \probIndexSet} \frac{\chi^2}{h_i^2} \\
    &= \parens{L - |\probIndexSet^C|}^2 +  \parens{1 - 2 |\probIndexSet|}\sum_{i \in \probIndexSet} \frac{\chi}{h_i} + \sum_{i \neq j \in \probIndexSet^2} \frac{\chi}{h_i} \frac{\chi}{h_j} \\
    &\leq |\probIndexSet|^2 + \parens{1 - 2|\probIndexSet|} \sum_{i \in \probIndexSet}\frac{\chi}{h_i} + \parens{|\probIndexSet| - 1} \sum_{i \in \probIndexSet} \frac{\chi}{h_i} \\
    &= |\probIndexSet|^2 - |\probIndexSet| \sum_{i \in \probIndexSet} \frac{\chi}{h_i}. \label{eq:LAasymptotic}
\end{align}
Note that we only used the following inequality $\frac{\chi}{h_i} = 1 - p_i \leq 1$ for $i \in \probIndexSet$. 
\end{proof}

\qUpperBounds*
\begin{proof}
$\expect{Q(t)}$ can be computed very similarly to $L_A$ above, since $\expect{Q(t)} \propto \expect{\lambda_B^2}$. This second moment for $\lambda_B$ mostly follows from the definitions but we first will get an easier expression involving the indicator variables
\begin{align}
    \expect{\lambda_B^2} &= \expect{\parens{\sum_i h_i I_i^B}^2} \\
    &= \sum_i h_i^2\expect{ I_i^B}  + \sum_{i \neq j} h_i h_j \expect{ I_i^B I_j^B} \\
    &= \sum_i h_i^2\expect{ I_i^B}  + \sum_{i \neq j} h_i \expect{ I_i^B} h_j \expect{I_j^B} \\
    &= \sum_i h_i^2\expect{ I_i^B}  + \parens{\sum_{i} h_i \expect{ I_i^B}}^2 - \sum_j h_j^2 \expect{I_j^B}^2,
\end{align}
where we used the fact that $I_i^B$ is independent from $I_j^B$ for all $i \neq j$. Now we can utilize our probability distributions as $\expect{I_i^B} = 1-p_i$, which yields
\begin{align}
    \expect{\lambda_B^2} &= \sum_{i \in \probIndexSet} h_i \chi + \sum_{i \in \probIndexSet^C} h_i^2 + \parens{\chi |\probIndexSet| + \lambda_{\probIndexSet^C}}^2 - \sum_{j \in \probIndexSet} h_j^2 (1-p_j)^2 - \sum_{j \in \probIndexSet^C} h_j^2 (1-p_j)^2 \\
    &= \lambda_{\probIndexSet} \chi + \sum_{i \in \probIndexSet^C} h_i^2 + \chi^2 |\probIndexSet|^2 + 2 \chi |\probIndexSet| \lambda_{\probIndexSet^C} + \lambda_{\probIndexSet^C}^2 - \chi^2 |\probIndexSet| - \sum_{j \in \probIndexSet^C} h_j^2 \\
    &\leq \chi^2 |\probIndexSet|^2 + \chi \parens{\lambda_{\probIndexSet} + 2 \lambda_{\probIndexSet^C} |\probIndexSet|} + \lambda_{\probIndexSet^C}^2 \\
    &=\chi \lambda_{\probIndexSet}  + \parens{\chi |\probIndexSet| + \lambda_{\probIndexSet^C}}^2. \label{eq:lambdaBsquared}
\end{align}
The only inequality comes from dropping the correction term $\chi^2 |\probIndexSet|$, which is subleading to $\chi^2 |\probIndexSet|^2$. 

Our expression for a upper bound on $\expect{Q(t)}$ is then proportional to the above expression \eqref{eq:lambdaBsquared}
\begin{equation}
    \expect{Q(t)} \leq \frac{2t^2}{N_B} \parens{\chi \lambda_{\probIndexSet}  + \parens{\chi |\probIndexSet| + \lambda_{\probIndexSet^C}}^2}.
\end{equation}
To compute $\expect{Q(t)^2}$ we can reduce it to our prior results. Since $\expect{Q(t)^2} = \frac{4 t^4}{N_B^2} \expect{\lambda_B^4}$, we will use the following upper bound
\begin{equation}
    \expect{\lambda_B^4} = \parens{\sum_i h_i I_i^B}^4 \leq \parens{\sum_i h_i 1}^2 \parens{\sum_i h_i I_i^B}^2 = \lambda^2 \expect{\lambda_B^2}.
\end{equation}
This means we can re-use the above computation as 
\begin{equation}
    \expect{Q(t)^2} \leq \frac{2t^2 \lambda^2}{N_B}\expect{Q(t)}. \label{eq:QsquaredAsymptotic}
\end{equation}
\end{proof}

\pUpperBound*
\begin{proof}
To simplify this we will use intermediate steps from the calculation of $P_{\max}(t)$ that bound the $\alpha_{comm}$ factors, namely equations \eqref{eq:alphaCommA}, \eqref{eq:alphaCommAB} which are repeated below
\begin{align}
    \alpha_{comm}(A, 2k) &\leq 2^{2k} \lambda_A^{2k+1} \\
    \alpha_{comm}(\set{A,B},2k) &\leq 2^{2k} \sum_{l=1}^{2k} \lambda_A^{l} \lambda_B^{2k+1 - l} \\
    P(t) &=  \frac{2^2 (t \Upsilon)^{2k + 1}}{2k+1} (\Upsilon \alpha_{comm}(A, 2k) + \alpha_{comm}(\set{A,B}, 2k))
\end{align}
We will use these to compute a useful upper bound on $\expect{P(t)}$. Since our random variables are more easily described for the $I_i^B$ variables, we will convert all powers of $\lambda_A$ into functions of $L$ and $\lambda_B$ as well as upper bound both by $\lambda_A$ and $\lambda_B$ by $\lambda$. This results in an upper bound on $\expect{P(t)}$ as
\begin{equation}
    \expect{P(t)} \leq \frac{2^{2 + 2k} (t \Upsilon)^{2k + 1}}{2k+1} \expect{\Upsilon \lambda_A^{2k+1} + \sum_{l=1}^{2k} \lambda_A^l \lambda_B^{2k+1-l}}. \label{eq:expectPbad}
\end{equation}
We will simplify the expectation value using the facts that $\lambda_A, \lambda_B \leq \lambda$ and that in $\sum_{l = 1}^{2k} \lambda_A^{l} \lambda_B^{2k + 1 - l}$ each term has at least one factor of $\lambda_A \lambda_B$. This results in the following simplifications
\begin{align}
    \expect{\Upsilon \lambda_A^{2k+1} + \sum_{l=1}^{2k} \lambda_A^l \lambda_B^{2k+1-l}} &\leq \lambda^{2k-1} \expect{\Upsilon \lambda_A^2 + (2k) \lambda_A \lambda_B} \\
    &\leq \Upsilon \lambda^{2k-1} \expect{\lambda_A^2 + \lambda_A \lambda_B} \\
    &= \Upsilon \lambda^{2k-1} \expect{\lambda_A(\lambda_A + \lambda_B)} \\
    &= \Upsilon \lambda^{2k} \parens{\lambda - \expect{\lambda_B}} \\
    &= \Upsilon \lambda^{2k} \parens{\lambda_{\probIndexSet} - \chi |\probIndexSet|}, \label{eq:exactUpperP}
\end{align}
where we used an exact expression for $\expect{\lambda_B}$ that is a straightforward computation in addition to the fact that $2k \leq \Upsilon$ for $k \geq 1$. Combining the above expressions \eqref{eq:exactUpperP} and \eqref{eq:expectPbad} our final expression for an upper bound on $\expect{P(t)}$ is 
\begin{equation}
    \expect{P(t)} \leq \frac{(2 \Upsilon)^{2 + 2k}}{2k+1} t^{2k+1} \lambda^{2k} \parens{\lambda_{\probIndexSet} - \chi |\probIndexSet|}
\end{equation}
\end{proof}

\end{document}